\documentclass[11pt]{article}

\usepackage{fullpage}
\usepackage{amsmath,amsfonts,amsthm,mathrsfs,mathpazo,xspace,hyperref,graphicx}
\usepackage{endnotes}
\usepackage{color}
\usepackage{times}
\usepackage{amssymb,latexsym}
\usepackage{cite}

\usepackage[affil-it]{authblk}
\usepackage{hyperref}
\usepackage{braket}
\usepackage{esint}
\usepackage{bm}
\usepackage{graphicx}
\usepackage{caption}
\usepackage{subcaption}
\usepackage[all]{xy}
\usepackage{algorithm}

\usepackage[mathscr]{euscript}

\usepackage{tikz}
\usetikzlibrary{fit}

\usepackage{array}
\newcolumntype{C}{>{{}}c<{{}}}

\tikzset{sArrow/.style={->,>=stealth,thick}}
\tikzset{gArrow/.style={->,>=stealth,thick,gray}}
\tikzset{arrowLabel/.style={auto}}
\tikzset{blocResource/.style={draw,minimum width=4cm,minimum height=1.9cm}}
\tikzset{brnode/.style={minimum width=3.4cm,minimum height=.2cm}}
\tikzset{largeResource/.style={draw,minimum width=3.736cm,minimum height=3.25cm}}
\tikzset{medResource/.style={draw,minimum width=3.736cm,minimum height=1.75cm}}
\tikzset{thinResource/.style={draw,minimum width=1.618*2cm,minimum height=1cm}}
\tikzset{protocol/.style={draw,minimum width=1.545cm,minimum height=2.5cm}}
\tikzset{protocolLong/.style={draw,minimum height=1cm,minimum width=2.8cm}}
\tikzset{pnode/.style={minimum width=1cm,minimum height=.5cm}}
\tikzset{simulator/.style={draw,minimum width=2.8cm,minimum height=1.7cm}}

\DeclareMathAlphabet\mathbfcal{OMS}{cmsy}{b}{n}

\newtheorem{theorem}{Theorem}[section]

\newtheorem{lemma}[theorem]{Lemma}

\newtheorem{corollary}[theorem]{Corollary}

\theoremstyle{remark}

\theoremstyle{definition}
\newtheorem{definition}[theorem]{Definition}

\newcommand{\beq}{\begin{eqnarray}}
\newcommand{\eeq}{\end{eqnarray}}

\newcommand{\proj}[1]{\ket{#1}\!\bra{#1}}
\newcommand{\Tr}{\mbox{\rm Tr}}
\newcommand{\Id}{\ensuremath{\mathop{\rm Id}\nolimits}}
\newcommand{\Es}[1]{\textsc{E}_{#1}}

\newcommand{\reg}[1]{{\textsf{#1}}}

\newcommand{\C}{\ensuremath{\mathbb{C}}}
\newcommand{\N}{\ensuremath{\mathbb{N}}}

\newcommand{\R}{\ensuremath{\mathbb{R}}}
\newcommand{\Z}{\ensuremath{\mathbb{Z}}}

\newcommand{\mA}{\ensuremath{\mathcal{A}}}
\newcommand{\mE}{\ensuremath{\mathcal{E}}}
\newcommand{\mD}{\ensuremath{\mathcal{D}}}
\newcommand{\mF}{\ensuremath{\mathcal{F}}}
\newcommand{\mG}{\ensuremath{\mathcal{G}}}
\newcommand{\mK}{\ensuremath{\mathcal{K}}}
\newcommand{\mP}{\ensuremath{\mathcal{P}}}
\newcommand{\mS}{\ensuremath{\mathcal{S}}}

\newcommand{\mU}{\ensuremath{\mathcal{U}}}
\newcommand{\mX}{\ensuremath{\mathcal{X}}}
\newcommand{\mY}{\ensuremath{\mathcal{Y}}}

\newcommand{\Inv}{\ensuremath{\textsc{Inv}}}

\newcommand{\mH}{\mathcal{H}}

\newcommand{\opt}{\ensuremath{\textsc{opt}}}

\newcommand{\setft}[1]{\mathrm{#1}}
\newcommand{\Density}{\setft{D}}
\newcommand{\Pos}{\setft{Pos}}

\newcommand{\Lin}{\setft{L}}

\DeclareMathOperator{\poly}{poly}
\DeclareMathOperator{\negl}{negl}
\newcommand{\dset}{G}

\newcommand{\Gen}{\textsc{Gen}}

\newcommand{\eps}{\varepsilon}

\newcommand{\aux}{\textsc{aux}}

\newcommand{\ac}{\textsc{ac}}

\newcommand{\inj}{J}
\newcommand{\mZ}{\mathbb{Z}}

\newcommand{\sX}{\mathcal{X}}
\newcommand{\sY}{\mathcal{Y}}

\newcommand{\RSPV}{\ensuremath{\textsc{RSP}_{V}}}
\newcommand{\RSPB}{\ensuremath{\textsc{RSP}_{B}}}
\newcommand{\RSPS}{\ensuremath{\textsc{RSP}_{S}}}

\def\*#1{\mathbf{#1}}

\bibliographystyle{alpha}

%\newif\ifnotes\notestrue
\newif\ifnotes\notesfalse

% MARGIN NOTES

\newcommand{\tnote}[1]{\textcolor{magenta}{\small {\textbf{(Thomas:} #1\textbf{)
      }}}}
\newcommand{\anote}[1]{\textcolor{red}{\small {\textbf{(Anand:} #1\textbf{) }}}}
\newcommand{\gnote}[1]{\textcolor{orange}{\small {\textbf{(Andru:} #1\textbf{) }}}}
\newcommand{\ftnote}[1]{\footnote{\textcolor{magenta}{\small {\textbf{(Thomas:} #1\textbf{) }}}}}
\newcommand{\tdnote}[1]{\textcolor{blue}{\small {\textbf{(TODO:} #1\textbf{) }}}}

\ifnotes
\usepackage{color}
\definecolor{mygrey}{gray}{0.50}
\newcommand{\notename}[2]{{\textcolor{mygrey}{\footnotesize{\bf (#1:} {#2}{\bf ) }}}}
\newcommand{\noteswarning}{{\begin{center} {\Large WARNING: NOTES ON}\endnote{Warning: notes on}\end{center}}}
\newcommand{\notesendofpaper}{{\theendnotes}}
\newcommand{\pnote}[1]{{\endnote{#1}}}

\else

\newcommand{\notename}[2]{{}}
\newcommand{\noteswarning}{{}}
\newcommand{\notesendofpaper}{}
\newcommand{\pnote}[1]{}

\renewcommand{\tnote}[1]{}
\renewcommand{\anote}[1]{}
\renewcommand{\gnote}[1]{}
\renewcommand{\ftnote}[1]{}
\renewcommand{\tdnote}[1]{}

\fi

\begin{document}

\title{Computationally-secure and composable remote state preparation}

\sloppy

\author[1]{Alexandru Gheorghiu\footnote{Email: andrugh@caltech.edu}}
\author[1]{Thomas Vidick\footnote{Email: vidick@cms.caltech.edu}}
\affil[1]{Department of Computing and Mathematical Sciences, California Institute of Technology}

\date{}
\maketitle

\noteswarning

\begin{abstract}
We introduce a protocol between a classical polynomial-time verifier and a quantum polynomial-time prover that allows the verifier to securely delegate to the prover the preparation of certain single-qubit quantum states. The protocol realizes the following functionality, with computational security: the verifier chooses one of the observables $Z$, $X$, $Y$, $(X+Y)/\sqrt{2}$, $(X-Y)/\sqrt{2}$; the prover receives a uniformly random eigenstate of the observable chosen by the verifier; the verifier receives a classical description of that state. The prover is unaware of which state he received and moreover, the verifier can check with high confidence whether the preparation was successful.  

The delegated preparation of single-qubit states is an elementary building block in many quantum cryptographic protocols. We expect our implementation of  ``random remote state preparation with verification'' (\RSPV), a functionality first defined in (Dunjko and Kashefi 2014), to be useful for removing the need for quantum communication in such protocols while keeping functionality. 

The main application that we detail is to a protocol for blind and verifiable delegated quantum computation (DQC) that builds on the work of (Fitzsimons and Kashefi 2018), who provided such a protocol with quantum communication. 
Recently, both blind an verifiable DQC were shown to be possible, under computational assumptions, with a classical polynomial-time client (Mahadev 2017, Mahadev 2018). 
Compared to the work of Mahadev, our protocol is more modular, applies to the measurement-based model of computation (instead of the Hamiltonian model) and is composable.
Our proof of security builds on ideas introduced in (Brakerski et al.\ 2018).
\end{abstract}

\newpage
\tableofcontents
\newpage

\section{Introduction}
\label{sec:intro}
In the problem of delegated computation a user (often referred to as \emph{client} or \emph{verifier}) is provided as input a pair $(C,x)$ of a circuit $C$ and an input $x$ for the circuit. The verifier's task is to evaluate $C(x)$ as efficiently as possible. For this the verifier may delegate some or all of the computation to a powerful but untrusted server (often referred to as the \emph{prover}). 
Let $n$ be the length of $x$ and $T$ the size of the circuit $C$. Ideally, the runtime of the verifier is (quasi-)linear in $n$ and poly-logarithmic in $T$, while the runtime of the prover is quasi-linear in $T$. (Reducing space usage, for both the verifier and the prover, is also of interest, but for simplicity we focus on time.)

A productive line of research in complexity and cryptography has led to protocols for delegated computation with increasing efficiency and whose soundness can be \emph{information-theoretic} \cite{goldwasser2015delegating} or based on \emph{cryptographic assumptions}~\cite{kilian1992note,kalai2014delegate}. The latter type include protocols utilizing public-key cryptography and making standard cryptographic assumptions, such as~\cite{holmgren2018delegating}, as well as non-interactive protocols based on more non-standard assumptions, such as~\cite{gennaro2013quadratic}.
In addition to the natural applications in cloud and distributed computing, research in delegated computation is motivated by cryptographic applications (such as short zero-knowledge proofs~\cite{groth2010short,ben2013snarks}) and connections to complexity theory (such as the theory of multiprover interactive proof systems~\cite{kalai2014delegate} and probabilistically checkable proofs~\cite{goldwasser2015delegating}). 

In this paper we are concerned with the problem of \emph{delegating quantum computations} (DQC). Here the verifier is provided as input the classical description of a quantum circuit $C$, as well as a classical input $x$ for the circuit, and its goal is to obtain the result of a measurement of the output qubit of $C$ in the computational basis, when it is executed on $x$.\footnote{For simplicity we restrict to circuits that take classical inputs and return a single classical output bit obtained as the result of a measurement that is promised to return a particular value, $0$ or $1$, with probability at least $\frac{2}{3}$. This setting corresponds to delegating \emph{decision problems}, i.e. problems in which the output is a single bit. Our results also apply to the setting of \emph{relational} or \emph{sampling} problems for which the output consists of multiple bits.} In this context the main question is the following: What security guarantees can DQC protocols achieve, and at what cost?

To gain an understanding of the current landscape around this question we briefly discuss the most relevant known results, referring to~\cite{gheorghiu2018verification} for a more extensive treatment. First we note that DQC protocols come with two related but seemingly independent types of security guarantee: \emph{blindness} and \emph{verifiability}. 
A DQC protocol is said to be blind if throughout the interaction the prover does not learn anything about the delegated computation except for an upper bound on its size.
A DQC protocol is said to be verifiable if it is unlikely for the prover to succeed in convincing the verifier to accept a false statement.
The question of {blind} delegation of quantum computation was first considered by Childs~\cite{childs2001saq}, who gave such a protocol with quantum communication.
{Verifiable} delegation of quantum computation was formalized in~\cite{abe2008,broadbent2009universal} (see also~\cite{abem,fitzsimons2017unconditionally}); the authors gave protocols for verifiable DQC, and just like Childs' protocol, these protocols also require quantum communication.

Next we consider the question of efficiency of DQC protocols, focusing on the amount of quantum communication required as a measure of the verifier's ``quantum effort''. A first class of protocols, such as those from~\cite{abe2008,broadbent2009universal}, are known as \emph{prepare-and-send} protocols. This is because the verifier is required to prepare a number of small quantum states and send them to the prover. In~\cite{abe2008} the size of these quantum states (i.e.\ the number of qubits) depends on the protocol's soundness (the probability that the verifier accepts an incorrect outcome). In~\cite{fitzsimons2017unconditionally} the verifier is only required to prepare a number of single-qubit states that depends on the protocol's soundness. A second class of protocols is \emph{receive-and-measure} protocols such as~\cite{hayashi2015verifiable,fitzsimons2018post}, in which the verifier receives single qubits from the prover and is required to measure them in one of a small number of possible bases. The protocol that requires the least quantum capability from the verifier is the one from~\cite{fitzsimons2018post}; in their protocol, the verifier only needs to measure the single qubits it receives one at a time in one of two bases, computational and Hadamard.
The most communication-efficient protocols fall in the prepare-and-send category and require a total amount of quantum communication that scales as $O(T\log(1/\delta))$ where $\delta$ is the soundness error~\cite{kashefi2017optimised}; the most efficient protocols in the second category have a cubic dependence on $T$. 

All the aforementioned protocols provide information-theoretic security (for either blindness or verifiability), and all require some limited but nonzero quantum capability for the verifier. In a recent breakthrough Mahadev introduced the first entirely classical protocol for DQC~\cite{mahadev2018classical}. The protocol operates in the \emph{Hamiltonian model} of quantum computation, in which instead of directly performing the computation $C$ the prover encodes the outcome of $C$ in the smallest eigenvalue of a local Hamiltonian $H_C$.\footnote{If the circuit returns $0$ with probability at least $\frac{2}{3}$, the smallest eigenvalue is smaller than a threshold $a$, and if it returns $0$ with probability less than $\frac{1}{3}$, the smallest eigenvalue is larger than a threshold $b>a$ (this is generally referred to as the ``Kitaev circuit-to-Hamiltonian construction''~\cite{kitaev2002classical}).} The goal of the protocol is for the prover to provide evidence that it has prepared an eigenstate $\ket{\psi}$ of $H_C$ with associated eigenvalue strictly smaller than $a$. At the heart of Mahadev's result is a commitment procedure that allows the prover to commit to individual qubits of $\ket{\psi}$, and subsequently reveal a measurement outcome for a basis of the verifier's choice, {using classical communication alone}. 

The fact that the verifier in Mahadev's protocol is entirely classical marks a major departure from previous works, yet it comes at a cost in terms of security and efficiency. 
The security of the protocol is computational and rests on the post-quantum security of the learning with errors problem (LWE); moreover, the protocol is not blind, as the circuit has to be communicated to the prover so that it can determine $H_C$ and prepare an eigenstate.\footnote{The protocol can in principle be made blind by combining it with a scheme for \emph{quantum homomorphic encryption}~\cite{mahadev2018classicalQHE} but this introduces yet another layer of complexity.}
In terms of efficiency, the transformation from circuit to Hamiltonian results in an eigenvalue estimation problem that needs to be solved with accuracy at least $b-a = O(1/T^2)$ for the best constructions known~\cite{Bausch2018analysislimitations}. As a result the prover has to prepare $\Omega(T^2)$ copies of the ground state, which implies that at least $\Omega(nT^2)$ single qubits have to be sent by the prover. Moreover, preparation of a smallest eigenvalue eigenstate $\ket{\psi}$ of $H_C$ requires a circuit whose depth scales linearly with $T$, rather than with the depth of $C$. This induces a large overhead on the prover's side when the circuit $C$ has low depth but high width\footnote{Such circuits are highly parallelizable and one might hope for the complexity of delegating one to scale with depth rather than with total circuit size.}. 

Finally, and arguably most importantly, the protocol is monolithic and not obviously composable: while it solves the desired task of verification of quantum computation, it is not at first clear how or even if the protocol can be simplified to solve more elementary problems (e.g.\ verifying the preparation of a single qubit state or verifying the application of an elementary quantum operation) or combined with other cryptographic primitives (e.g.\ to remove or reduce quantum communication in a larger protocol). 

Our work is motivated by the following question: does there exist a delegation protocol for quantum computation that combines the appealing feature of having an entirely classical verifier while maintaining the relative efficiency (small polynomial overhead), simplicity (prover's computation is as close as possible to direct computation of delegated circuit), and security guarantees (verifiability, blindness, composability) of protocols with quantum communication?

\subsection{Our results}
We answer the question in the affirmative by providing an efficient, composable classical protocol for blind and verifiable DQC. The honest prover in our protocol only needs to implement the desired computation, expressed as a computation in the measurement-based model of computation, together with a sequential pre-processing phase consisting of a number of rounds that depends on the circuit size but such that the complexity of implementing each round scales only with the security parameter. The protocol combines the benefits of the best prepare-and-send quantum-verifier protocols for DQC but requires only classical communication; the downside is that our protocol is computationally sound.

Our DQC protocol is based on a basic quantum functionality that we develop and that we believe has wider applicability than the specific application to DQC. 
More precisely, we provide a computationally sound and composable protocol for the following two-party task, termed \emph{random remote state preparation} (RSP): Alice (whom we will later identify with the verifier) receives either a uniformly random bit $b \in \{0, 1\}$ or a uniformly random value $\theta\in\Theta = \{0,\frac{\pi}{4},\ldots,\frac{7\pi}{4}\}$ and Bob (whom we will later identify with the prover) receives the single-qubit state $\ket{b}$, in the case when Alice gets $b$, or the state $\ket{+_\theta}=\frac{1}{\sqrt{2}}(\ket{0}+e^{i\theta}\ket{1})$, in the case when Alice gets $\theta$. Informally, this amounts to Alice having the ability to ``steer'' a random state $\ket{+_\theta}$ (or $\ket{b}$) within Bob's workspace, using classical communication only, and such that Bob does not learn the value of $\theta$ (or $b$, respectively). (The actual functionality is slightly more complicated; see Section~\ref{sec:intro-rsp} and Figure~\ref{fig:rspv}.) 

The idea for RSP was introduced by Dunjko and Kashefi~\cite{dunjko2016blind}. The main functionality they consider is a weaker variant of RSP termed \emph{random remote state preparation with blindness}, or \RSPB.  Intuitively, the latter functionality ensures that Bob learns no information about $\theta$,
but it allows him to receive a state that is different from $\ket{+_\theta}$. The authors show that \RSPB\ (and variants of it) can be composed with a prepare-and-send protocol due to~\cite{broadbent2009universal} to achieve blind (but not verifiable) delegated computation. In~\cite{cojocaru2018delegated} a candidate implementation of \RSPB\ is given and shown secure against a limited class of adversaries referred to as  ``honest-but-curious'' adversaries. The authors of~\cite{dunjko2016blind}
 also discuss a stronger form of their primitive, called \RSPS\ (for \emph{strong}), and observe that it can be used to achieve blind and verifiable DQC by composing it with the protocol of~\cite{fitzsimons2017unconditionally}. 
The authors do not, however, provide any instantiation of \RSPS\ (other than the trivial one, using quantum communication). 

Our main contribution is to define an ideal functionality, denoted \RSPV\ (random remote state preparation with verification),\footnote{It is not hard to verify that \RSPV\ is functionally equivalent to \RSPS, in the sense that either functionality can be used to implement the other using a simple protocol. Since the definitions are syntactically different, we use a different name to avoid confusion.} and show that it can be implemented using a protocol having computational security and classical communication (see Theorem~\ref{thm:brsp.rspv} for a formal statement and Section~\ref{sec:intro-rsp} for a definition of \RSPV):

\begin{theorem}[Informal]
Assuming the learning with errors problem is computationally intractable for efficient quantum algorithms, there exists a protocol with classical communication that implements the functionality \RSPV\ within distance $\varepsilon > 0$ and which has $O(1/\eps^3)$ communication complexity. 
\end{theorem}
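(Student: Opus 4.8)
The plan is to build the protocol on the \emph{extended noisy trapdoor claw-free function} (ENTCF) families that can be constructed from the post-quantum hardness of \lwe, following the ideas of Brakerski et al. The starting point is a single-round \emph{committed-qubit} subprotocol: Alice samples a function key together with a trapdoor and sends only the key to Bob; Bob evaluates the associated two-to-one function in superposition, measures the image register to obtain a value $y$, and is left with a superposition $\frac{1}{\sqrt{2}}(\ket{0}\ket{x_0}+\ket{1}\ket{x_1})$ over the two preimages forming the claw $(x_0,x_1)$. Measuring the bulk of the preimage register in a basis that Alice later specifies collapses this to a single residual qubit whose classical description is determined by $(x_0,x_1)$ and by the reported measurement outcome---information that only Alice can reconstruct using her trapdoor. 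A suitable choice of measurement pattern lets Alice steer Bob's remaining qubit to any of the target eigenstates $\ket{b}$ or $\ket{+_\theta}$ for $\theta\in\Theta$, while the claw-free property guarantees that Bob cannot distinguish which of these states he holds, giving the blindness that \RSPV\ demands.

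First I would establish \textbf{completeness}: an honest Bob who follows the prescribed evaluate--measure--report procedure produces exactly the single-qubit state dictated by the ideal \RSPV\ functionality, and passes every verification check with probability $1-\negl$, the negligible slack accounting for the approximate (``noisy'') nature of the \lwe-based functions. This reduces to tracing the honest prover's state through one round and matching the induced distribution against the functionality's output over the observables $\{Z,X,Y,(X\pm Y)/\sqrt{2}\}$.

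The core of the argument is \textbf{composable security}, which requires exhibiting an efficient simulator in the ideal world and proving that the real execution is $\eps$-indistinguishable from the ideal one against any efficient quantum distinguisher, under \lwe. Here I would interleave two kinds of rounds that a prover cannot tell apart: \emph{preparation} rounds using a claw-free key, and \emph{test} rounds using an injective key from the same family, from which Alice can read out the committed value. The computational indistinguishability of the two key modes (a consequence of \lwe) forces a cheating Bob to behave consistently across both, and the adaptive hardcore-bit property of the ENTCF family then yields a \emph{computational rigidity} statement: conditioned on passing the test rounds with high probability, there is a local isometry under which Bob's internal state is close to the ideal target state tensored with residual ancilla. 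The simulator uses precisely this isometry, running the prover internally and extracting the bit or angle to feed to the ideal functionality.

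The hard part is the rigidity step. Unlike information-theoretic self-testing, the guarantee holds only against computationally bounded provers and must be made robust: I would need to quantify how a near-optimal pass probability $1-\delta$ translates into a trace-distance bound between Bob's reduced state and the ideal eigenstate, and this conversion is where the polynomial overhead originates. A single test round yields only an inverse-polynomial soundness gap, so the basic test must be amplified by repetition, and turning an estimated pass probability into a state-closeness guarantee via a gentle-measurement argument loses further powers of the error. Composing the self-testing distance bound with the statistical cost of estimating the pass probability to the required accuracy gives the claimed $O(1/\eps^3)$ rounds, each costing $\poly$ in the security parameter but only a constant number of classical messages. Finally I would check that these pieces assemble into an $\eps$ bound in the relevant composable (simulation-based) distance, closing the reduction to \lwe.
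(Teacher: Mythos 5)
Your overall architecture matches the paper's: ENTCF families from \lwe, the claw state $\frac{1}{\sqrt{2}}(\ket{0}\ket{x_0}+\ket{1}\ket{x_1})$, interleaved claw-free and injective keys exploiting injective invariance, a computational rigidity statement, a simulator built from the extracted isometry, and an $N=O(1/\eps^3)$ repetition count. However, there are two genuine gaps. The first and most important is in the rigidity step. The adaptive hardcore bit property (even the strengthened $\Z_8$ version the paper needs, obtained by having the prover measure in the Fourier basis over $\Z_8$ rather than over $\Z_2$) only constrains the relationship between the $Z$ observable and each $X_\theta$ observable: it yields approximate anticommutation $\{Z,X_\theta\}\approx 0$ on the relevant states. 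It says nothing about the relationship between the different $X_\theta$'s among themselves. A prover could, for instance, use measurements under which it effectively holds $\ket{+_{3\theta}}$ instead of $\ket{+_\theta}$; this is consistent with every hardcore-bit-based test, is enough for blindness (\RSPB), but breaks verifiability (\RSPV). The paper closes this gap with a new ingredient your proposal is missing: a test forcing the prover's states and measurements to realize a near-optimal $2\mapsto 1$ quantum random access code, together with a robust rigidity statement for near-optimal QRACs (extending Ambainis et al.) showing that $X_0$ and $X_2$ must approximately anticommute, hence behave like $\sigma_X$ and $\sigma_Y$ under a common isometry. Without some such additional constraint tying the $X_\theta$'s together, the claim that ``the adaptive hardcore-bit property then yields a computational rigidity statement'' for all eight angles does not go through.

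The second gap concerns composability. Your simulator ``runs the prover internally'' and applies the extracted isometry, which suffices for stand-alone security, but in the AC framework a malicious Bob may delegate his measurements to the environment, in which case the simulator has no access to the measurement operators from which the isometry is computed. The paper resolves this by building the real protocol from an additional \emph{measurement buffer} resource, through which Bob must supply explicit circuit descriptions of his measurements for the final round; the simulator then reads these off the buffer interface. If you want the composable statement (which is what ``implements the functionality \RSPV'' means here), you need either this buffer or an equivalent mechanism, and the claim should be stated relative to that resource rather than to a bare classical channel.
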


\noindent Here, by ``implements within distance $\eps$'', we mean that any efficient quantum circuit has advantage at most $\eps$ in distinguishing the real remote state preparation protocol from the ideal functionality \RSPV. 

 To show this result we introduce a protocol for remote state preparation and show that it is secure based on the learning with errors problem.
This is achieved by building on ideas from~\cite{brakerski2018cryptographic,mahadev2018classical} as well as from the literature on rigidity, self-testing, and quantum random access codes. Since this is our main technical contribution, we explain the protocol in more detail in Section~\ref{sec:intro-protocol} below. 

We view \RSPV\ as a fundamental resource for the construction of interactive protocols that involve classical communication between classical and quantum parties. For our result to be as widely applicable as possible, we establish security of our protocol in the \emph{abstract cryptography} (AC) framework~\cite{maurer2011abstract}. This allows one to use the primitive as a building block in other protocols. %HERE 

As a specific example of the versatility of RSP we obtain a new protocol for DQC that only requires classical communication.
The most natural protocol to which our construction applies is the delegated computation protocol from~\cite{fitzsimons2017unconditionally}. As already observed in~\cite{dunjko2016blind}, having a remote state preparation functionality immediately yields a blind and verifiable protocol for DQC with classical communication and computational soundness (we explain this in more detail in Section~\ref{sec:intro-del}). The resulting protocol is more ``direct'' than the Mahadev protocol, in the sense that in our construction the operations that the prover has to perform are closer to the quantum computation that the verifier is delegating. (The protocol from~\cite{fitzsimons2017unconditionally} operates in the {measurement-based quantum computing model},\footnote{It should be noted that the translation from the circuit model to MBQC incurs only a linear increase in overhead and this is also true for the protocol from~\cite{fitzsimons2017unconditionally}, as explained in~\cite{kashefi2017optimised}.} but we expect that protocols in the circuit model such as~\cite{broadbent2018verify} can also be implemented from \RSPV; see Section~\ref{sec:intro-del} for a discussion.)

If one assumes that \RSPV\ can be implemented at unit cost then the protocol we obtain is also more efficient than Mahadev's: for fixed soundness error, $\delta$, the number of operations performed by the prover scales linearly in the size of the delegated circuit and polynomially in the security parameter of the protocol. Unfortunately, our current version of RSP does not have unit cost. Furthemore, the number of uses of RSP required is linear in the circuit size, $T$. This implies that each use must be implemented with error $O(\delta/T)$. With our current analysis, assuming we take $\delta$ to be a constant, this results in a total communication that scales as $O(T^4)$ (see Section~\ref{sec:dqc} for a more fine-grained analysis). This is not as good as the quasi-linear complexity of prepare-and-measure protocols that use quantum communication. It is important to note, however, that the added overhead of the protocol stems from \RSPV. Thus, any improvement in the complexity of doing the state preparation will lead to an improvement in the complexity of the resulting DQC protocol. We believe reducing the overhead of \RSPV\ is possible and mention a potential way of achieving this in Section~\ref{sec:intro-del} below. We also note that our protocol consists of a sequence of simple tests that play a similar role to the Bell test in multi-prover entanglement-based protocols for DQC~\cite{reichardt2013classical}. The protocol allows for a constant fraction of failed tests, so that a partially faulty device may in principle be used to implement the protocol successfully.

Before proceeding with more details of our approach it may be useful to briefly address the following question: can one use the protocol from~\cite{mahadev2018classical} directly to implement RSP? Specifically, couldn't one enforce that the prover prepares a small-eigenvalue eigenstate of the Hamiltonian $H_\theta = -\proj{+_\theta}$? In fact it is not at all straightforward to do this.
The reasons are related to aspects of the Mahadev protocol discussed earlier. First, the committment procedure results in a state that can be measured in one of two possible bases, but it is not clear if any other form of computation besides a direct measurement can be performed on the committed qubit. Second, the guarantee provided is only that the state ``exists'' (i.e.\ the Hamiltonian has a small-eigenvalue eigenstate), but not that the state has actually been prepared by the prover. Finally, the information that the prover may have about the state it prepared is not explicitly limited (in the protocol from~\cite{mahadev2018classical} the prover learns a classical description of the Hamiltonian, hence, in this case, the value of $\theta$); forcing the prover to prepare an unknown state may require adding an additional layer of (quantum) homomorphic encryption to the protocol.

\subsection{Remote state preparation: ideal resource}
\label{sec:intro-rsp}

We formulate our variant of RSP as a \emph{resource} in the abstract cryptography framework~\cite{maurer2011abstract}. 
Abstract cryptography (AC), similar to universal composability (UC)~\cite{canetti2001universally}, is a framework for proving the security of cryptographic protocols in a way that ensures that the protocols can be securely composed in arbitrary ways.
Informally, the idea is to argue that a given protocol, which we refer to as the \emph{real protocol}, is indistinguishable from an \emph{ideal functionality} (or \emph{resource}) that captures precisely what honest or dishonest parties should be able to achieve in the protocol. 
This involves proving two things: \emph{correctness}, meaning that any efficient family of circuits (known as a \emph{distinguisher}) that interacts either with an honest run of the real protocol or with the ideal functionality has a negligible advantage in deciding which it is interacting with; \emph{security}, meaning that any attack that a malicious party could perform in the real protocol can be mapped to an attack on the ideal functionality. This latter property is formalized by saying that there exists an efficient family of (quantum) circuits, known as a \emph{simulator}, such that any distinguisher interacting with the ideal functionality and the simulator, or with the real protocol involving only the honest parties, has negligible advantage in deciding which it is interacting with.
Showing that such a simulator exists is usually the main difficulty in proving security in AC.
Since the existing results on the composability of DQC protocols are expressed in the AC framework, we also present our results in AC.
For more details on the framework we refer to Section~\ref{subsect:composable} and~\cite{maurer2011abstract, dunjko2014composable}. For the purposes of this introduction we assume basic familiarity with the framework.

We denote our variant of the ideal RSP by \RSPV, for \emph{random Remote State Preparation with Verification}. The name is chosen in direct analogy to the resource \RSPB\ of \emph{random Remote State Preparation with Blindness} introduced in~\cite{dunjko2016blind}. The resource \RSPV\ is represented schematically in Figure~\ref{fig:rspv}. In the resource, Alice inputs a bit $W\in\{X,Z\}$ that denotes a measurement basis, computational ($W=Z$) or Hadamard ($W=X$). Bob inputs a bit $c\in\{0,1\}$ that denotes honest ($c=0$) or malicious ($c=1$) behavior. If $c=0$ then in the case when $W=Z$ Alice receives a uniformly random bit $b\in\{0,1\}$ and Bob receives the state $\ket{b}$; in the case when $W=X$ Alice receives a uniformly random value $\theta \in \Theta = \{0,\frac{\pi}{4},\ldots,\frac{7\pi}{4}\}$ and Bob receives the state $\ket{+_\theta}$. If $c=1$ both Alice and Bob receive an $ERR$ message, indicating abort.

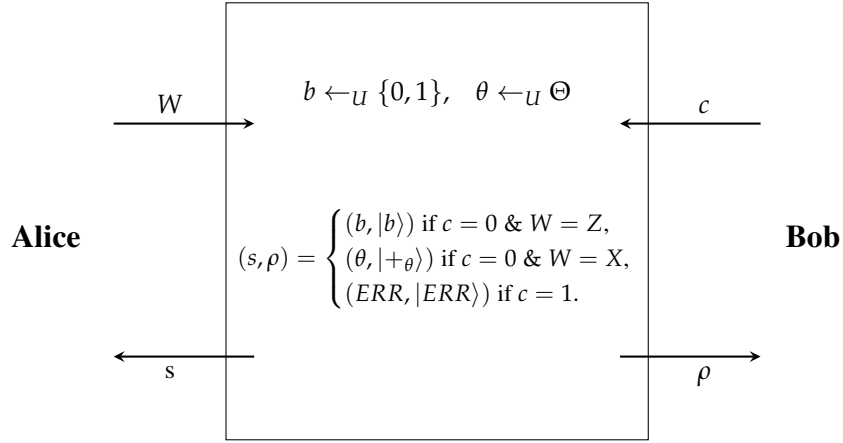
\begin{figure}[htb]
\begin{centering}

\begin{tikzpicture}[opnode/.style={minimum width=3.05cm,minimum height=.6cm}]
\small

\def\t{4.3}
\def\u{0.9}
\def\v{1}

\node[opnode] (a1) at (-\u,0) {};
\node[opnode] (a2) at (-\u,-1*\v) {};
\node[opnode] (a3) at (-\u,-2.2*\v) {};
\node[opnode] (a4) at (-\u,-3.1*\v) {};
\node (alice1) at (-\t,0) {};
\node (alice2) at (-\t,-1*\v) {};
\node (alice3) at (-\t,-2.2*\v) {};
\node (alice4) at (-\t,-3.1*\v) {};
\node (alice) at (-\t-.9,-1.5*\v) {\large \textbf{Alice}};

\node[opnode] (b1) at (\u,0 * \v) {};
\node[opnode] (b2) at (\u,-0.5 * \v) {};
\node[opnode] (b3) at (\u,-1*\v) {};
\node[opnode] (b4) at (\u,-2.2*\v) {};
\node[opnode] (b5) at (\u,-3.1*\v) {};
\node (bob1) at (\t,0*\v) {};
\node (bob2) at (\t,-0.5 * \v) {};
\node (bob3) at (\t,-1*\v) {};
\node (bob4) at (\t,-2.2*\v) {};
\node (bob5) at (\t,-3.1*\v) {};
\node (bob) at (\t+.7,-1.5*\v) {\large \textbf{Bob}};

\node[opnode] (c1) at (0,0.4) {$b \leftarrow_U  \{0, 1\}, \;\;\; \theta \leftarrow_U  \Theta$};
\node[opnode] (c2) at (0, 0.0) {};
\node[opnode] (c3) at (0,-1.2*\v) {};
\node[opnode] (c4) at (0,-1.8*\v) {\footnotesize $\;\;\; (s, \rho) = \begin{cases}
    (b, \ket{b}) \text{ if $c=0 \; \& \; W=Z$,} \\ (\theta, \ket{+_{\theta}}) \text{ if $c=0 \; \& \; W=X$,} \\
    (ERR, \ket{ERR}) \text{ if $c=1$}.
      \end{cases}$};
\node[opnode] (c5) at (0,-2.7*\v) {};
\node[opnode] (c6) at (0,-3.0*\v) {};
\node[draw,minimum width=4.9cm,inner sep=.9cm,fit=(c1)(c6)] (c) {};
%\node[yshift=-2,above right] at (c.north west) {$\mathcal{R}$};

\draw[sArrow] (alice1.center) to node[auto,pos=0.4] {$W$} (a1);
%\draw[sArrow] (a2) to node[auto,pos=0.4] {$b$} (alice2.center);
%\draw[sArrow] (a3) to node[auto,pos=0.4] {$\theta$} (alice3.center);
\draw[sArrow] (a4) to node[auto,pos=0.6] {s} (alice4.center);

\draw[sArrow] (bob1.center) to node[auto,pos=.4,swap] {$c$} (b1);
%\draw[sArrow] (b3) to node[auto,pos=.4,swap] {$\ket{b}$} (bob3.center);
%\draw[sArrow] (b4) to node[auto,pos=.4,swap] {$\ket{+_{\theta}}$} (bob4.center);
\draw[sArrow] (b5) to node[auto,pos=.6,swap] {$\rho$} (bob5.center);

\end{tikzpicture}

\end{centering}
\caption{\label{fig:rspv} The resource \RSPV. It chooses $b$ uniformly at random from $\{0, 1\}$ and $\theta$ uniformly at random from $\Theta = \{0,\frac{\pi}{4},\ldots,\frac{7\pi}{4}\}$. It takes $W\in\{X,Z\}$ as input from Alice and $c\in\{0,1\}$ as input from Bob. When $c=0$ it outputs either $b$ to Alice and $\ket{b}$ to Bob, if $W=Z$; or $\theta$ to Alice and $\ket{+_{\theta}}$ to Bob, if $W=X$. When $c=1$ it outputs $ERR$ to Alice and $\ket{ERR}$ to Bob.}
\end{figure}

Note that the resource \RSPV\ can almost be understood as a communication channel from Alice to Bob that would allow Alice to select one of $10$ possible single-qubit states $\ket{0},\ket{1}$, or $\ket{+_\theta}$ for $\theta\in \Theta$ and send it to Bob. There are two differences: first, Alice does not choose the state, but instead the functionality chooses it uniformly at random and tells Alice what it is. Second, Bob may decide to block the channel, in which case both parties receive an error message.  This in contrast with the weaker resource of \RSPB, also introduced in~\cite{dunjko2016blind} and for which~\cite{cojocaru2018delegated} give a real protocol with security against ``honest-but-curious'' adversaries, in which Bob is allowed to select the family of states $\{\rho_\theta\}$ that it receives (by explicitly specifying them to the resource).\footnote{The $\rho_\theta$ should satisfy the consistency condition shown in~\eqref{eq:rho_consistency}, which says that it is possible to generate the state $\rho_\theta$ by performing a $\theta$-dependent measurement on a fixed state $\rho$; we refer to~\cite{dunjko2016blind} for details.} The resource \RSPV\ allows less flexibility to a dishonest user, making it more useful as a building block. In particular, the rigidity of Bob's output state is essential to obtain a protocol that is verifiable.

\subsection{Remote state preparation: real protocol}
\label{sec:intro-protocol}

In the previous section we defined the ideal functionality for remote state preparation with verifiability, \RSPV. In this section we describe a protocol that we prove is computationally indistinguishable from the ideal functionality. The protocol builds on ideas from~\cite{brakerski2018cryptographic} and~\cite{mahadev2018classical}. The main difficulty in the implementation of \RSPV\ is to obtain \emph{verifiability}, i.e.\ the guarantee that an \emph{arbitrary} (computationally bounded) prover successfully interacting with the verifier \emph{must} have prepared locally the correct state, and yet have obtained no more information (computationally) about the state itself than could be gained had the state been sent directly by the verifier (or the ideal resource). To achieve this we significantly strengthen the rigidity argument from~\cite{brakerski2018cryptographic} by giving more control, and freedom, to the verifier in the kinds of states that are prepared. 

In the real protocol, that we call the \emph{buffered remote state preparation protocol} (BRSP), Alice and Bob interact through two communication resources: a classical channel as well as a \emph{measurement buffer}. The measurement buffer takes as input a classical message $M$ from Alice, and from Bob a specification (as a quantum circuit) of a measurement for each of the possible messages of Alice, as well as a state on which the measurement is to be performed (as a quantum state). The buffer then performs the measurement associated with Alice's message, forwards the outcome to Alice, and returns the post-measurement state to Bob. 

\begin{figure}[htb]
\begin{centering}

\begin{tikzpicture}[opnode/.style={minimum width=1.05cm,minimum height=.4cm}]
\small

\def\t{3.5}%2.6+.5+.25+.75
\def\u{0.8}
\def\v{1}

\node[opnode] (a1) at (-\u,0) {};
\node[opnode] (a2) at (-\u,-1*\v) {};
\node[opnode] (a3) at (-\u,-2.4*\v) {};
\node (alice1) at (-\t,0) {};
\node (alice2) at (-\t,-1*\v) {};
\node (alice3) at (-\t,-2.4*\v) {};
\node (alice) at (-\t-.9,-0.5*\v) {\large \textbf{Alice}};

\node[opnode] (b1) at (\u,0 * \v) {};
\node[opnode] (b2) at (\u,-0.5 * \v) {};
\node[opnode] (b3) at (\u,-1*\v) {};
\node (bob1) at (\t,0*\v) {};
\node (bob2) at (\t,-0.5 * \v) {};
\node (bob3) at (\t,-1*\v) {};
\node (bob) at (\t+.7,-0.5*\v) {\large \textbf{Bob}};

\node[opnode] (c1) at (0,0) {};
\node[opnode] (c2) at (0,-0.6*\v) {$r \leftarrow \mathcal{F}(M)(\rho)$};
\node[opnode] (c3) at (0,-1.2*\v) { };
\node[draw,minimum width=3.2cm,inner sep=.25cm,fit=(c1)(c3)] (c) {};
\node[yshift=-2,above right] at (c.north west) {};

\draw[sArrow] (alice1.center) to node[auto,pos=0.4] {$M$} (a1);
\draw[sArrow] (a2) to node[auto,pos=0.4] {$r$} (alice2.center);

\draw[sArrow] (bob1.center) to node[auto,pos=.4,swap] {$\mathcal{F}$} (b1);
\draw[sArrow] (bob2.center) to node[auto,pos=.4,swap] {$\rho$} (b2);
\draw[sArrow] (b3) to node[auto,pos=.65,swap] {$r, [\mathcal{F}(M)(\rho)]_r$} (bob3.center);

\end{tikzpicture}

\end{centering}
\vspace{-0.3in}
\caption{\label{fig:meas.buffer} The measurement buffer. Alice inputs a message $M$. Bob inputs a specification $\mathcal{F}$ which takes as input Alice's message and returns a measurement $\mathcal{F}(M)$. Bob also inputs a state $\rho$. The buffer measures $\rho$ with $\mathcal{F}(M)$ producing classical outcome $r$ and the post-measurement state denoted $[\mathcal{F}(M)(\rho)]_r$. Both Alice and Bob receive $r$ and Bob also receives $[\mathcal{F}(M)(\rho)]_r$.}
\end{figure}
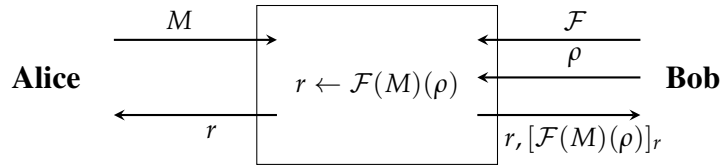

The necessity of relying on a measurement buffer to obtain a secure protocol is a consequence of the use of rigidity to obtain verifiability. Rigidity arguments require the assumption that, in an execution of the real protocol, the measurements implemented by Bob are ``local''; in other words, that the simulator constructed in the security proof can interact directly with those measurements. In the AC framework, in general, a malicious Bob may ``delegate'' any measurements that it wishes to make to the environment\footnote{In AC and UC, the environment represents anything that is external to the protocols under consideration \cite{canetti2001universally, maurer2011abstract}. This can include other protocols, other parties etc.}, which would render them inaccessible to the simulator. By constructing the protocol from a measurement buffer resource we explicitly prevent such behavior from Bob. (Note that the use of the buffer does \emph{not} prevent Bob from sharing entanglement with the environmnent, or from exchanging quantum states with it in-between any two uses of the measurement buffer.)
While the measurement buffer is necessary to obtain composable security, the use of this resource can be omitted when considering stand-alone security only (since in that case, there is no environment). 
Finally, note that the measurement buffer is not a ``physical'' resource of the protocol; in an actual run of the protocol Alice and Bob interact only classically.
The buffer is discussed in more detail in Section~\ref{sec:ideal-protocol}.

We proceed with an informal description of the protocol and its analysis. Our starting point is the work~\cite{brakerski2018cryptographic}, in which the authors give a classical protocol between a verifier and prover such that provided the prover is accepted with non-negligible probability in the protocol, it is guaranteed that a subset of the values returned by the prover contain information-theoretic randomness. This guarantee holds as long as the prover is computationally bounded, and more specifically that it does not have the ability to break the learning with errors (LWE) problem while the protocol is being executed. 

We observe that the proof of~\cite{brakerski2018cryptographic} explicitly establishes a stronger rigidity statement whereby the prover is guaranteed, up to a local rotation on its workspace, to have prepared a $\ket{+}$ state and measured it in the computational basis (hence the randomness). Formulated differently, the protocol from~\cite{brakerski2018cryptographic} implements a weak variant of \RSPV\ in which only the option $W=Z$ is available to Alice. This is not sufficient for delegated computation, but it is a starting point. 

To generate the other states needed for \RSPV\ we need to go deeper in the protocol from~\cite{brakerski2018cryptographic}. At a high level, the idea is to engineer the preparation of a state of the form 
\begin{equation}\label{eq:psi0}
 \frac{1}{\sqrt{2}} \big(\ket{0} \ket{x_0} + \ket{1}\ket{x_1} \big)\;,
\end{equation}
where $x_0,x_1\in\{0,1\}^w$ are bitstrings defined as the unique preimages of an element $y$, provided by the prover to the verifier, under a claw-free pair of functions $f_0,f_1:\{0,1\}^w \to \mY$, where $\mY$ is some finite range set. For the purposes of this discussion it is not important how the state~\eqref{eq:psi0} is obtained, as long as we can guarantee that the prover prepares such a state. 

In~\cite{brakerski2018cryptographic} the next step is to ask the prover to measure the second register in the Hadamard basis (i.e.\ implement the Fourier transform over $\Z_2^w$ and then measure in the computational basis). Labeling the outcome as $d\in\{0,1\}^w$, the first qubit is projected to the state $\frac{1}{\sqrt{2}}(-1)^{d\cdot x_0}(\ket{0} + (-1)^{d\cdot(x_0\oplus x_1)} \ket{1})$ that provides the basis for the randomness generation described earlier. 

Consider the following simple modification: by thinking of $x_0,x_1$ as elements of $\Z_8^{w/3}$ (assuming $w$ is a multiple of $3$) instead of $\{0,1\}^w$, we can ask the prover to implement the Fourier transform over $\Z_8$, yielding an outcome $d\in \Z_8^{w/3}$ and a post-measurement state 
 \begin{equation}\label{eq:psi1}
\ket{\psi_\theta}\,=\,\frac{1}{\sqrt{2}} \omega^{d\cdot x_0}(\ket{0} + \omega^{d\cdot(x_0+ x_1)} \ket{1})\;,
\end{equation}
where $\omega = e^{\frac{2i\pi}{8}}$ and the addition and inner product are taken modulo $8$. Up to a global phase this is precisely the state $\ket{+_\theta}$, for $\theta = \frac{\pi}{4}\, d\cdot (x_0+ x_1)$. 

So far the argument establishes completeness: if Alice and Bob follow the protocol, Alice obtains an angle $\theta$ and Bob obtains the state $\ket{+_\theta}$. Moreover, using a slight extension of the adaptive hardcore bit statement from~\cite{brakerski2018cryptographic} it is not hard to show that the value of $\theta$ is computationally indistinguishable from uniform from Bob's perspective. The main difficulty is to argue that the prover \emph{must} have created \emph{precisely} the state $\ket{\psi_\theta}$ in~\eqref{eq:psi1}, and not for instance a related state such as $\ket{\psi_{3\theta}}$. Note that this would be allowed in \RSPB, but it is not in \RSPV.

In order to show that the prover must have a state that is equal, up to an isometry, to a state of the form $\ket{\psi_\theta}$ we combine rigidity arguments similar to those employed in~\cite{brakerski2018cryptographic} with a new idea: we introduce a test that asks the prover to demonstrate that the state it has prepared implements a near-optimal $2\mapsto 1$ quantum random access code (QRAC). A $2\mapsto 1$ QRAC is a procedure that encodes two classical bits into a single qubit, in a way that maximizes the success probability of the following task: given a request for either the first or the second bit (chosen with equal probability), perform a measurement on the single qubit that returns the value of that bit with the highest possible probability. As shown in~\cite{ambainis2008quantum} the optimum success probability of this task is $\frac{1}{2} + \frac{1}{2\sqrt{2}}$, and is achieved by encoding the two bits in one of the four single-qubit states $\ket{+_0},\ket{+_{\frac{\pi}{2}}},\ket{+_\pi}$ and $\ket{+_{\frac{3\pi}{2}}}$. More specifically, if the input bits are denoted $b_1$, $b_2$, then the QRAC state is $\ket{+_{b_1 \pi + b_2 \frac{\pi}{2}}}$. Moreover, the optimal measurement for predicting one bit or the other is a measurement in the basis $\{\ket{+_{\frac{\pi}{4}}}, \ket{+_{\frac{5\pi}{4}}}\}$, if $b_1$ is requested, or $\{\ket{+_{\frac{3\pi}{4}}},\ket{+_{\frac{7\pi}{4}}}\}$, if $b_2$ is requested. 

We extend the optimality proof from~\cite{ambainis2008quantum} to show that even a near-optimal family of states and measurements must be close, up to a global rotation, to the ones described above. Next we enforce that the prover's states and measurements implement a near-optimal $2\mapsto 1$ QRAC by asking that the prover successfully predict certain bits of $\theta$, given partial information about it. For example, the verifier can reveal to the prover that $\theta \in \{\frac{\pi}{2},\frac{3\pi}{2}\}$ and ask which is the case; the prover should be able to answer with probability $1$ by performing the appropriate measurement. Or the verifier can reveal that $\theta\in \{\ket{+_0},\ket{+_{\frac{\pi}{2}}},\ket{+_\pi},\ket{+_{\frac{3\pi}{2}}}\}$ and ask the prover to guess one additional bit of $\theta$; the prover should be able to succeed with probability $\frac{1}{2} + \frac{1}{2\sqrt{2}}$. 

Making use of the rigidity argument to establish composable security requires the simulator to have access to Bob's measurement operators. For this reason, while most communication steps of the protocol can be implemented using a classical communication channel, in the last step of the protocol, described in the previous paragraph, the communication takes place through a measurement buffer: Alice inputs partial information about $\theta$, and Bob inputs a description of the measurement that he would have performed on each of Alice's possible questions, together with the quantum state on which the measurement is to be performed. 

The complete argument is given in Section~\ref{sec:real-protocol}. We introduce a sequential protocol that consists of a number $N$ of tests, followed by a random stopping time. We show that any behavior of the prover that has non-negligible probability of passing a fraction of tests that is within a small enough constant of the optimal fraction is such that the following property holds: at the end of the protocol, the state of the prover is unitarily equivalent to a state that is computationally indistinguishable (up to a small computational error that depends on $N$ and other parameters of the protocol) from a state of the form $\ket{\psi_\theta}$ together with some $\theta$-independent side information.

\subsection{Application: delegated computation}
\label{sec:intro-del}

Having defined the ideal \RSPV\ functionality as well as the real protocol that implements this functionality from classical channels, we now discuss applications. As mentioned, the most natural application of RSP is to verifiable delegated quantum computation. 
Intuitively, the idea is the following: suppose Alice wishes to delegate $C(x)$ to Bob, for some quantum circuit $C$ having $T$ gates. Using the measurement-based protocol from~\cite{fitzsimons2017unconditionally}, if Alice were to send Bob $O(T \, \log(1 / \delta))$ randomly chosen states, from the ten possible choices mentioned earlier (the $\ket{+_{\theta}}$ states, with $\theta \in \Theta$, and the $\ket{0}$, $\ket{1}$ states), she would be able to delegate $C(x)$ to Bob and the protocol would have soundness error at most $\delta$.
The \RSPV\ functionality allows her to do exactly this, using classical communication alone. Of course, unlike the protocol of~\cite{fitzsimons2017unconditionally}, the security of this construction would be computational, rather than information-theoretic.
To summarize, in the delegation protocol Alice first executes \RSPV\ a certain number of times with Bob in order to prepare the required resource states in Bob's quantum memory. She then engages in the protocol of~\cite{fitzsimons2017unconditionally} as if she had sent the random states to Bob.

How many times does Alice need to execute \RSPV? To delegate the circuit of size $T$ and achieve soundness error $\delta$, the number of executions must clearly be at least $\Omega(T \, \log(1 / \delta))$. If the real protocol used to implement \RSPV\ prepared the intended states \emph{exactly}, then we would have exactly that many runs. Of course, this is not the case, and we need to account for the failure probability of the real protocol, which we denote as $\varepsilon$. It was shown in~\cite{dunjko2014composable, gheorghiu2015robustness} that the protocol of~\cite{fitzsimons2017unconditionally} is robust to deviations in the collective state of the resource qubits. If there are $M$ such qubits, and the error per state is $\varepsilon$ then by the triangle inequality it follows that the deviation of the whole state is at most $M \varepsilon$. We therefore need to choose $\varepsilon = O(\delta/M)$ and since $M = \Omega(T \, \log(1 / \delta))$, this means that $\varepsilon = O \left( \frac{\delta}{T \, \log(1 / \delta)} \right)$. As shown in Section~\ref{sec:real-protocol}, to achieve error at most $\varepsilon$, the real protocol associated to \RSPV\ must have a running time of $O(1/\varepsilon^3)$.
Putting everything together, this leads to a total number of operations that scales as $O((T^4/\delta^3) \, \log^4(1/\delta))$. Ideally, one may hope for an implementation whose communication is linear in $T$. It may be possible to do this by considering a single-use parallel version of our protocol, whereby all states would be generated in a single iteration. Achieving this is likely to be technically challenging, and we leave the possibility open for future work. 

In the language of AC, the ideal functionality for verifiable DQC has already been defined in~\cite{dunjko2016blind}. What we show is that this functionality is computationally indistinguishable from the real protocol described earlier. To do this we first adapt the definitions of DQC resources to the setting of computational security. We then show that the results pertaining to those resources in the information-theoretic case also hold in the case of computational security. This is done in Section~\ref{sec:dqc}. Finally, we show that the \RSPV\ functionality can be used to implement the computational DQC functionalities. It follows that the real protocol we described is computationally indistinguishable from the ideal DQC resource.

As already mentioned one of the main advantages to proving the security of \RSPV\ in the AC framework is that one can directly plug this primitive into other existing protocols. Aside from DQC, a related application is to \emph{multi-party quantum computation} (MPQC). In~\cite{kashefi2017multiparty} the authors define AC functionalities for multi-party quantum computation.  Their protocol consists of a number of clients, each having its own input, that wish to delegate a computation on their collective inputs to a quantum server. Its security, as defined in~\cite{kashefi2017multiparty}, is guaranteed in the settings where either the server is malicious (but the clients are not), or a subset of clients is malicious (but the server behaves honestly). The protocol works by having the clients perform a remote state preparation protocol, in which the clients send quantum states to the server. It then proceeds in a manner similar to the single-client DQC protocols. In principle, remote state preparation could be replaced with our \RSPV\ primitive, leading to an MPQC protocol in which the clients and the server use only classical communication.
We leave the formalization of this intuition to future work.\\

Upon completion of this work we became aware of the independent work ``QFactory: classically-instructed remote secret qubits preparation'' by Cojocaru, Colisson, Kashefi and Wallden. Using our terminology, their main result is the design of a protocol for \RSPB, the blind variant of RSP, that they prove computationally secure.

\paragraph{Outline.} We start with Section~\ref{sec:prelim}, which contains the preliminaries for this work. Most notably, in this section we recast some of the definitions pertaining to composability of DQC protocols, expressed in the AC framework, in the setting of computational, rather than information-theoretic security.
Then, in Section~\ref{sec:real-protocol}, we describe the remote state preparation protocol and prove the rigidity statement about its functionality. In other words we show that, provided the verifier accepts with non-negligible probability, the prover's state is close (up to an isometry) to the ideal random state that the verifier receives a description of. In the proof, we make use of an extended noisy trapdoor claw-free function family, for which we provide the relevant definitions in Appendix~\ref{sec:entcf}, as well as present the properties of these functions that we require.
Next, in Section~\ref{sec:ideal-protocol} we describe the ideal \RSPV\ functionality and prove, in the AC framework, that the protocol from Section~\ref{sec:real-protocol} implements this functionality from classical channels, under computational assumptions.
Having done this, we end in Section~\ref{sec:dqc} by showing that the ideal \RSPV\ functionality can be used to implement the functionality for blind and verifiable delegated quantum computation. From the previous results, this implies that one can have a computationally secure DQC protocol by using our RSP primitive to prepare the quantum states used by that protocol. The specific DQC protocol we consider is the one from~\cite{fitzsimons2017unconditionally}.

\paragraph{Acknowledgments.} 
We thank Rotem Arnon-Friedman, Vedran Dunjko, Urmila Mahadev and Christopher Portmann for useful discussions. 
Alexandru Gheorghiu and Thomas Vidick are supported by MURI Grant FA9550-18-1-0161 and the IQIM, an NSF Physics Frontiers Center (NSF Grant PHY-1125565) with support of the Gordon and Betty Moore Foundation (GBMF-12500028). Thomas Vidick is also supported by NSF CAREER Grant CCF-1553477, AFOSR YIP award number FA9550-16-1-0495, and a CIFAR Azrieli Global Scholar award.

\section{Preliminaries}
\label{sec:prelim}

\subsection{Notation}

We write $\mH$ for a finite-dimensional Hilbert space, using indices $\mH_A$, $\mH_B$ to specify distinct spaces. $\Lin(\mH)$ is the set of linear operators on $\mH$. We write $\Id_A\in \Lin(\mH_A)$ for the identity operator,  $\Tr(\cdot)$ for the trace, and $\Tr_B:\Lin(\mH_A \otimes \mH_B )\to \Lin(\mH_A)$ for the partial trace. $\Pos(\mH)$ is the set of positive semidefinite operators and $\Density(\mH)=\{X\in \Pos(\mH):\Tr(X)=1\}$ the set of density matrices (also called states).  

Given $A \in \Lin(\mH)$, $\|A\|_1=\Tr\sqrt{A^\dagger A}$ is the Schatten $1$-norm and $TD(A) = \frac{1}{2}\|A\|_1$ the trace distance. 

Given $X,Z\in\Lin(\mH)$ we write $\{X,Z\}=XZ+ZX$ for the anticommutator and $[X,Z]=XZ-ZX$ for the commutator. $\sigma_X, \sigma_Y, \sigma_Z\in\Lin(\C^2)$ are the single-qubit Pauli matrices. For an angle $\theta$ we let $\sigma_{X,\theta} = \cos\theta\,\sigma_X + \sin \theta\, \sigma_Y$. 

A completely positive trace-preserving (CPTP) map $\mF:\mH_A\to\mH_B$ is a linear map such that for any $\mH_C$ and $\rho \in \Pos(\mH_A\otimes \mH_C)$ it holds that $(\mF\otimes \Id_C)(\rho) \in\Pos(\mH_B\otimes \mH_C)$ and $\Tr(\mF\otimes \Id_C(\rho))=\Tr(\rho)$.

We let $\Theta = \{0,\frac{\pi}{4},\frac{2\pi}{4},\ldots,\frac{7\pi}{4}\}$. For $\theta\in\Theta$, $\ket{+_\theta} = \frac{1}{\sqrt{2}}(\ket{0} + e^{i\theta}\ket{1})$. We often identify elements of $\Z_8$ with $\{0,1,2,\ldots,7\}$. For a finite set $S$ we write $x\leftarrow_U S$ to mean that $x$ is chosen uniformly at random from $S$. A negligible function is a function $\delta:\N \to \R$ that goes to $0$ faster than any inverse polynomial, i.e.\ $p(\lambda)\delta(\lambda)\to_{\lambda\to\infty} 0$ for any polynomial $p$.

\subsection{Efficient states and operations}

\begin{definition}
We say that a family of states $\{\rho_\lambda \in \Density(\mH_{A_\lambda})\}_{\lambda\in\N}$ is efficiently preparable (or just ``efficient'') if there exists a polynomial-time uniformly generated\footnote{By ``polynomial-time uniformly generated we mean that there exists a Turing machine $T$ that on input $1^\lambda$ returns a description of the circuit $C_\lambda$ using some fixed finite universal gate set.} family of circuits $\{C_\lambda\}_{\lambda\in\N}$ acting on $\mH_{A_\lambda}\otimes \mH_{B_\lambda}$ such that
\[\forall\lambda\;,\qquad \Tr_{B_\lambda}\big(C_\lambda(\proj{0}_{A_\lambda} \otimes \proj{0}_{B_\lambda})\big) = \rho_\lambda\;.\]
\end{definition}

\begin{definition}
We say that a family of CPTP maps $\{\mF_\lambda: \Lin(\mH_{A_\lambda}) \to \Lin(\mH_{B_\lambda})\}_{\lambda\in\N}$ is efficient if there exists a polynomial-time uniformly generated family of circuits $\{C_\lambda\}$ acting on $\mH_{A_\lambda}\otimes \mH_{B_\lambda}\otimes\mH_{C_\lambda}$ such that
\[\forall\lambda\;,\forall \rho\in \Density(\mH_{A_\lambda})\;,\qquad \Tr_{A_\lambda C_\lambda}\big(C_\lambda(\rho\otimes \proj{0}_{B_\lambda C_\lambda}) \big) = \mF_\lambda(\rho_\lambda)\;.\]
\end{definition}

\subsection{Computational distinguishability}

\begin{definition}\label{def:dist} 
Given two families of (not necessarily normalized) density operators $\{\rho_\lambda\}_{\lambda \in \N}$ and $\{\sigma_\lambda\}_{\lambda \in \N}$ we say that $\rho$ and $\sigma$ are \emph{computationally distinguishable with advantage at most $\delta(\lambda)$}, and write $\rho \approx_{c,\delta} \sigma$, if for any polynomial-time uniformly generated family of circuits $\{D_\lambda\}_{\lambda\in\N}$, known as a \emph{distinguisher}, there is a $\lambda_0\in\N$ such that 
\begin{equation} \label{eq:dist-0}
\forall \lambda \geq \lambda_0\;,\qquad \frac{1}{2}\big|\Tr\big(D_\lambda^\dagger(\proj{0}\otimes\Id)D_\lambda \rho_\lambda\big) - \Tr\big(D_\lambda^\dagger(\proj{0}\otimes\Id)D_\lambda \sigma_\lambda\big) \big|\,\leq\, \delta(\lambda)\;.
\end{equation}
\end{definition}

The best $\delta(\lambda)$ in Definition~\ref{def:dist} implicitly depends on the specific polynomial bound that is placed on the size of the distinguisher. In this paper it will always be the case that $\delta(\lambda) = \delta + \negl(\lambda)$, for some constant $\delta$ and a negligible function of $\lambda$. The size of the distinguisher will affect the negligible function; the statement should be interpreted as saying that for any polynomial size bound on the distinguisher there is a negligible function of $\lambda$ such that~\eqref{eq:dist-0} holds.

\begin{lemma}\label{lem:comp-inf}
For any density operators $\{\rho_\lambda\}_{\lambda \in \N}$ and $\{\sigma_\lambda\}_{\lambda \in \N}$, $\{\rho_\lambda\}_{\lambda \in \N}$ and $\{\sigma_\lambda\}_{\lambda \in \N}$ are computationally distinguishable with advantage at most $\|\rho_\lambda - \sigma_\lambda\|_1$. 
\end{lemma}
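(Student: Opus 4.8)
The plan is to reduce the statement to the standard matrix Hölder inequality together with the observation that the measurement operator implemented by any distinguisher has operator norm at most one. Fix an arbitrary polynomial-time uniformly generated distinguisher $\{D_\lambda\}_{\lambda\in\N}$. By Definition~\ref{def:dist}, the quantity whose absolute value must be controlled is
\[
\frac{1}{2}\Big(\Tr\big(M_\lambda \rho_\lambda\big) - \Tr\big(M_\lambda \sigma_\lambda\big)\Big)\;,\qquad M_\lambda := D_\lambda^\dagger(\proj{0}\otimes\Id)D_\lambda\;.
\]
First I would observe that $M_\lambda$ is the conjugation of the projector $\proj{0}\otimes\Id$ by the unitary $D_\lambda$, hence is itself a projector; in particular $0\le M_\lambda\le \Id$, so $\|M_\lambda\|_\infty\le 1$.

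The key step is then a single application of trace-norm/operator-norm duality (matrix Hölder): for any Hermitian $A$ and any $M$ with $\|M\|_\infty\le 1$ one has $|\Tr(M A)|\le \|M\|_\infty\,\|A\|_1 \le \|A\|_1$. Applying this with $A=\rho_\lambda-\sigma_\lambda$ gives
\[
\frac{1}{2}\big|\Tr(M_\lambda\rho_\lambda)-\Tr(M_\lambda\sigma_\lambda)\big|\;=\;\frac{1}{2}\big|\Tr\big(M_\lambda(\rho_\lambda-\sigma_\lambda)\big)\big|\;\le\;\tfrac12\|\rho_\lambda-\sigma_\lambda\|_1\;\le\;\|\rho_\lambda-\sigma_\lambda\|_1\;.
\]
Since this bound holds for every $\lambda$ and for the arbitrary distinguisher we fixed, the condition of Definition~\ref{def:dist} is met with $\delta(\lambda)=\|\rho_\lambda-\sigma_\lambda\|_1$ (taking $\lambda_0=1$, say), which is exactly the claim.

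There is essentially no hard step here: the lemma is the familiar fact that statistical closeness implies computational indistinguishability, and the proof is one line of matrix Hölder. The only points deserving a word of care are (i) a general distinguisher may append ancilla qubits, so strictly speaking one should write $M_\lambda=(\Id\otimes\bra{0})D_\lambda^\dagger(\proj{0}\otimes\Id)D_\lambda(\Id\otimes\ket{0})$, but this operator still satisfies $0\le M_\lambda\le\Id$, leaving the argument unchanged; and (ii) the operators are permitted to be sub-normalized, which does not affect Hölder's inequality. Note finally that the argument in fact yields the sharper bound $TD(\rho_\lambda-\sigma_\lambda)=\tfrac12\|\rho_\lambda-\sigma_\lambda\|_1$, which is tight since the optimal choice of $M$ is the projector onto the positive part of $\rho_\lambda-\sigma_\lambda$; the weaker constant stated in the lemma suffices for the applications that follow.
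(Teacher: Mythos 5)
Your proof is correct and follows essentially the same route as the paper, which simply asserts the inequality $\frac{1}{2}|\Tr(M(\rho_\lambda-\sigma_\lambda))|\le\|\rho_\lambda-\sigma_\lambda\|_1$ for $0\le M\le\Id$ without further justification; you supply the standard one-line justification via trace-norm/operator-norm duality. The additional remarks about ancillas, sub-normalization, and the sharper factor of $\tfrac12$ are accurate and harmless.
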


\begin{proof}
For any $\rho, \sigma$ and $0\leq D \leq \Id$ it holds that:
\begin{equation}
\frac{1}{2}\big|\Tr\big(D_\lambda^\dagger(\proj{0}\otimes\Id)D_\lambda \rho_\lambda\big) - \Tr\big(D_\lambda^\dagger(\proj{0}\otimes\Id)D_\lambda \sigma_\lambda\big) \big|\,\leq\, \|\rho_{\lambda}-\sigma_{\lambda}\|_1
\end{equation}
\end{proof}

\begin{lemma}\label{lem:dist}
For $b\in \{0,1\}$ let $\{\rho_\lambda^b\}_{\lambda \in \N}$ and $\{\sigma_\lambda^b\}_{\lambda \in \N}$ be two families of density operators. For all $\lambda$, let $\rho_\lambda = \sum_b \proj{b} \otimes \rho_\lambda^b$ and $\sigma_\lambda = \sum_b \proj{b}\otimes \sigma_\lambda^b$. Suppose that $\{\rho_\lambda\}$ and $\{\sigma_\lambda\}$ are distinguishable with advantage at most $\delta(\lambda)$. Then for $b\in \{0,1\}$, $\{\rho_\lambda^b\}$ and $\{\sigma_\lambda^b\}$ are distinguishable with advantage at most $\delta_b(\lambda)$ where $\delta_0(\lambda),\delta_1(\lambda)$ are such that $|\delta_0(\lambda) + \delta_1(\lambda) - \delta(\lambda)|=\negl(\lambda)$.
\end{lemma}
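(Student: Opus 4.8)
The plan is to characterize the optimal distinguishing advantage of the combined states $\rho_\lambda,\sigma_\lambda$ in terms of those of the components, exploiting that the flag register is block-diagonal in the computational basis and that the advantage in Definition~\ref{def:dist} is computed against measurement operators $M = D_\lambda^\dagger(\proj{0}\otimes\Id)D_\lambda$ with $0\le M\le \Id$. Concretely, I would let $\delta_b(\lambda)$ denote the optimal (best achievable up to a negligible additive term) distinguishing advantage between $\{\rho_\lambda^b\}$ and $\{\sigma_\lambda^b\}$, so that each pair is distinguishable with advantage at most $\delta_b$ by definition, and reduce the lemma to the single identity $|\delta_0+\delta_1-\delta|=\negl(\lambda)$, where $\delta$ is read (per the remark following Definition~\ref{def:dist}) as the tight advantage for the combined pair.

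First I would prove $\delta \le \delta_0+\delta_1+\negl$. Given any efficient combined distinguisher $D_\lambda$ with operator $M$, I would define its diagonal blocks $M_b=(\bra{b}\otimes\Id)M(\ket{b}\otimes\Id)$, which satisfy $0\le M_b\le\Id$ and yield the decomposition $\Tr(M(\rho_\lambda-\sigma_\lambda))=\sum_b\Tr\big(M_b(\rho_\lambda^b-\sigma_\lambda^b)\big)$. Each $M_b$ is realized by the efficient distinguisher that prepends the flag $\ket{b}$ and then runs $D_\lambda$, so $\frac{1}{2}\big|\Tr(M_b(\rho_\lambda^b-\sigma_\lambda^b))\big|\le \delta_b+\negl$; applying the triangle inequality to the decomposition bounds the advantage of $D_\lambda$ by $\delta_0+\delta_1+\negl$, and taking the supremum over all efficient $D_\lambda$ gives the claim.

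For the reverse inequality $\delta\ge \delta_0+\delta_1-\negl$, I would combine near-optimal component distinguishers $D_0,D_1$ (with operators $M_0,M_1$ and signed biases $\epsilon_b=\Tr(M_b(\rho_\lambda^b-\sigma_\lambda^b))$, where $\frac{1}{2}|\epsilon_b|\ge\delta_b-\negl$) into one efficient distinguisher that measures the flag register in the computational basis and then runs $D_b$ on the state register. Because the components are genuine density operators, flipping the output bit of $D_b$ replaces $M_b$ by $\Id-M_b$ and hence negates $\epsilon_b$; considering the four circuits obtained from the constant choices of whether to flip each output, their advantages are $\frac{1}{2}|s_0\epsilon_0+s_1\epsilon_1|$ for $s_0,s_1\in\{+1,-1\}$, and the choice aligning each $s_b$ with the sign of $\epsilon_b$ yields $\frac{1}{2}(|\epsilon_0|+|\epsilon_1|)\ge \delta_0+\delta_1-\negl$. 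Since each of these four circuits is efficient and uniformly generated, each has advantage at most $\delta+\negl$, so $\delta_0+\delta_1\le\delta+\negl$. Together with the previous paragraph this gives $|\delta_0+\delta_1-\delta|=\negl$.

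The main obstacle I anticipate is the bookkeeping around the absolute value in the advantage: naively adding the component advantages risks cancellation, and correcting for it appears to require knowing the ($\lambda$-dependent) sign of each $\epsilon_b$, which a uniformly generated distinguisher cannot access at construction time. The device above sidesteps this by hard-wiring all four constant sign patterns and invoking the advantage bound $\delta$ on each circuit separately. The only other point needing mild care is the standard computational caveat that an exactly optimal distinguisher need not be realizable as a single efficient circuit; I would absorb this into the per-step $\negl$ slack, taking the component distinguishers to be near-optimal for large $\lambda$.
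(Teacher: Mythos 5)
Your proposal is correct and follows essentially the same route as the paper's proof: one direction combines the two component distinguishers into a single flag-controlled distinguisher to show $\delta_0+\delta_1\leq\delta$, and the other direction decomposes an arbitrary combined distinguisher into its diagonal blocks (realized by prepending the flag $\ket{b}$) and applies the triangle inequality to get $\delta\leq\delta_0+\delta_1$. The one place you go beyond the paper is the four hard-wired sign patterns in the "combine" direction: the paper simply asserts that $\proj{0}\otimes D^0+\proj{1}\otimes D^1$ has advantage $\delta_0+\delta_1$, which silently assumes the two signed biases do not cancel, and your device of flipping output bits to align the signs (legitimate since the $\rho^b_\lambda,\sigma^b_\lambda$ are genuine density operators) is the right way to make that step airtight without needing the distinguisher to know the signs at construction time.
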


\begin{proof}
For $b\in\{0,1\}$ fix a family of efficient distinguishers $\{D^b_\lambda\}$ for $\{\rho^b_\lambda\}$ and $\{\sigma_\lambda^b\}$ with advantage $\delta_b(\lambda)$. Then the distinguisher $D = \proj{0} \otimes D^0 + \proj{1}\otimes D^1$ is efficient and has distinguishing advantage $\delta_0(\lambda)+\delta_1(\lambda)$ for $\{\rho_\lambda\}$ and $\{\sigma_\lambda\}$. This shows that  $\delta_0(\lambda)+\delta_1(\lambda)\leq \delta(\lambda)$. Conversely, let $\{D_\lambda\}$ be an efficient distinguisher for $\{\rho_\lambda\}$ and $\{\sigma_\lambda\}$ with advantage $\delta(\lambda)$. Then 
\begin{multline*}
\big|\Tr(D_\lambda^\dagger(\proj{0}\otimes\Id)D_\lambda \rho_\lambda) - \Tr(D_\lambda^\dagger(\proj{0}\otimes\Id)D_\lambda \sigma_\lambda)\big| \\ 
\leq \sum_b \big|\Tr((D_\lambda^b)^\dagger(\proj{0}\otimes\Id)D_\lambda^b \rho_\lambda^b) - \Tr((D_\lambda^b)^\dagger(\proj{0}\otimes\Id)D_\lambda^b\sigma_\lambda^b) \big|\;,
\end{multline*}
where $D_\lambda^b$ is the efficient distinguisher that initializes an ancilla qubit to state $\proj{b}$ and then runs $D_\lambda$. 
\end{proof}

\subsection{Composable security} \label{subsect:composable}

Abstract cryptography (AC) is a framework for proving the security of protocols under composition. For example if protocols $\pi_1$ and $\pi_2$ are shown to be secure in the AC framework then their sequential composition $\pi_1 \circ \pi_2$ or parallel composition $\pi_1 | \pi_2$ is automatically secure as well. 
For an in-depth introduction to the framework of abstract cryptography specialized to the present context of two-party quantum protocols we refer to~\cite{dunjko2013composable}. Here we briefly recall the key notions and terminology.  

The actions of the two players, generally called Alice and Bob, in a two-party protocol $\pi$ are specified by a sequence of CPTP maps $\pi_A=\{\mE_i: \Lin(\mH_{AC})\to \Lin(\mH_{AC})\}_i$ and $\pi_B=\{\mF_i: \Lin(\mH_{CB})\to \Lin(\mH_{CB})\}_i$, where $A$ and $B$ are Alice and Bob's private registers respectively, and $C$ represents a communication channel. In AC the channel $C$ is modeled as a resource $\mathcal{R}$, where in general a resource is itself represented as a sequence of completely positive trace-preserving (CPTP) maps with internal memory.

%, such as e.g.\ quantum strategies as defined in~\cite{gutoski2007toward}. 

In the AC framework a protocol can be thought of as a process that constructs a resource, $\mathcal{S}$, from some other resource, $\mathcal{R}$. For instance, a protocol $\pi_{AB} = (\pi_A, \pi_B)$ can construct an ideal resource for delegated quantum computation from a resource consisting of classical and quantum channels. The resource $\pi_A \mathcal{R}$ obtained by plugging in one player's strategy into the resource is another resource, itself modeled as a sequence of CPTP maps, that can be thought of as a quantum strategy as defined in~\cite{gutoski2007toward}. When both Alice and Bob follow the protocol while interacting with the resource $\mathcal{R}$ we write $\pi_{AB} \mathcal{R}$, or $\pi_A \mathcal{R} \pi_B$.\footnote{The ordering of the protocols, $\pi_A$ and $\pi_B$, has no special significance.} Note that $\pi_A \mathcal{R} \pi_B$ is again itself a resource, having input and output interfaces for both Alice and Bob. 

Since the goal in the AC framework is to show that certain resources are indistinguishable from each other, we need a notion of distinguishability of resources. Informally, two resources $\mathcal{R}_1$ and $\mathcal{R}_2$, each modeled as a sequence of CPTP maps with input and output spaces of compatible dimension, are \emph{computationally distinguishable with advantage at most $\eps$} if no efficient distinguisher $\mathcal{D}$ (itself represented as a family of efficient CPTP maps) can distinguish an interaction with $\mathcal{R}_1$ from an interaction with $\mathcal{R}_2$. Here, the distinguisher is allowed to create an initial state as input to the resource (the state can be entangled with a reference system kept by the distinguisher); then, upon having received the output of the first map, it can modify it in an arbitrary (efficient) way and input it to the second map, etc., until it is required to make an (efficient) measurement on the output of the last map (and its own reference system) in order to return a guess for the resource with which it was interacting. We write the composition of $\mathcal{D}$ and $\mathcal{R}_i$, for $i\in \{1,2\}$, as $\mathcal{D}\mathcal{R}_i$; this is a resource that takes no input and outputs a single bit. 

\begin{definition} \label{def:distadv}
Let $\eps=\eps(\lambda)\in[0,1]$ be a function of a security parameter $\lambda\in\N$, and let $\mathcal{R}_1$ and $\mathcal{R}_2$ be two resources having input and output spaces of the same dimension. 
%Additionally, let $\mathcal{H}_{RE}$ be the Hilbert space associated to the input and output ports of a given resource ($\mathcal{R}_1$ or $\mathcal{R}_2$) together with an environment.
We say that $\mathcal{R}_1$ and $\mathcal{R}_2$ have distinguishing advantage $\eps$ if for all efficient distinguishers $\mathcal{D}$ it holds that $|\Pr( \mathcal{D}\mathcal{R}_1=1)-\Pr(\mathcal{D}\mathcal{R}_2=1)|\leq \eps$. We write this as:
%\begin{equation}
%\frac{1}{2} | \Tr \left( D_\lambda^\dagger(\proj{0}\otimes\Id_{RE})D_\lambda  (Id_E \otimes \mathcal{R}_1) (\rho_{\lambda}) \right) -
%\Tr \left( D_\lambda^\dagger(\proj{0}\otimes\Id_{RE})D_\lambda  (Id_E \otimes \mathcal{R}_2) (\rho_{\lambda}) \right) | \leq \eps\;.
%\end{equation}
\begin{equation}
\mathcal{R}_1 \approx_{c, \eps} \mathcal{R}_2\;.
\end{equation}
\end{definition}

%Intuitively the idea is that no distinguisher (in the sense of Definition~\ref{def:dist}) that is connected to input and output ports of the two resources can determine whether it is interacting with one resource or the other (except with small probability).
 With this definition we have the following.

\begin{definition}\label{def:sim}
Let $\eps=\eps(\lambda)\in[0,1]$ be a function of a security parameter $\lambda\in\N$. We say that a protocol $\pi=(\pi_A,\pi_B)$ constructs
a resource $\mathcal{S}$ from a resource $\mathcal{R}$ with (computational) error (or distance) $\eps$ if:
\begin{itemize}
\item \textbf{Correctness:} $\pi_{AB} \mathcal{R} \approx_{c, \eps} \mathcal{S}$.
\item \textbf{Security:} There exists an efficient simulator $\sigma$ such that $\pi_A \mathcal{R} \approx_{c, \eps} \mathcal{S} \sigma$.
\end{itemize}
% there exists a simulator $\sigma_E$ such that for all efficient distinguishers $D$, $d^D(\pi_{AB}R,S\sigma)\leq \eps$\tnote{notation should be explained}. 
(Here $\pi_{AB}$, $R$, $S$ and $\sigma$ may all implicitly depend on $\lambda$.)
\end{definition}

The first condition expresses the fact that if Alice and Bob follow the instructions of the protocol, the resulting resource behaves as the ideal one. The second condition expresses the fact that if Bob does not follow the protocol, any attack he performs on the real protocol can be mapped to an attack on the ideal protocol. This mapping is referred to as a simulator. Note that Definition~\ref{def:sim} implicitly assumes that Alice always behaves honestly; this need not be the case in general but always holds in the context of this paper. 

\subsection{Rigidity}

\begin{definition}
Let finite-dimensional Hilbert spaces $\mH_{\reg{A}}$ and $\mH_{\reg{A}'}$ and operators $R \in\Lin(\mH_{\reg{A}})$ and $S\in\Lin(\mH_{\reg{A}'})$ be functions of a parameter $\delta>0$ (the dependence on $\delta$ is left implicit in the notation). We say that $R$ and $S$ are $\delta$-isometric with respect to $\ket{\psi} \in \mH_{\reg{A}} \otimes \mH_{\reg{B}}$, and write $R\simeq_\delta S$, if there exists an isometry $V:\mH_{\reg{A}}\to\mH_{\reg{A}'}$ such that 
$$\big\|( R-V^\dagger SV)\otimes \Id_{\reg{B}} \ket{\psi}\big\|^2=O(\delta).$$
We sometimes write the isometry as a CPTP map $\Phi(R) = V R V^\dagger$ for $R\in \Lin(\mH_A)$, and also write $\Phi(\ket{\phi})$ for $V\ket{\phi}$, $\Phi(\sigma)$ for $V\sigma V^\dagger$. 
If $V$ is the identity, then we further say that $R$ and $S$ are $\delta$-equivalent, and write $R\approx_\delta S$ for $\| ( R- S) \otimes \Id_{\reg{B}} \ket{\psi}\|^2=O(\delta)$.
\end{definition}

The following can be shown by a standard application of Jordan's lemma. Furthermore, the isometry $V$ can be implemented using the ``swap'' isometry as in~\cite{mckague2012robust}. 

\begin{lemma}\label{lem:pauli-c}
Let $\ket{\psi} \in \mH_{\reg{A}} \otimes \mH_{\reg{B}}$ and $Z,X,X'$ observables on $\mH_{\reg{A}}$ such that $\{Z,X\}\approx_\delta 0$ and $\{Z,X'\}\approx_\delta 0$. 
Then there exist $\delta' = O(\sqrt{\delta})$, an isometry $V:\mH_\reg{A}\to \C^2\otimes \mH_{{\reg{A}'}}$, and Hermitian commuting $A_X,A_Y$ on $\mH_{{\reg{A}'}}$ such that $A_X^2+A_Y^2=\Id$ and 
\[Z \simeq_{\delta'} \sigma_Z \otimes \Id,\quad X \simeq_{\delta'} \sigma_X \otimes \Id,\quad\text{and}\quad X' \simeq_{\delta'} \sigma_X \otimes A_X + \sigma_Y \otimes A_Y\;.\]
Furthermore, there exists a polynomial-time algorithm that given explicit circuits implementing $Z,X$ and $X'$ as input returns an explicit circuit that implements the isometry $V$. 
\end{lemma}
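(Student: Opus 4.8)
The plan is to follow the standard route for rigidity statements of this kind: apply Jordan's lemma to the pair of binary observables $(Z,X)$ to expose a qubit tensor structure in which $Z$ is exactly $\sigma_Z$, align $X$ onto $\sigma_X$ by a block-diagonal rotation, and then treat $X'$ through its parts commuting and anticommuting with $Z$. It is convenient to fix the isometry $V$ first and verify the three relations in the transformed picture: write $\ket{\tilde\psi}=(V\otimes\Id_{\reg B})\ket\psi$ and compare $VZV^\dagger,VXV^\dagger,VX'V^\dagger$ against the target operators. Since $V$ is an isometry this is equivalent to the claimed $\simeq_{\delta'}$ statements, and the anticommutator hypotheses are preserved once $V$ intertwines $Z$ with $\sigma_Z\otimes\Id$.

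First I would invoke Jordan's lemma for the reflections $Z,X$ (each squares to $\Id$): there is an orthonormal basis of $\mH_{\reg A}$ in which $Z$ and $X$ are simultaneously block diagonal with blocks of dimension at most $2$. Padding the $1$-dimensional blocks gives an identification $\mH_{\reg A}\cong\C^2\otimes\mH_{\reg{A}'}$ in which $Z=\sigma_Z\otimes\Id$ exactly and $X=\sigma_X\otimes B_X+\sigma_Y\otimes B_Y+\sigma_Z\otimes B_Z$ for mutually commuting Hermitian $B_X,B_Y,B_Z$ (diagonal in the block index) with $B_X^2+B_Y^2+B_Z^2=\Id$. A direct computation gives $\{Z,X\}=2\,\Id\otimes B_Z$, so $\{Z,X\}\approx_\delta 0$ forces $\|(\Id\otimes B_Z)\ket{\tilde\psi}\|=O(\sqrt\delta)$. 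Applying the block-diagonal rotation $W=\sum_j e^{i\gamma_j\sigma_Z}\otimes\proj{j}$, which commutes with $Z$ and hence leaves the relation $Z\simeq_{\delta'}\sigma_Z\otimes\Id$ intact, rotates each block's equatorial component onto the $\sigma_X$ axis, sending $B_Y$ to $0$ and $B_X$ to $\sqrt{\Id-B_Z^2}$; since $\sqrt{\Id-B_Z^2}\approx\Id$ and $B_Z\approx 0$ on $\ket{\tilde\psi}$, this yields $X\approx_{\delta'}\sigma_X\otimes\Id$. I fold $W$ into $V$.

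With $V$ now fixed (Jordan basis followed by $W$) I would analyze $X'':=VX'V^\dagger$. Split it into its parts commuting and anticommuting with $Z=\sigma_Z\otimes\Id$: writing $X''_c=\tfrac12 Z\{Z,X''\}=\Id\otimes M_0+\sigma_Z\otimes M_Z$ and $X''_a=\sigma_X\otimes M_X+\sigma_Y\otimes M_Y$ with $M_0,M_Z,M_X,M_Y$ Hermitian on $\mH_{\reg{A}'}$, the identity $X''_c=\tfrac12 Z\{Z,X''\}$ together with $\{Z,X'\}\approx_\delta 0$ gives $\|X''_c\ket{\tilde\psi}\|=O(\sqrt\delta)$, so $X''\approx_{\delta'}X''_a$. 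The remaining and main difficulty is that $M_X,M_Y$ need be neither commuting nor satisfy $M_X^2+M_Y^2=\Id$, whereas the statement demands exactly commuting $A_X,A_Y$ with $A_X^2+A_Y^2=\Id$. I resolve this by a polar-decomposition trick. Writing $X''_a$ in the $\sigma_Z$-eigenbasis as the off-diagonal operator with blocks $M=M_X+iM_Y$ and $M^\dagger$, the exact relation $X'^2=\Id$ combined with $X''\approx_{\delta'}X''_a$ forces $M^\dagger M\approx\Id$ and $MM^\dagger\approx\Id$ on the two $\sigma_Z$-sectors of $\ket{\tilde\psi}$, i.e.\ $M$ is close to a unitary. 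Taking a polar decomposition $M=UP=P'U$ with $U$ unitary (extending the partial isometry arbitrarily on the negligibly-weighted small-singular-value subspace), $P=\sqrt{M^\dagger M}$, $P'=\sqrt{MM^\dagger}$, I set $A_X=\tfrac12(U+U^\dagger)$ and $A_Y=\tfrac1{2i}(U-U^\dagger)$. These are Hermitian, commute because $U$ is normal (indeed $[A_X,A_Y]=-\tfrac1{2i}[U,U^\dagger]=0$), and satisfy $A_X^2+A_Y^2=\Id$ identically. Bounding $M-U=U(P-\Id)$ on the $\ket0$-sector via $\|(P-\Id)\ket{\cdot}\|\le\|(M^\dagger M-\Id)\ket{\cdot}\|$ (from $P-\Id=(P+\Id)^{-1}(M^\dagger M-\Id)$ and $\|(P+\Id)^{-1}\|\le1$), and $(M-U)^\dagger=U^\dagger(P'-\Id)$ on the $\ket1$-sector via $P'$, shows $\sigma_X\otimes M_X+\sigma_Y\otimes M_Y\approx_{\delta'}\sigma_X\otimes A_X+\sigma_Y\otimes A_Y$ on $\ket{\tilde\psi}$, completing all three relations with $\delta'=O(\sqrt\delta)$.

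The main obstacle is precisely this last step: upgrading the approximate, non-commuting $(M_X,M_Y)$ to exactly commuting $(A_X,A_Y)$ with unit square-sum while keeping the error $O(\sqrt\delta)$; the two-sided polar decomposition (same unitary $U$, but using $P$ for one sector and $P'$ for the other) is what makes the sector-wise estimates match. For the ``furthermore'' claim, each ingredient is efficiently computable from explicit circuits for $Z,X,X'$: the simultaneous block-diagonalization underlying Jordan's lemma (e.g.\ from the spectral decomposition of $ZX$), the rotation angles $\gamma_j$, and the polar decomposition are all polynomial-time, and the resulting isometry $V$ can be realized explicitly by the swap-isometry circuit of~\cite{mckague2012robust}.
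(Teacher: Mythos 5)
Your proof is correct and follows exactly the route the paper intends: the paper does not actually write out a proof of this lemma, saying only that it is ``a standard application of Jordan's lemma'' with the isometry realized via the swap isometry of~\cite{mckague2012robust}, and your argument supplies precisely those standard details (Jordan blocks to extract the qubit, a block-diagonal rotation to align $X$ with $\sigma_X$, and a polar-decomposition re-unitarization of the anticommuting part of $X'$ to obtain exactly commuting $A_X,A_Y$ with $A_X^2+A_Y^2=\Id$), with the error accounting landing on the claimed $\delta'=O(\sqrt{\delta})$. The one point to tidy is that Jordan blocks on which $X$ acts as $\pm\Id$ fall outside your ansatz $X=\sigma_X\otimes B_X+\sigma_Y\otimes B_Y+\sigma_Z\otimes B_Z$; since $\{Z,X\}=\pm 2Z$ on such blocks, the hypothesis forces $\ket{\psi}$ to have weight $O(\delta)$ there, so $X$ can be redefined on them at no cost to the stated bounds.
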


\subsection{Delegated quantum computation} \label{subsec:dqc}

\subsubsection{Ideal functionalities}

We recall the ideal resources for blind and verifiable delegated quantum computation (DQC), as defined in~\cite{dunjko2013composable}. We start with blindness.

\begin{definition}[Definition 4.1 in~\cite{dunjko2013composable}]
The ideal DQC resource $\mS^{blind}$ which provides both correctness and blindness takes an input $\psi_A$ at Alice's interface, but no honest input at Bob's interface. Bob's filtered interface has a control bit $b$, set by default to $0$, which he can flip to activate the other filtered functionalities. The resource $\mS^{blind}$ then outputs the permitted leak $\ell^{\psi_A}$ at Bob's interface, and accepts two further inputs, a state $\psi_B$ and a map description $\proj{\mE}$. If $b=0$, it outputs the correct result $\mU(\psi_A)$ at Alice's interface; otherwise it outputs Bob's choice $\mE(\psi_{AB})$. 
\end{definition}

Next we give the definition for blindness and verifiability. The main difference is that the ideal functionality is no longer allowed to return an output of Bob's choice at Alice's interface. 

\begin{definition}[Definition 4.2 in~\cite{dunjko2013composable}]
The ideal DQC resource $\mS^{blind}_{verif}$ which provides correctness, blindness and verifiability takes an input $\psi_A$ at Alice's interface, and two filtered control bits $b$ and $c$ (set by default to $0$). If $b=0$, it outputs the correct result $\mU(\psi_A)$ at Alice's interface. If $b=1$, it outputs the permitted leak $\ell^{\psi_A}$ at Bob's interface, then reads the bit $c$, and conditioned on its value, it either outputs $\mU(\psi_A)$ or $\ket{ERR}$ at Alice's interface. 
\end{definition}

We provide the definitions of the ideal resources for two variants of random remote state preparation introduced in \cite{dunjko2016blind}:

\begin{definition}[Definition 11 in~\cite{dunjko2016blind}]
The ideal resource called the strong random remote state preparation $(\RSPS)$ has two interfaces A, and B, standing for Alice and Bob.
The resource first selects an angle $\theta$ (from the set of 8 states) chosen uniformly at random. Bob's interface has a filtered functionality comprising a bit $c$ which Bob can pre-set to zero or one, depending on whether he will behave maliciously. If Bob pre-sets $c=0$, the resource outputs the state $\proj{+_{\theta}}$ on Bob's interface.
If Bob pre-sets $c=1$, it awaits a description of a CPTP map $\mathcal{E}$ from Bob.  
Once the set is received, the functionality outputs $\mathcal{E}(\proj{+_{\theta}})$ at Bob's interface. In both cases, the resource outputs the angle $\theta$ at Alice's interface.
\end{definition}

\begin{definition}[Definition 8 in~\cite{dunjko2016blind}]
The ideal resource called the \emph{random remote blind state preparation for blindness} $\RSPB$ has two interfaces A, and B, standing for Alice and Bob.
The resource first selects a $\theta$ chosen uniformly at random. Bob's interface has a filtered functionality comprising a bit $c$ which Bob can pre-set to zero or one, depending on whether he will behave maliciously. If Bob pre-sets $c=0$, the resource outputs the state $\proj{+_{\theta}}$ on Bob's interface.
If Bob pre-sets $c=1$, it awaits the set $\{(\theta, \left[ \rho^{\theta}\right] )\}_{\theta}$ from Bob, where $\left[ \rho^{\theta}\right]$ denotes the classical description of a quantum state, with the property that $\rho^{\theta} + \rho^{\theta+\pi}  =\rho^{\theta'} + \rho^{\theta'+\pi}, \forall \theta,\theta'.$ 
If the states Bob inputs do not satisfy the property above, the ideal functionality ignores the set Bob has input and awaits a new valid set.
Once the set is received, the functionality outputs $\rho^{\theta}$ at Bob's interface. In both cases, the resource outputs the angle $\theta$ at Alice's interface.
\end{definition}

Briefly, the difference between \RSPS\ and \RSPB\ is as follows. In \RSPS\ if Alice accepts she receives a random angle $\theta$, and Bob receives $\mathcal{E}(\ket{+_{\theta}})$, for some CPTP map $\mathcal{E}$, that is \emph{independent of $\theta$}. In the weaker \RSPB, the possible states, $\{ \rho_{\theta} \}_{\theta \in \Theta}$, that Bob receives should satisfy
\begin{equation} \label{eq:rho_consistency}
\rho_{\theta} + \rho_{\pi + \theta} = \Id\;.
\end{equation}
Importantly, there is no requirement that Bob's state is the correct $\ket{+_{\theta}}$ state, up to to the action of an independent CPTP map. It is precisely the possibility that the deviation map can depend on $\theta$ that makes \RSPB\ unsuitable for verifiability (though it does provide blindness).

\subsubsection{Local criteria}

Dunjko et al.~\cite{dunjko2013composable} give ``local'' criteria, $\delta$-local-blindness and independent $\delta$-local-verifiability, that can be used to establish the security of a protocol for delegated quantum computation in the AC framework. Their definitions are geared to showing information-theoretic security. We adapt them to the setting of computational security, as follows. 

\begin{definition}\label{def:local-blind}
A DQC protocol provides $\delta$-local-blindness if for all efficient adversaries $\{\mF_i:\Lin(\mH_{CB})\to\Lin(\mH_{CB})\}$ in the protocol there is an efficient CPTP map $\mF:\Lin(\mH_B)\to\Lin(\mH_B)$ such that for all efficiently preparable $\psi_{ABR}$, 
\begin{equation}\label{eq:local-blind-0}
\Tr_A \circ \mP_{AB} (\psi_{ABR}) \approx_{c,\delta} \mF \circ \Tr_A(\psi_{ABR})\;,
\end{equation}
where $\mP_{AB}$ is the map corresponding to an execution of the protocol with an honest Alice and a Bob specified by the maps $\mF_i$ and $\circ$ denotes composition. When no map acts on a space, it is to be assumed that the identity is applied. 
\end{definition}

\begin{definition}\label{def:local-ver}
A DQC protocol provides independent $\delta$-local verifiability if for all efficient adversaries $\{\mF_i:\Lin(\mH_{CB})\to\Lin(\mH_{CB})\}$ in the protocol there exist efficient alternative maps $\{\mF'_i:\Lin(\mH_{CBB'})\to\Lin(\mH_{CBB'})\}$ such that the following hold:
\begin{enumerate}
\item For all efficient initial states $\psi_{AR_1}\otimes \psi_{R_2B}$ there is a $0\leq p^\psi \leq 1$ such that 
\begin{equation}\label{eq:local-ver-0}
 \rho_{AR_1}^\psi \approx_{c,\delta} p^\psi ( \mU \otimes \Id_{R_1})(\psi_{AR_1})+(1-p^\psi)\proj{ERR}\otimes \psi_{R_1}\;,
\end{equation}
where $\rho_{AR_1}^\psi$ is the final state of Alice and the first part of the reference system;
\item For all efficient initial states $\psi_{ABR}$,
\begin{equation}\label{eq:local-ver-1}
 \Tr_A \circ Q_{AB'} \circ \mP_{AB}(\psi_{ABR}) \approx_{c,\delta} \Tr_A \circ \mP'_{ABB'}\;,
\end{equation}
where $\mP_{AB}$ and $\mP'_{ABB'}$ are the maps corresponding to an execution of the protocol with an honest Alice and a Bob specified by the maps $\mF_i$ and $\mF_i'$ respectively, and $Q_{AB'}:\Lin(\mH_A)\to\Lin(\mH_{AB'})$ is a map which generates from $A$ a system $B'$ that contains a copy of the information whether Alice accepts or rejects.
\end{enumerate}
\end{definition}

The following is an analogue of~\cite[Corollary 6.9]{dunjko2013composable} for the computational setting. 

\begin{theorem}\label{thm:comp-blind}
If a DQC protocol $\pi$ implementing a unitary transformation $\mU$ is $\delta_c$-correct and provides $\delta_b$-local-blindness and independent $\delta_v$-local-verifiability for all efficient inputs that are classical on $A$,\footnote{We make the restriction that the input is classical for convenience; a more general version of the theorem, with some loss in parameters, applies to quantum inputs. See~\cite[Corollary 6.9]{dunjko2013composable} for details.} for some $\delta_c,\delta_b,\delta_v\geq 0$,  then it constructs $\mS_{verif}^{blind}$ computationally within $\eps = \max(\delta_c,2\delta_b+4\sqrt{\delta_v})$.
\end{theorem}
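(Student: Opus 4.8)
The plan is to verify the two conditions of Definition~\ref{def:sim} for the resource $\mS_{verif}^{blind}$: correctness and security. Correctness, $\pi_{AB}\mR \approx_{c,\eps} \mS_{verif}^{blind}$, is immediate from the hypothesis that $\pi$ is $\delta_c$-correct, since $\delta_c \le \eps$; no further work is needed there. The entire content of the theorem is therefore in the security condition, namely exhibiting an efficient simulator $\sigma$ with $\pi_A \mR \approx_{c,\eps} \mS_{verif}^{blind}\sigma$, and this is where the term $2\delta_b + 4\sqrt{\delta_v}$ is paid. The overall strategy mirrors the information-theoretic argument of~\cite[Corollary 6.9]{dunjko2013composable}, but with every trace-distance bound replaced by the computational notion $\approx_{c,\cdot}$. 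The crucial point enabling this replacement is that all of the maps produced by Definitions~\ref{def:local-blind} and~\ref{def:local-ver}, namely the blindness map $\mF$ and the alternative adversary maps $\mF_i'$, are \emph{efficient}, so that applying them to both sides of an $\approx_{c,\cdot}$ relation preserves computational indistinguishability up to a negligible additive loss.

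First I would construct the simulator $\sigma$ from the objects guaranteed by the two local criteria, applied to the (efficient) adversary that a distinguisher implements at Bob's interface. Given such an adversary $\{\mF_i\}$, independent local-verifiability supplies alternative maps $\{\mF_i'\}$ acting on the enlarged space $\mH_{CBB'}$, where $B'$ is engineered to hold a copy of the accept/reject flag, and local-blindness supplies a single map $\mF$ on $\mH_B$ that reproduces Bob's output from the permitted leak alone. The simulator, sitting at Bob's interface of $\mS_{verif}^{blind}$, first sets the filtered control bit $b=1$ to obtain the leak $\ell^{\psi_A}$, runs internally the alternative protocol specified by $\{\mF_i'\}$ seeded from the leak, reads the flag in $B'$, and forwards it as the control bit $c$ to $\mS_{verif}^{blind}$ (so that Alice receives $\mU(\psi_A)$ or $\ket{ERR}$ accordingly); it then outputs the resulting $B$-register as Bob's state.

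To prove the security bound I would interpolate between $\pi_A\mR$ and $\mS_{verif}^{blind}\sigma$ through a short chain of hybrids, bounding each step with the triangle inequality for $\approx_{c,\cdot}$. Starting from the real execution with the distinguisher playing Bob, I first use part~(2) of independent local-verifiability, equation~\eqref{eq:local-ver-1}, to pass from the genuine protocol maps $\mP_{AB}$ (augmented by $Q_{AB'}$ to expose the accept/reject bit) to the alternative maps $\mP'_{ABB'}$, at cost $\delta_v$. I then invoke part~(1), equation~\eqref{eq:local-ver-0}, to replace Alice's reduced output by the ideal mixture $p^\psi(\mU\otimes\Id)(\psi_{AR_1})+(1-p^\psi)\proj{ERR}\otimes\psi_{R_1}$; matching the scalar acceptance probability $p^\psi$ against the flag actually recorded in $B'$ is the step that incurs the square-root loss $O(\sqrt{\delta_v})$, via a gentle-measurement / Fuchs--van de Graaf type estimate, and summing over the accept and reject branches (and over the two directions of the comparison) accounts for the constant $4$. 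Finally, $\delta_b$-local-blindness, equation~\eqref{eq:local-blind-0}, lets me discard the residual dependence of Bob's output on $\psi_A$ in favour of $\mF$ acting on the leak, contributing $2\delta_b$, the factor $2$ coming from applying blindness within each of the two branches. The endpoint of the chain is exactly $\mS_{verif}^{blind}\sigma$, yielding the claimed error $\max(\delta_c, 2\delta_b+4\sqrt{\delta_v})$.

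I expect the principal difficulty to be the verifiability step, specifically the passage from the \emph{state}-level closeness guaranteed by~\eqref{eq:local-ver-0}--\eqref{eq:local-ver-1} to a \emph{simulator} that makes a binary accept/reject decision consistent with that state: one must argue that reading the flag out of $B'$ reproduces the mixture weight $p^\psi$ up to the stated error, which is exactly where both the square root and the efficiency of $\{\mF_i'\}$ (so that the decision procedure is itself admissible) become essential. A secondary, purely computational point of care is to track the negligible slacks: each $\approx_{c,\cdot}$ step hides a distinguisher-size-dependent negligible term, and I must ensure these accumulate over the constantly-many hybrids into a single negligible function absorbed into $\eps$, which holds because the number of hybrids and the sizes of $\mF$, $\mF_i'$ and $Q_{AB'}$ are all independent of the distinguisher.
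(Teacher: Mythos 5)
Your error bookkeeping is essentially the paper's: the $2\delta_b$ from applying blindness once in each branch, the $4\sqrt{\delta_v}$ from converting the computational closeness in~\eqref{eq:local-ver-0} into trace-norm closeness of the joint state (the paper does this with a swap test against the efficiently preparable $\mU(\psi_{AR_1})$ followed by Uhlmann's theorem and Lemma~\ref{lem:comp-inf}), and the splitting of a distinguisher across the accept/reject branches via Lemma~\ref{lem:dist}. That part of the proposal is sound.

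The genuine gap is in your simulator. In Definition~\ref{def:sim} the simulator $\sigma$ is quantified \emph{before} the distinguisher: a single fixed efficient $\sigma$ must satisfy $\pi_A\mathcal{R}\approx_{c,\eps}\mS_{verif}^{blind}\sigma$ against \emph{all} efficient distinguishers. Your simulator is constructed from the adversary's maps $\{\mF_i\}$ — which are part of the distinguisher's behavior and not known to $\sigma$ in advance — and, worse, from the alternative maps $\{\mF_i'\}$, which Definition~\ref{def:local-ver} only guarantees to \emph{exist}; there is no efficient procedure that produces $\{\mF_i'\}$ given $\{\mF_i\}$, so this object cannot be instantiated even nonuniformly. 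The paper instead uses the universal simulator of~\cite[Theorem 5.2]{dunjko2013composable}: $\sigma$ simply plays Alice's honest protocol against the external Bob using an arbitrary dummy input $\psi_B$ in place of $\psi_A$, and forwards the resulting accept/reject outcome as the control bit $c$. The maps $\mF$, $\mF_i'$ and the derived $\mF^{ok}=\Tr_{B'}\circ\mP_{B'}^{ok}\circ\mF'$, $\mF^{err}=\Tr_{B'}\circ\mP_{B'}^{err}\circ\mF'$ appear only in the \emph{analysis}: one first combines the two local criteria into the single local-blind-verifiability statement~\eqref{eq:comp-blind-1} (this is where your hybrid chain and the efficiency of $\mF'$ belong, so that $\mF^{ok},\mF^{err}$ are efficient and Lemma~\ref{lem:dist} applies), and then argues that a distinguisher between the real protocol and the dummy-input simulation would contradict~\eqref{eq:comp-blind-1}. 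You should restructure the argument around that fixed simulator; your estimates for the individual hybrid steps can then be reused essentially unchanged.
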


\begin{proof}
The proof is identical to the proof of~\cite[Corollary 6.9]{dunjko2013composable} except for ensuring that the simulator is computationally efficient. The first step is to combine local-blindness and local-verifiability to obtain the condition of local-blind-verifiability, i.e. the existence of maps $\mF^{ok}$ and $\mF^{ver}$ such that, using the notation from Definition~\ref{def:local-ver}, 
\begin{equation}\label{eq:comp-blind-1}
\rho_{AR_1R_2B}^\psi \approx_{c,\delta} (\mU\otimes \Id_{R_1R_2} \otimes \mF^{ok})(\psi_{AR_1}\otimes \psi_{R_2B}) + \proj{ERR} \otimes \psi_{R_1} \otimes (\Id_{R_2} \otimes \mF^{err})(\psi_{R_2B})\;.
\end{equation}
The maps $\mF^{ok}$ and $\mF^{ver}$ can be defined from $\mF'$ as in the proof of~\cite[Lemma 6.6]{dunjko2013composable} as
\[ \mF^{ok} = \Tr_{B'} \circ \mP_{B'}^{ok} \circ \mF'\;,\qquad \mF^{err} = \Tr_{B'} \circ \mP_{B'}^{err} \circ \mF'\;,\]
where $\mP_{B'}^{ok}$ and $\mP_{B'}^{err}$ are the projection on the corresponding states of $B'$. Clearly these maps can be implemented efficiently, given that $\mF'$ can. Next we need to show that any efficient distinguisher $\mD$ for~\eqref{eq:comp-blind-1} contradicts either local-blindness or local-verifiability. Write
\[ \rho_{AR_1R_2B}^\psi \,=\, \phi_{ARB}^{ok} + \proj{ERR} \otimes \phi_{RB}^{err}\;.\]
If $\mD$ is a distinguisher for~\eqref{eq:comp-blind-1} with advantage $\delta$, by Lemma~\ref{lem:dist} there exist distinguishers $D^{ok}$ between $\phi_{ARB}^{ok}$ and $(\mU\otimes \Id_{R_1R_2} \otimes \mF^{ok})(\psi_{AR_1}\otimes \psi_{R_2B}) $, and $D^{err}$ between $ \phi_{RB}^{err}$ and  $\psi_{R_1} \otimes (\Id_{R_2} \otimes \mF^{err})(\psi_{R_2B})$, with advantage $\delta_1$ and $\delta_2$ respectively such that $\delta_1+\delta_2 \geq \delta$.

Consider first the case of $\phi_{RB}^{err}$. Using~\eqref{eq:local-blind-0},~\eqref{eq:local-ver-1} and Lemma~\ref{lem:dist} it follows that $\delta_2 \leq \delta_b+ \delta_v$. Consider next $\phi_{ARB}^{ok}$. Using the triangle inequality,  $\mD$ must distinguish between $\phi^{ok}_{AR_1R_2B}$ and $p^\psi \mU(\psi_{AR_1} )\otimes \phi^{ok}_{R_2 B}$, and between $p^\psi \mU(\psi_{AR_1} )\otimes \phi^{ok}_{R_2 B}$ and $\mU(\psi_{AR_1})\otimes \mF_B^{ok}(\psi_{R_2B})$, with advantage $\delta'_1$ and $\delta'_2$ respectively such that $\delta'_1 + \delta'_2 \geq \delta_1$. 

Using property~\eqref{eq:local-ver-0} of local verifiability and Lemma~\ref{lem:dist} it follows that no efficient distinguisher can distinguish $\phi_{AR_1}^{ok}$ from $p^\psi \mU(\psi_{AR_1})$ with advantage larger than $\delta_v$. Using that the state $\mU(\psi_{AR_1})$ is efficiently preparable, a specific distinguisher would be to perform a swap test with that state. It follows that $\Tr(\mU(\psi_{AR_1})\phi_{AR_1}^{ok}) \leq 2\delta_v $. Using the relation between fidelity and trace distance and Uhlmann's theorem it follows that $\|\phi_{AR_1R_2B}^{ok} - p^\psi \mU(\psi_{AR_1}) \otimes \phi_{R_2 B}^{ok} \|_1 \leq \sqrt{4\delta_v}$, so by Lemma~\ref{lem:comp-inf} it holds that $\delta'_1 \leq \sqrt{4\delta_v}$ as well. 

Finally, using again~\eqref{eq:local-blind-0},~\eqref{eq:local-ver-1} and Lemma~\ref{lem:dist} it follows that $\delta'_2 \leq \delta_b+\delta_v$. 

Having established~\eqref{eq:comp-blind-1} for some $\delta \leq 2\delta_b+4\sqrt{\delta_v}$, it remains to show the existence of a simulator $\sigma_B$ such that $\pi \approx_{c,\delta} \mS_{verif}^{blind}\sigma_B$. The simulator is identical to the simulator constructed in the proof of~\cite[Theorem 5.2]{dunjko2013composable}: the simulator simply interacts with Bob as Alice would in the protocol $\pi$, using an arbitrary input $\psi_B$ instead of Alice's real input $\psi_A$. This simulator is clearly efficient. The remainder of the argument is exactly the same, and we omit the details. 
\end{proof}

\section{Remote state preparation: real protocol}
\label{sec:real-protocol}
In this section we describe and analyze our implementation of the ideal resource for random remote state preparation with verification, \RSPV. 
We refer to this implementation as the real protocol (to be contrasted with the ideal protocol/functionality, introduced in the next section).
As mentioned in the introduction, our implementation builds upon the randomness certification protocol from~\cite{brakerski2018cryptographic} by, informally, performing the protocol modulo $8$ (instead of modulo $2$) and adding tests inspired from the study of quantum random access codes to verify that the prover prepares the right state, up to a local isometry. 

We start by recalling the definition of a QRAC in Section~\ref{sec:qrac}, and show a new result about rigidity of $2\mapsto 1$ QRAC. In Section~\ref{sec:qubit-test} we introduce the main building block for our protocol; we analyze its soundness and rigidity properties in Section~\ref{sec:qubit-rigidity}. Finally, we describe and analyze our protocol for \RSPV\ in Section~\ref{sec:protocol}.

\subsection{Quantum random access codes}
\label{sec:qrac}

\begin{definition}
A $2\mapsto 1$ quantum random access code (QRAC) is specified by four single-qubit density matrices $\{\phi_u\}_{u\in \{1,3,5,7\}}$ and two single-qubit observables $X_0$ and $X_2$. For $u\in \{1,3,5,7\}$ let $u_0,u_2\in\{0,1\}$ be such that $u_0 = 0$ if and only if $u\in\{1,7\}$ and $u_2 = 0$ if and only if $u\in\{1,3\}$.\footnote{The motivation for the somewhat obscure indexing scheme will become clear later.} The success probability of the QRAC is defined as  
\[ \frac{1}{4} \sum_{u\in\{1,3,5,7\}} \frac{1}{2} \sum_{i\in\{0,2\}}\Tr\big(X_{i}^{u_i} \phi_u\big) \;.\]
\end{definition}

Let $\opt_Q = \frac{1}{2} + \frac{1}{2\sqrt{2}}$. As shown in~\cite[Theorem 3]{ambainis2008quantum}, the highest possible success probability of a single-qubit $2\mapsto 1 $ QRAC is $\opt_Q$. More generally, we have the following rigidity statement. 

\begin{lemma}\label{lem:qrac}
Let $\{\phi_u\}$ and $X_0,X_2$ be a $2\mapsto 1$ QRAC whose success probability is at least $(1-\delta)\opt_Q$, for some $0\leq \delta < 1$.
Then 
\[ \frac{1}{4}\sum_{u\in\{1,3,5,7\}} \Tr\big( \{X_0,X_{2}\}^2 \phi_u \big) \,=\,O({\delta})\;.\]
\end{lemma}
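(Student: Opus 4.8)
The plan is to reduce everything to the Bloch-sphere picture, where the anticommutator becomes a scalar and the QRAC value becomes a simple inner-product expression. I write each state as $\phi_u = \frac12(\Id + \vec a_u\cdot\vec\sigma)$ with $|\vec a_u|\le 1$, and each observable as $X_i = \vec x_i\cdot\vec\sigma$ with $|\vec x_i| = 1$ (an observable squaring to $\Id$ on $\C^2$ is a traceless reflection). The first, purely algebraic observation is that $\{X_0,X_2\} = 2(\vec x_0\cdot\vec x_2)\Id$, so $\{X_0,X_2\}^2 = 4(\vec x_0\cdot\vec x_2)^2\,\Id$ and, using $\sum_u\Tr(\phi_u) = 4$, the quantity to be bounded equals exactly $4(\vec x_0\cdot\vec x_2)^2$. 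Thus the lemma is equivalent to the geometric statement $(\vec x_0\cdot\vec x_2)^2 = O(\delta)$: the two measurement axes are nearly orthogonal.

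Next I would rewrite the success probability. Using $\Tr(X_i\phi_u) = \vec x_i\cdot\vec a_u$ and $X_i^{u_i} = \frac12(\Id + (-1)^{u_i}X_i)$, a direct computation gives $p = \frac12 + \frac1{16}(\vec x_0\cdot\vec A + \vec x_2\cdot\vec B)$, where $\vec A = \vec a_1 + \vec a_7 - \vec a_3 - \vec a_5$ and $\vec B = \vec a_1 + \vec a_3 - \vec a_5 - \vec a_7$. Setting $\vec p = \vec a_1 - \vec a_5$ and $\vec q = \vec a_7 - \vec a_3$ gives $\vec A = \vec p + \vec q$ and $\vec B = \vec p - \vec q$. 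The optimality bound of~\cite{ambainis2008quantum} is recovered as a chain: $\vec x_0\cdot\vec A + \vec x_2\cdot\vec B \le |\vec A| + |\vec B| \le \sqrt{2(|\vec A|^2 + |\vec B|^2)} = 2\sqrt{|\vec p|^2 + |\vec q|^2} \le 2\sqrt8 = 4\sqrt2$, using $|\vec x_i| = 1$, Cauchy--Schwarz, the parallelogram identity $|\vec A|^2 + |\vec B|^2 = 2|\vec p|^2 + 2|\vec q|^2$, and $|\vec p|,|\vec q|\le 2$; this yields $p\le\opt_Q$.

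The heart of the proof is to make this chain robust. The hypothesis $p \ge (1-\delta)\opt_Q$ gives $\vec x_0\cdot\vec A + \vec x_2\cdot\vec B \ge 4\sqrt2 - O(\delta)$, so every inequality above is saturated up to $O(\delta)$. I would introduce the three nonnegative deficits --- the two alignment deficits $|\vec A| - \vec x_0\cdot\vec A$ and $|\vec B| - \vec x_2\cdot\vec B$, and the magnitude deficit $4\sqrt2 - |\vec A| - |\vec B|$ --- and note that their sum is $O(\delta)$, hence each is $O(\delta)$. From the magnitude deficit one extracts, in order: $|\vec p|^2, |\vec q|^2 \ge 4 - O(\delta)$ (hence $|\vec A|,|\vec B|$ are bounded below by a constant and $\big||\vec A|-|\vec B|\big| = O(\sqrt\delta)$), and therefore $\vec A\cdot\vec B = |\vec p|^2 - |\vec q|^2 = O(\delta)$, giving $\hat A\cdot\hat B = O(\delta)$ for the unit vectors $\hat A = \vec A/|\vec A|$, $\hat B = \vec B/|\vec B|$. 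Each alignment deficit being $O(\delta)$ forces $1 - \vec x_0\cdot\hat A = O(\delta)$, i.e.\ $|\vec x_0 - \hat A|^2 = O(\delta)$, and similarly for $\vec x_2$. Combining, $\vec x_0\cdot\vec x_2 = \hat A\cdot\hat B + O(\sqrt\delta) = O(\sqrt\delta)$, so $(\vec x_0\cdot\vec x_2)^2 = O(\delta)$, as required.

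The step I expect to require the most care is the robustness bookkeeping: converting near-saturation of the $|\vec x_0 - \hat A|$-type inequalities into quantitative angle bounds inevitably passes through $\sqrt\delta$-sized perturbations of the vectors, and one must check that these enter the final estimate only through the square, so that the target quantity is genuinely $O(\delta)$ rather than $O(\sqrt\delta)$. Tracking which quantities are controlled at order $\delta$ versus order $\sqrt\delta$ --- in particular that $\vec A\cdot\vec B$ and the alignment deficits are $O(\delta)$ while $\vec x_0\cdot\vec x_2$ itself is only $O(\sqrt\delta)$ --- is the only delicate part; the remainder is elementary Bloch-sphere geometry.
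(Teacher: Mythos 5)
Your proof is correct, and it rests on the same two pillars as the paper's: the observation that $\{X_0,X_2\}=2(\vec{x}_0\cdot\vec{x}_2)\Id$ is a scalar, so the claim reduces to near-orthogonality of the two measurement axes, together with a saturation argument based on the parallelogram law. The execution differs in a way worth noting. The paper cites the formula of Ambainis et al.\ for the success probability optimized over states for fixed measurements, namely $\tfrac12(1+\tfrac{S}{8})$ with $S=2\|v_0+v_2\|+2\|v_0-v_2\|$; near-optimality then forces $\|v_0+v_2\|+\|v_0-v_2\|\geq 2\sqrt2-O(\delta)$, and since $v_0,v_2$ are \emph{unit} vectors the exact identity $\|v_0+v_2\|^2+\|v_0-v_2\|^2=4$ immediately gives $4|v_0\cdot v_2|=\big|\|v_0+v_2\|^2-\|v_0-v_2\|^2\big|=O(\sqrt\delta)$ with no bookkeeping. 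You instead re-derive the optimality bound from scratch and apply the saturation argument to the state-difference vectors $\vec{A}=\vec{p}+\vec{q}$, $\vec{B}=\vec{p}-\vec{q}$ (for which $|\vec{p}|,|\vec{q}|\le 2$ is only an inequality, hence the deficit bookkeeping), and then need the extra alignment step $|\vec{x}_0-\hat{A}|,|\vec{x}_2-\hat{B}|=O(\sqrt\delta)$ to transfer near-orthogonality from $\hat{A},\hat{B}$ to the measurement axes. The payoff is a self-contained proof that also locates the near-optimal states; the cost is length and the $\delta$-versus-$\sqrt\delta$ tracking you flag, which you do get right (only the final $\vec{x}_0\cdot\vec{x}_2=O(\sqrt\delta)$ lives at the square-root scale, and it enters the target squared). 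One small point: your parenthetical that a binary observable on $\C^2$ is a traceless reflection overlooks $\pm\Id$; that degenerate case caps the success probability at $3/4<\opt_Q$, so it is excluded once $\delta$ is below a fixed constant and the $O(\delta)$ bound is trivial otherwise --- worth a sentence, but not a gap.
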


\begin{proof}
Assume without loss of generality that both observables $X_0$ and $X_2$ are in the plane specified by $\sigma_X$ and $\sigma_Y$. Let $v_0 = (x_0,y_0,0)$ and $v_2 = (x_1,y_1,0)$ be the Bloch sphere representation of the eigenvalue-$1$ eigenvector of $X_0$ and $X_2$ respectively, i.e. real unit vectors such that for $i\in\{0,2\}$, $X_i = x_i\sigma_X+y_i\sigma_Y$. As shown in~\cite[Section 3.4]{ambainis2008quantum}, the optimal success probability of any $2\mapsto 1$ QRAC based on $X_0$ and $X_2$ is $\frac{1}{2}(1+\frac{S}{8})$, where $S = 2\|v_0+v_2\|+2\|v_0-v_2\|$. In order for the QRAC to achieve a success probability of $(1-\delta)\opt_Q$ it is necessary that $\|v_0+v_2\|+\|v_0-v_2\| \geq 2\sqrt{2}-16\delta$. Using $4=\|v_0+v_2\|^2+\|v_0-v_2\|^2$ it follows that $|\|v_0+v_2\|^2 - \|v_0-v_2\|^2| = O(\sqrt{\delta})$, thus $|v_0\cdot v_2|=O(\sqrt{\delta})$. Since $\{X_0,X_2\} = 2iv_0\cdot v_2 \Id$, we obtain $\|\{X_0,X_2\}\|^2=O(\delta)$. 
\end{proof}

The following simple test will be used later to estimate the success probability of a QRAC. We introduce it here to set some notation. 

\begin{definition}\label{def:qrac-test}
Let $\{\phi_u\}_{u\in\{0,1,2,\ldots,7\}}$ be arbitrary density matrices. 
In the \emph{QRAC test}, the prover is given $\phi_u$ for a uniformly random $u\in\{1,3,5,7\}$. The verifier sends a uniformly random $\theta\in\{0,2\}$ to the prover, who replies with a bit $v$. If $v \neq u_\theta$, the verifier sets $flag\leftarrow fail_Q$.
\end{definition}

\subsection{The qubit preparation test}
\label{sec:qubit-test}

The \emph{qubit preparation test} described in Figure~\ref{fig:qubit} forms the main building block of our remote state preparation protocol. The test relies on an extended variant of the family of claw-free functions used in~\cite{brakerski2018cryptographic}, introduced in~\cite{mahadev2018classical} and called an \emph{extended noisy trapdoor claw-free family} (ENTCF). We recall the definition of an ENTCF family $(\mF,\mG)$ in Appendix~\ref{sec:entcf}, where we also present the main properties needed. For the purposes of this section it is sufficient to think of both $\mF$ and $\mG$ as families of pairs of functions, $(f_{k,0},f_{k,1})$ or $(g_{k,0},g_{k,1})$, where $k$ denotes a public key, such that both functions in an $f$-pair (also called claw-free pair) are bijections with the same domain and range, while both functions in a $g$-pair (also called injective pair) are bijections with the same domain but non-intersecting ranges, and such that moreover given a key $k$ it is computationally impossible to distinguish if $k$ corresponds to a claw-free or an injective pair.

\begin{figure}[htbp]
\rule[1ex]{16.5cm}{0.5pt}\\
Let $\lambda$ be a security parameter.
\begin{enumerate}
\item The verifier selects $G\leftarrow_U \{0,1\}$. If $G=0$ they sample a key $(k,t_k)\leftarrow \Gen_\mF(1^\lambda)$. If $G=1$ they sample $(k,t_k)\leftarrow \Gen_\mG(1^\lambda)$. The verifier sends $k$ to the prover and keeps the trapdoor information $t_k$ private. 
\item The prover returns a  $y \in \mY$ to the verifier. If $G=0$, for $b\in\{0,1\}$ the verifier uses the trapdoor to compute $\hat{x}_b\leftarrow \Inv_\mF(t_k,b,y)$. If $G=1$, the verifier computes $(\hat{b},\hat{x}_{\hat{b}})\leftarrow \Inv_\mG(t_k,y)$.
\item The verifier performs either of the following with equal probability.
\begin{enumerate}
\item (\emph{preimage test}) The verifier requests a preimage. The prover returns 
$(b,x)\in\{0,1\}\times \mX$. If $G=0$ and $x\neq \hat{x}_b$, or if $G=1$ and $(b,x)\neq(\hat{b},\hat{x}_{\hat{b}})$, the verifier sets $flag\leftarrow fail_p$.
\item (\emph{measurement test}) The verifier requests an equation $d\in \Z_8^w$ from the prover. If $G=0$, the verifier computes $\hat{\theta}=\hat{\theta}(d)$ and $\hat{v}=\hat{v}(d)$. 
\item[] The verifier performs either of the following tests with equal probability:
\begin{enumerate}
\item[(i)] (\emph{$Z$-measurement test}) The verifier sends the label $Z$ to the prover. The prover replies with a bit $b$. If $G=1$ and $b\neq \hat{b}$, the verifier sets $flag\leftarrow fail_Z$. 
\item[(ii)] (\emph{$X_\theta$-measurement test}) The verifier selects $\theta\leftarrow_U\{0,1,2,3\}$ and sends $\theta$ to the prover. The prover responds with a bit $v$. If $G=0$, depending on the value of $\theta$, the verifier performs one the following tests:
\begin{enumerate}
\item[A.] If $\theta=\hat{\theta}$ but $v\neq \hat{v}$, the verifier sets $flag\leftarrow fail_X$. 
\item[B.] If $\theta\in\{0,2\}$ and $\hat{\theta}\in \{1,3\}$  the verifier performs the QRAC test (Definition~\ref{def:qrac-test}). 
%\item[C.] If $\theta\in\{1,3\}$ the verifier performs the tomography test (Definition~\ref{def:qrac-test}). 
\end{enumerate}
\end{enumerate}
\end{enumerate}
\end{enumerate}
\rule[1ex]{16.5cm}{0.5pt}
\caption{The qubit preparation test.}
\label{fig:qubit}
\end{figure}

We first show a completeness property of the qubit preparation test. 

\begin{lemma}[Completeness]\label{lem:qubit-completeness}
There is an efficient quantum prover that is accepted with probability negligibly close to $1$ in the security parameter $\lambda$ in each of the preimage test and part A. of the $X_\theta$-measurement test, and with probability negligibly close to $\opt_Q$ in part B. of the $X_\theta$-measurement test (Figure~\ref{fig:qubit}). Moreover, in case the verifier selects $G=0$ and a key $k$, after having returned a $y\in \mY$ in step 2. and an equation $d\in\Z_8^w$ at the beginning of step 3. the state of the prover is the state 
\begin{equation}\label{eq:state-2}
 \frac{1}{\sqrt{2}} \big(  e^{\frac{2i\pi}{8} d\cdot J(x_0)}\ket{0} \ket{x_0} + e^{\frac{2i\pi}{8} d\cdot J(x_1)}\ket{1}\big)\;,
\end{equation}
where $x_0,x_1$ are the two preimages of $y$ under $f_{k,0}$ and $f_{k,1}$ respectively and $J$ is a simple map described in Appendix~\ref{sec:entcf}. 
\end{lemma}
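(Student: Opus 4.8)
The plan is to exhibit a single explicit efficient honest prover and to verify each claimed acceptance probability by direct computation, treating the ENTCF family according to the properties recalled in Appendix~\ref{sec:entcf}. The preparation phase is common to both branches. On receiving the key $k$, the prover prepares $\frac{1}{\sqrt{2}}\sum_{b\in\{0,1\}}\ket{b}\frac{1}{\sqrt{|\mX|}}\sum_{x\in\mX}\ket{x}$ over a control qubit and a preimage register, coherently evaluates the function pair into an image register (obtaining, because the family is \emph{noisy}, a superposition $\sum_y \sqrt{(f_{k,b}(x))(y)}\,\ket{y}$ rather than a single image), and measures the image register to obtain $y\in\mY$, which it returns to the verifier. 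By the (approximate) claw-free, resp.\ injective, properties of the family, up to a negligible failure this collapses the prover's state to $\frac{1}{\sqrt 2}(\ket{0}\ket{x_0}+\ket{1}\ket{x_1})$ when $G=0$, where $x_0,x_1$ are the two preimages of $y$, and to $\ket{\hat b}\ket{\hat x_{\hat b}}$ when $G=1$. Every negligible term in the statement originates here and in the correctness of function evaluation.

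The preimage test is then immediate: the prover measures both registers in the computational basis and returns $(b,x_b)$ (resp.\ $(\hat b,\hat x_{\hat b})$), which is a valid preimage except with negligible probability, so acceptance is negligibly close to $1$. For the measurement test, the prover applies the $\Z_8$ quantum Fourier transform to the preimage register — reading it, via the map $J$, as an element of $\Z_8^w$ — and measures, obtaining $d\in\Z_8^w$ which it returns. A one-line computation then pins down the residual qubit as exactly the state~\eqref{eq:state-2}: expanding $F_{\Z_8}\ket{x_b}=|\Z_8^w|^{-1/2}\sum_d \omega^{d\cdot J(x_b)}\ket{d}$ with $\omega=e^{2i\pi/8}$ and postselecting on $d$ attaches the phase $\omega^{d\cdot J(x_b)}$ to each branch. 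Up to the global phase $\omega^{d\cdot J(x_0)}$ this is $\ket{+_\theta}$ for $\theta=\frac{\pi}{4}\,d\cdot(J(x_1)-J(x_0))\bmod 8 \in\Theta$, so the prover holds a genuine $\Theta$-angle state.

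The three final tests are verified on the single-qubit state $\ket{+_{\hat\phi\pi/4}}$, writing $\hat\phi=d\cdot(J(x_1)-J(x_0))\bmod 8$ and $\hat\phi=\hat\theta+4\hat v$ with $\hat\theta\in\{0,1,2,3\}$, $\hat v\in\{0,1\}$. For the $Z$-measurement test (relevant only when $G=1$) the prover measures the control qubit and returns $\hat b$, passing with certainty. For part~A of the $X_\theta$-measurement test, when the verifier's basis matches, $\theta=\hat\theta$, the candidate states $\ket{+_{\hat\theta\pi/4}}$ and $\ket{+_{\hat\theta\pi/4+\pi}}$ are orthogonal and the prover's state is exactly the $\hat v$-th one, so measuring in basis $\theta$ returns $\hat v=\hat v(d)$ deterministically. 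For part~B, the condition $\hat\theta\in\{1,3\}$ forces the angle to be an odd multiple of $\pi/4$, so the state is one of $\{\ket{+_{u\pi/4}}\}_{u\in\{1,3,5,7\}}$, and the prover runs the QRAC test of Definition~\ref{def:qrac-test} with the explicit optimal strategy of~\cite{ambainis2008quantum}: decode with $\sigma_X$ when $\theta=0$ and $\sigma_Y$ when $\theta=2$. Evaluating $\frac14\sum_{u}\frac12\sum_{i\in\{0,2\}}\Tr(X_i^{u_i}\proj{+_{u\pi/4}})$ yields $\cos^2(\pi/8)=\opt_Q$, as required.

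The bulk of the work, and the only place requiring care rather than routine algebra, is the preparation step: I must track how the \emph{noisy} and only \emph{approximately} claw-free nature of the ENTCF family, together with the encoding map $J$, perturbs the idealized claw state and thereby accounts for the negligible error terms. Once the collapsed state is established as $\frac{1}{\sqrt 2}(\ket 0\ket{x_0}+\ket 1\ket{x_1})$ up to negligible trace distance, every subsequent acceptance probability follows from the Fourier identity above and elementary single-qubit trigonometry; no rigidity argument is needed for completeness.
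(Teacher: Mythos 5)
Your proposal is correct and follows essentially the same route as the paper's proof: exhibit the honest prover that prepares the claw superposition, measures $y$, applies the $\Z_8$ Fourier transform via $J$ to obtain $d$ and the residual qubit $\ket{+_\theta}$, and then answers each test with the obvious single-qubit measurement. You are somewhat more explicit than the paper about the noisy/approximate nature of the ENTCF (which the paper handles via a footnote and a citation to the Brakerski et al.\ construction) and about verifying the QRAC value $\cos^2(\pi/8)=\opt_Q$, but these are elaborations of the same argument, not a different one.
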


\begin{proof}
The honest strategy for the prover is as follows. Upon receipt of a key $k$ that specifies a pair of functions $f_{k,0}$ and $f_{k,1}$ the prover prepares a state $\ket{+} = \frac{1}{\sqrt{2}}\ket{0}+\frac{1}{\sqrt{2}}\ket{1}$, adjoins a uniform superposition over all $x\in \mX$, evaluates $f$ in superposition, and measures the outcome $y$. The result is the state 
\begin{equation}\label{eq:state-1}
 \frac{1}{\sqrt{2}} \big( \ket{0} \ket{x_0} + \ket{1}\ket{x_1}\big)\;,
\end{equation}
where $x_0$ and $x_1$ are the unique preimages of $y$ under $f_{k,0}$ and $f_{k,1}$ respectively.\footnote{Here for clarity we ignore the fact that $f_{k,b}$ ranges over the set of \emph{distributions} over $\mY$, rather than over $\mY$ itself. For details on how the prover can construct the state~\eqref{eq:state-1} with success probability exponentially close to $1$ in $\lambda$, we refer to~\cite{brakerski2018certifiable}.} (In case $G=1$ the state further collapses to a single $\ket{b,x_b}$.)

If the verifier requests a preimage, the prover measures in the computational basis and returns $(b,x_b)$. If the verifier requests an equation, the prover first evaluates the map $J$ on the second register and then measures all but the first register in the Fourier (over $\Z_8$) basis to obtain a string $d\in\Z_8^w$. The resulting state is 
\begin{equation}\label{eq:state-3}
 \frac{1}{\sqrt{2}} \big(  e^{\frac{2i\pi}{8} d\cdot J(x_0)}\ket{0} \ket{x_0} + e^{\frac{2i\pi}{8} d\cdot J(x_1)}\ket{1}\big)\;,
\end{equation}
where the inner products are taken modulo $8$. Finally, the prover measures the qubit in~\eqref{eq:state-2} in the requested basis, $\sigma_Z$ in the case of a $Z$-measurement or $\sigma_{X,\theta\frac{\pi}{4}}$ in the case of an $X_\theta$-measurement, to produce its answer. 
\end{proof}

\subsection{Rigidity}
\label{sec:qubit-rigidity}

In this section we show that any prover, or \emph{device}, that succeeds with probability close to optimum in the qubit preparation test (Figure~\ref{fig:qubit}) must perform measurements that obey a form of rigidity. We generally use $\eps$ to denote the failure probability of the device in the test or one of its parts (the definition of $\eps$ will always be specified in context), and always assume that $\eps$ is larger than any term of the form $\negl(\lambda)$. The main result of the section is the following. 

\begin{lemma} \label{lem:rigidity}
Let $\eps>0$ and $\lambda$ a security parameter assumed to be chosen large enough so that $\eps =\omega( \negl(\lambda))$. Suppose that a quantum polynomial-time prover succeeds in the qubit preparation test with probability at least $1-\eps$. Let $Z$ be the observable associated with the prover's strategy in step (b)(i) of the protocol, and $\{X_\theta\}_{\theta\in \{0,1,2,3\}}$  the observables associated with the prover's strategy in step (b)(ii). 

Then there exists a universal constant $c > 0$, a $\delta = O(\eps^c)$,  an efficiently computable isometry $\Phi:\mH_B \to \C^2\otimes \mH_{B'}$, where $\mH_B$ is the Hilbert space on which the prover's observables act,  and a state $\ket{\aux}\in \mH_{B'}\otimes  \mH_{B''}$, where $mH_{B'}$ is a purifying system for Bob's initial state in $\mH_{B}$, such that under the isometry $\Phi$ the following hold:
\begin{itemize}
\item In case $G=1$, the joint state of the bit $b$ and the prover's post-measurement state in step (b) of the protocol, after having returned an equation $d$, is computationally indistinguishable from a state that is within $\delta$ trace distance of  
\[\sum_b \proj{b} \otimes \proj{b} \otimes \proj{\aux} \;.\]
\item In case $G=0$, the joint state of the angle $\hat{\theta}$, the bit $\hat{v}$ and the prover's post-measurement state in step (b) of the protocol, after having returned an equation $d$,  is computationally indistinguishable from a state that is within $\delta$ trace distance of  
\[\sum_{\theta \in \{0, 1, 2, 3\} ,v \in \{0, 1\}} \proj{\theta}\otimes \proj{v} \otimes \proj{+_{\theta\frac{\pi}{4}+v\pi}} \otimes \proj{\aux} \;.\]
\end{itemize}
\end{lemma}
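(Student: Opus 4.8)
The plan is to establish the rigidity statement of Lemma~\ref{lem:rigidity} by combining three ingredients: (i) the randomness-certification/rigidity machinery from~\cite{brakerski2018cryptographic}, adapted to the mod-$8$ setting, which forces the prover to hold a qubit on which a pair of anticommuting observables $Z$ and $X_{\theta}$ act; (ii) the QRAC rigidity result of Lemma~\ref{lem:qrac}, which pins down the relative geometry of the $X_\theta$ observables in the $(\sigma_X,\sigma_Y)$-plane; and (iii) the structural normal form of Lemma~\ref{lem:pauli-c}, which produces the actual isometry $\Phi$ mapping the prover's observables onto single-qubit Paulis tensored with side information. The completeness Lemma~\ref{lem:qubit-completeness} supplies the target state~\eqref{eq:state-2}, so the work is entirely on the soundness side.

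\textbf{Step 1 (anticommutation from the preimage/measurement tests).} First I would show that passing the preimage test and the $Z$- and $X_\theta$-measurement tests with probability $1-\eps$ forces the approximate relation $\{Z, X_\theta\} \approx_{\delta} 0$ for the relevant $\theta$, where $\delta = O(\eps^c)$ for some constant $c>0$. This is where the hardness of LWE enters: the adaptive-hardcore-bit argument of~\cite{brakerski2018cryptographic}, lifted from $\Z_2$ to $\Z_8$, guarantees that the prover cannot simultaneously produce a valid preimage $(b,x_b)$ and a valid equation $d$ that predicts the phase, so the two bases must be (computationally) incompatible, which is the operational meaning of anticommutation on the prover's post-measurement state. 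I would invoke this essentially as a black box, citing the extended hardcore-bit statement and its consequence as recalled in Appendix~\ref{sec:entcf}.

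\textbf{Step 2 (applying the normal form and the QRAC constraint).} With $\{Z,X_\theta\}\approx_\delta 0$ in hand for two choices of $\theta$ (say $\theta=0$ and one of $\theta\in\{1,3\}$, matching the two clauses of the $X_\theta$-test), I would apply Lemma~\ref{lem:pauli-c} to obtain the isometry $\Phi$, an explicit $\C^2$ factor, and commuting operators $A_X,A_Y$ with $A_X^2+A_Y^2=\Id$ such that $Z\simeq_{\delta'}\sigma_Z$, $X_0\simeq_{\delta'}\sigma_X$, and the remaining $X_\theta$ maps to $\sigma_X\otimes A_X + \sigma_Y\otimes A_Y$. To collapse the side observables $A_X,A_Y$ to scalars I would use part B.\ of the $X_\theta$-test together with Lemma~\ref{lem:qrac}: success close to $\opt_Q$ forces $\{X_0,X_2\}\approx_\delta 0$, i.e.\ the two measurement directions are orthogonal in the Bloch plane, which together with the $\theta=\hat\theta$ consistency check (part A.) rigidly fixes each $X_\theta$ to be $\sigma_{X,\theta\pi/4}$ up to the isometry. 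This is precisely what rules out the spurious $\ket{\psi_{3\theta}}$ alternative flagged in the introduction.

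\textbf{Step 3 (assembling the post-measurement state and a blindness/side-information argument).} Finally I would read off the form of the prover's post-measurement state after it returns $d$. In the $G=1$ case, injectivity of the $g$-pair forces the computational-basis outcome $\hat b$, giving the claimed $\sum_b \proj{b}\otimes\proj{b}\otimes\proj{\aux}$ form, where $\ket{\aux}$ collects the purification together with the now-decoupled side registers. In the $G=0$ case, the rigid identification of $Z$ and all $X_\theta$ with genuine single-qubit Paulis means the qubit in the $\C^2$ factor must be $\ket{+_{\theta\pi/4+v\pi}}$, matching~\eqref{eq:state-2}; the index $v$ accounts for the $\pi$ phase ambiguity. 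The ``computationally indistinguishable'' qualifier, rather than exact equality, is needed because the key indistinguishability ($G=0$ vs.\ $G=1$) and the hardcore-bit extraction hold only against efficient distinguishers, so I would phrase the conclusion as a hybrid argument: swapping the real device output for the idealized state changes the distinguishing advantage by at most $\delta + \negl(\lambda)$. \textbf{The main obstacle} I anticipate is Step 2, specifically showing that the \emph{same} isometry $\Phi$ simultaneously puts all of $Z$ and the four $X_\theta$ observables into the desired single-qubit form while the QRAC bound only directly constrains a pair of them; I would handle this by a careful bookkeeping argument propagating the $O(\sqrt\delta)$ errors from Lemma~\ref{lem:pauli-c} and Lemma~\ref{lem:qrac} through the tensor decomposition, absorbing all losses into the single polynomial bound $\delta=O(\eps^c)$.
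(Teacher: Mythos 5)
Your plan follows the paper's proof essentially step for step: anticommutation of $Z$ with each $X_\theta$ via the adaptive ($\Z_8$) hardcore bit and collapsing properties, then the QRAC test plus Lemma~\ref{lem:qrac} to force $\{X_0,X_2\}\approx 0$ after a first application of Lemma~\ref{lem:pauli-c} (which is exactly how the paper resolves the ``same isometry for all observables'' obstacle you flag), and finally the assembly of the post-measurement states. The only piece you leave implicit that the paper makes explicit is why the auxiliary register is independent of $(\theta,v)$: the paper combines the computational indistinguishability of the states $\phi_\theta$ (Lemma~\ref{lem:d-uniform}) with the fact that the qubit components $\proj{+_{\theta\frac{\pi}{4}+v\pi}}$ are at constant trace distance for distinct $(\theta,v)$, so any dependence of $\ket{\aux_{\theta,v}}$ on $(\theta,v)$ would yield an efficient distinguisher contradicting the hardcore-bit property.
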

The proof of Lemma~\ref{lem:rigidity} is given at the end of Section~\ref{sec:xpartb}. 
We start by introducing notation to model the behavior of an arbitrary prover in the test. 

\subsubsection{Devices}

\begin{definition}\label{def:device}
A device $D = (\phi,\Pi,M,Z,\{X_\theta\}_{\theta\in\{0,1,2,3\}})$ is specified by the following. 
\begin{enumerate}
\item A (not necessarily normalized) positive semidefinite $\phi \in \Pos(\mH_\reg{D}\otimes \mH_\reg{Y})$. Here $\mH_\reg{D}$ is an arbitrary space private to the device, and $\mH_{\reg{Y}}$ is a space of the same dimension as the cardinality of the set $\mY$, also private to the device. (We think of $\phi$ as the state of the device immediately prior to returning the commitment string $y$. In particular, $\phi$ implicitly depends on the key $k\in \mK_\mF\cup \mK_\mG$.)
 For every $y\in\mY$, define
$$\phi_y \,=\, (\Id_{\reg{D}} \otimes \bra{y}_\reg{Y})\,\phi\,(\Id_{\reg{D}} \otimes \ket{y}_\reg{Y})\,\in\,\Pos(\mH_\reg{D})\;.$$
Note that $\phi_y$ is not normalized, and $\sum_{y\in\mY} \Tr(\phi_y)=\Tr(\phi)$. 
\item For every $y\in\mY$,
\begin{enumerate}
\item A projective measurement $\{\Pi_y^{(b,x)}\}$ on $\mH_\reg{D}$, with outcomes $(b,x)\in \{0,1\}\times\mX$. For each $y$, this measurement has two designated outcomes $(0,x_0)$ and $(1,x_1)$, which are the answers that are accepted in the preimage test; recall that we use the notation $V_{y}$ for this set. For $b\in\{0,1\}$ we use the shorthand $\Pi_y^b = \Pi_y^{(b,x_b)}$, $\Pi_y= \Pi_y^0+\Pi_y^1$, and $\Pi_y^2 = \Id - \Pi_y^0-\Pi_y^1$.
\item A projective measurement $\{M_y^{d}\}$ on $\mH_\reg{D}$, with outcomes $d\in \Z_8^w$. 
\item A binary observable $Z$ on  $\mH_\reg{D}$.
\item For every $\theta \in \{0,1,2,3\}$, a binary observable $X_\theta$ on  $\mH_\reg{D}$.
\end{enumerate}
\end{enumerate}
\end{definition}

By Naimark's theorem, up to increasing the dimension of $\mH_\reg{D}$ the assumption that $\{\Pi_y^{(b,x)}\}$, $\{M_y^{d}\}$ and $Z$, $X_\theta$ are projective is without loss of generality. For notational convenience we often drop the subscript $y$ from the measurements $\Pi_y$ and $M_y$, and the state $\phi_y$.  

\begin{definition}[Efficient devices]
We say that a device $D = (\phi,\Pi,M,Z,\{X_\theta\})$ is \emph{efficient} if the state $\phi$ can be prepared efficiently, and each of the measurements can be implemented efficiently. 
\end{definition}

%The following definition introduces a weak notion of distance between devices, that only measures distinguishability with respect to the verifier's view in an execution of the qubit preparation test. 

%\begin{definition}[Distance between devices]
%We say that devices $D = (\phi,\Pi,M,Z,\{X_\theta\})$ and $D' = (\phi',\Pi',M',Z',\{X'_\theta\})$ are $\eps$-close, for some $\eps\geq 0$, if the distribution on (classical) transcripts generated by an interaction of either device with the verifier in the qubit preparation test are within $\eps$ total variation distance of each other.
%\end{definition}

We introduce notation for some post-measurement states of a device.

\begin{definition}\label{def:phitheta}
Let $D = (\phi,\Pi,M,Z,\{X_\theta\})$ be a device. 
Let $\theta\in\{0,1,2,3\}$ and $v\in\{0,1\}$. Define a sub-normalized density matrix
\begin{equation}
\phi_{y,\theta,v} \,=\, \sum_{d:\, (\hat{\theta}(d),\hat{v}(d))= (\theta,v)} \big( \Id_\reg{Y}\otimes M_y^d \big)\,\phi_y\, \big(\Id_\reg{Y} \otimes M_y^d\big)\;.
\end{equation}
We sometimes omit $y$ and write $\phi_{\theta,v}$ for the same state. Note that since we assumed that $\{M_y^d\}$ is projective, the $8$ states $\phi_{\theta,v}$ are orthogonal. We write $\phi_{\theta} = \phi_{\theta,0}+\phi_{\theta,1}$.
\end{definition}

\subsubsection{Preimage test}

In this section we draw consequences from the assumption that a device succeeds with probability at least $1-\eps$ in the preimage test. 

\begin{lemma}\label{lem:preimage}
Let $D=(\phi,\Pi,M,Z,\{X_\theta\})$ be an efficient device that succeeds with probability at least $1-\eps$ in the preimage test, for some $0\leq \eps \leq 1$.
Then there is an efficient device $D'=(\phi',\Pi,M,Z,\{X_\theta\})$ such that $\|\phi'-\phi\|_1 = O(\sqrt{\eps})$ and such that $D'$ succeeds with probability negligibly (in the security parameter $\lambda$) close to $1$ in the preimage test. In particular, for any $k\in \mK_\mF$ the state of $D'$ after having returned $y$ has the form 
\begin{equation}\label{eq:preimage-phi}
 \ket{\phi'_y} = \sum_{b\in\{0,1\}} \ket{b,x_b} \ket{\phi_{y,b}}\;,
\end{equation}
where for $b\in\{0,1\}$, $x_b= \Inv_\mF(t_k,b,y)$, $\ket{\phi_{y,0}}$ and $\ket{\phi_{y,1}}$ are arbitrary, and the basis is chosen such that the measurement $\{\Pi^{(b,x_b)}\}$ is a computational basis measurement of the first two registers. Similarly, for $k\in\mK_\mG$ the same state can be expressed as
\begin{equation}\label{eq:preimage-phi-g}
 \ket{\phi'_y} =  \ket{\hat{b},x_{\hat{b}}} \ket{\phi_{y,\hat{b}}}\;,
\end{equation}   
where $(\hat{b},x_{\hat{b}})= \Inv_\mG(t_k,y)$.
\end{lemma}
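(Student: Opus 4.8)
The plan is to obtain $D'$ from $D$ by coherently projecting the device's post-$y$ state onto the subspace of preimage-test outcomes that the verifier accepts, and then invoking the gentle measurement lemma to control the resulting disturbance. First I would fix the key type: averaging the overall success bound $1-\eps$ over the uniform choice of $G\in\{0,1\}$ shows that the preimage test is passed with weight at least $1-2\eps$ conditioned on each of $G=0$ and $G=1$. For $k\in\mK_\mF$ the set of accepted outcomes of the measurement $\{\Pi^{(b,x)}\}$ is exactly $\{(0,x_0),(1,x_1)\}$ (by bijectivity of the claw-free pair), so the accept projector is $\Pi=\Pi^0+\Pi^1$, extended to $\sum_y \Pi_y\otimes\proj{y}$ on $\mH_\reg{D}\otimes\mH_\reg{Y}$; for $k\in\mK_\mG$, since the two ranges are disjoint the only accepted outcome is $(\hat b,x_{\hat b})$, so the accept projector is $\Pi^{\hat b}$. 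In either case I would define $\phi'=\Pi\phi\Pi$ (respectively $\Pi^{\hat b}\phi\Pi^{\hat b}$), a sub-normalised state supported entirely on the accept subspace, so that $D'$ passes the preimage test with probability $1$ up to the negligible imperfections of the ENTCF family.

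The disturbance bound is then immediate: since the accept projector captures all but a $2\eps_k$ fraction of the weight of $\phi$ on a key $k$, gentle measurement gives $\|\phi-\phi'\|_1=O(\sqrt{\eps_k})$, and averaging over keys of each type together with concavity of the square root yields $\|\phi-\phi'\|_1=O(\sqrt{\eps})$.

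The point that needs care — and that I expect to be the main obstacle — is efficiency of $D'$: the projector $\Pi$ is defined in terms of the true preimages $x_0,x_1$ (or $\hat b,x_{\hat b}$), which a priori require the trapdoor $t_k$. The resolution I would use is that the accept predicate is \emph{publicly} checkable without the trapdoor: an outcome $(b,x)$ is accepted iff $f_{k,b}(x)=y$ (resp. $g_{k,b}(x)=y$), and by bijectivity of the $f$-pair (resp. disjointness of the $g$-pair ranges) this predicate selects precisely the accept set. Hence $D'$ can implement the coarse-grained projective measurement $\{\Pi,\Id-\Pi\}$ efficiently — run the circuit for $\phi$, compute $[f_{k,b}(x)=y]$ into an ancilla coherently, measure the ancilla, and keep the accept branch — so that $D'$ is an efficient device as required.

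Finally, to read off the structural form of $\phi'_y$ I would pass to a purification and relabel $\mH_\reg{D}\cong\C^2\otimes\C^{|\mX|}\otimes\mH'$ so that the designated projectors satisfy $\Pi^{(b,x_b)}=\proj{b}\otimes\proj{x_b}\otimes\Id$; this is exactly the sense in which ``the basis is chosen such that $\{\Pi^{(b,x_b)}\}$ is a computational basis measurement'' in the statement, and it only relabels the device. Since $\phi'_y$ is supported on $\mathrm{range}(\Pi_y^0)\oplus\mathrm{range}(\Pi_y^1)$, its purification decomposes as $\ket{0,x_0}\ket{\phi_{y,0}}+\ket{1,x_1}\ket{\phi_{y,1}}$ for some vectors $\ket{\phi_{y,b}}\in\mH'$, giving~\eqref{eq:preimage-phi}; in the injective case only the $\hat b$ term survives, giving~\eqref{eq:preimage-phi-g}.
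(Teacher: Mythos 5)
Your proposal is correct and follows essentially the same route as the paper: both construct $D'$ by coherently evaluating the publicly computable check predicate (CHK$_\mF$, i.e.\ $f_{k,b}(x)=y$) to isolate the accept branch, and both bound the disturbance via the gentle measurement lemma. The only cosmetic difference is that the paper has $D'$ repeatedly re-prepare $\phi$ and re-measure $y$ until the check passes (yielding a normalized state after polynomially many attempts), whereas you keep the sub-normalized accept branch directly; since devices in Definition~\ref{def:device} are allowed sub-normalized states, this does not affect the argument.
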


\begin{proof} The proof is analogous to the reduction to a ``perfect prover'' shown in~\cite[Claim 7.2]{mahadev2018classical_arxiv}, and we only sketch it here. 
Given $\phi_y$, the device can evaluate CHK$_{\mathcal{F}}$  in superposition to check if it would succeed in the preimage test. The device $D'$ then repeatedly prepares $\phi$ and measures $y$ as $D$ would, until it has obtained a state $\phi_y$ that passes the preimage test with certainty (or until a polynomial number of attempts to do so have failed). The distance between $D$ and $D'$ is bounded by the gentle measurement lemma (Lemma~9 in~\cite{winter1999coding}).  
\end{proof}

\begin{lemma}\label{lem:d-uniform}
Let $D$ be an efficient device that succeeds with probability $1$ in the preimage test.
% and with probability at least $1-\eps$ in part A. of the $X_\theta$-measurement test. 
Then for every $\theta\in\{0,1,2,3\}$ and $v\in\{0,1\}$, no polynomial-time quantum procedure can predict $\hat{\theta}(d)$ given $(y,d,\phi_{\theta,v})$ with advantage non-negligibly larger than $\frac{1}{4}$. Moreover, for every $\theta\in\{0,1,2,3\}$ no polynomial-time quantum procedure can predict $\hat{v}(d)$ given $(y,d,\phi_{\theta,v},\theta)$ with advantage non-negligibly larger than $\frac{1}{2}$. 

In particular, 
the joint distribution of $(\hat{\theta}(d),\hat{v}(d))$ computed by the verifier in the measurement test is negligibly close to uniform, where the probability is taken over the device's actions, including the choice of $y$ and $d$. 
\end{lemma}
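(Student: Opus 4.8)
The plan is to derive both unpredictability claims from the (extended) adaptive hardcore bit property of the ENTCF family recalled in Appendix~\ref{sec:entcf}, combined with the perfect-preimage structure of Lemma~\ref{lem:preimage}. First I would invoke Lemma~\ref{lem:preimage} to assume, up to negligible loss, that in the case $G=0$ of a claw-free key the device holds after returning $y$ the state $\ket{\phi'_y} = \sum_b \ket{b,x_b}\ket{\phi_{y,b}}$ of the form~\eqref{eq:preimage-phi}, so that a computational-basis measurement of its first two registers yields a valid preimage $(b,x_b)$ with certainty. Following the honest analysis of Lemma~\ref{lem:qubit-completeness}, the quantity recorded by the verifier is $m(d) := d\cdot(J(x_1)-J(x_0)) \bmod 8$, and the returned pair satisfies $m(d) \equiv \hat\theta(d) + 4\,\hat v(d) \pmod 8$. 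The key observation is that $m(d)$ depends on \emph{both} preimages $x_0,x_1$, information computationally inaccessible to a claw-free-bounded prover, so the two claims are precisely quantitative statements that the three bits of $m(d)$ are unpredictable.

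The core step is a reduction showing that a good predictor contradicts the adaptive hardcore bit. Arguing by contraposition, suppose a polynomial-time procedure $P$, given $(y,d)$ together with the device's post-measurement state after it has returned $d$, predicts $\hat\theta(d)$ with advantage non-negligibly above $\frac14$ (resp.\ predicts $\hat v(d)$, given also $\theta$, with advantage above $\frac12$). I would build an adversary $\mathcal A$ against the extended adaptive hardcore bit: run the device to produce the committed state $\ket{\phi'_y}$ and the string $y$; apply the device's Fourier (over $\Z_8$) measurement $M_y^d$ to obtain $d$ and the residual state; run $P$ to obtain a guess $c$ for the relevant function of $m(d)$; and, using that the same committed state opens to a valid preimage with certainty (Lemma~\ref{lem:preimage}), supply the preimage opening required by the hardcore game. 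This is exactly the commit-and-open structure governed by the adaptive hardcore bit, so any advantage of $P$ over the baseline prediction of $m(d)$ (refined to $\frac14$ for the two bits computed by $\hat\theta$ and $\frac12$ for the bit computed by $\hat v$) translates into a violation of the hardcore property, contradicting LWE. Where convenient I would also use the claw-free/injective key indistinguishability of the ENTCF family to pass between the $G=0$ and $G=1$ descriptions.

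Finally, I would deduce the ``In particular'' uniformity statement from the two unpredictability bounds. Since these bounds hold for \emph{every} efficient predictor, they hold in particular for the (non-uniform) predictor that ignores the quantum state and always outputs the most likely value of $\hat\theta(d)$; its success probability is $\max_\theta \Pr[\hat\theta(d)=\theta]$, so this maximum is at most $\frac14+\negl(\lambda)$. As the four probabilities sum to $1$, each lies within $O(\negl(\lambda))$ of $\frac14$, so the marginal of $\hat\theta(d)$ is negligibly close to uniform; applying the same argument to $\hat v(d)$ conditioned on each value of $\hat\theta(d)$ shows the conditional is negligibly close to uniform, and hence the joint distribution of $(\hat\theta(d),\hat v(d))$ is negligibly close to uniform over $\{0,1,2,3\}\times\{0,1\}$.

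The main obstacle I anticipate is the reduction of the second paragraph, for two reasons. First, extracting a valid preimage \emph{and} a valid equation from a single device appears to require incompatible (computational-basis versus Fourier-basis) measurements; this is reconciled by the fact that the adaptive hardcore bit is inherently a statement about one committed state that may be opened either way, so the perfect-preimage guarantee of Lemma~\ref{lem:preimage} is exactly what feeds into the hardcore game rather than literally performing both measurements on a single copy. Second, the standard hardcore bit of~\cite{brakerski2018cryptographic} certifies a single bit, whereas here I need to certify the full $\Z_8$ value of $m(d)$ with the refined $\frac14$ and $\frac12$ splits; this is precisely where the ``slight extension'' of the adaptive hardcore bit recalled in Appendix~\ref{sec:entcf} is needed, and some care is required to match the bit-decomposition of $m(d)$ to $\hat\theta(d)$ and $\hat v(d)$.
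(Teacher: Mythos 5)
Your overall strategy is the paper's: reduce a predictor for $\hat\theta(d)$ (resp.\ $\hat v(d)$) to an adversary against the adaptive $\Z_8$ hardcore properties~\eqref{eq:adaptive-hardcore-2} and~\eqref{eq:adaptive-hardcore}, and then derive the ``in particular'' uniformity claim exactly as you do in your last paragraph. You have also correctly identified the crux: the hardcore game requires the adversary to output the tuple $(b,x_b,d,\theta)$, i.e.\ a \emph{valid preimage together with} the equation, whereas the device produces $d$ by a Fourier-basis measurement that destroys the preimage information in $\sum_b\ket{b,x_b}\ket{\phi_{y,b}}$.

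However, your proposed resolution of that tension is a genuine gap. You claim it is ``reconciled by the fact that the adaptive hardcore bit is inherently a statement about one committed state that may be opened either way,'' but Definition~\ref{def:z8-hc} gives no such slack: the adversary $\mathcal{A}$ must literally output $x_b$, and after applying $M_y^d$ there is no way to ``supply the preimage opening'' from the residual state. The missing ingredient is the \emph{collapsing property} (Lemma~\ref{lem:collapse}), which your write-up never invokes. The paper's argument is: since $\{M_y^d\}$ and the predictor are efficient, collapsing implies that the predictor's advantage changes only negligibly if the device \emph{first} measures $\{\Pi_y^{(b,x_b)}\}$ to record $(b,x_b)$ and \emph{then} measures $\{M_y^d\}$ to obtain $d$. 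In that reordered experiment the reduction holds $(b,x_b)$, $d$, and the guess simultaneously, so it outputs a tuple in $H_k^{(w)}$ (resp.\ $H_k$ or $\overline{H}_k$) with the forbidden advantage. Without this reordering step your adversary cannot be instantiated, so the contradiction with~\eqref{eq:adaptive-hardcore-2} and~\eqref{eq:adaptive-hardcore} is not actually obtained. (Your mention of claw-free/injective key indistinguishability is in the right spirit---it is what underlies the proof of Lemma~\ref{lem:collapse}---but it is the collapsing statement itself, applied to commute the two measurements, that the reduction needs.)
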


\begin{proof}
Suppose for contradiction that there exists a distinguisher that achieves success probability noticeably larger than $\frac{1}{8}$, where the probability is over $y$ and $d$ as computed by the device as well as the distinguisher's internal randomness. Suppose first that the distinguisher can predict $\hat{\theta}(d))$ with advantage noticeably larger than $\frac{1}{4}$.
%Using that the device succeeds in part $A$ of the $X$-measurement test, on average over $y$ it holds that 
%\[\sum_{v\in\{0,1\}} \Tr( X_{\theta^*} \phi_{\theta^*,v} ) \,\geq\, 1-\eps\;.\]
Using the collapsing property (Lemma~\ref{lem:collapse}) and the fact that $\{M_y^d\}$ is efficient and the distinguisher are assumed efficient, it is still the case that the distinguisher has advantage noticeably larger than $\frac{1}{4}$ in predicting $\hat{\theta}(d))$ when the device first measures $\{\Pi_y^{(b,x_b)}\}$ to obtain $(b,x_b)$ and then only applies $\{M_y^d\}$ to obtain $d$. This contradicts the hardcore bit property~\eqref{eq:adaptive-hardcore-2}.

Similarly, if the distinguisher has advantage noticeably larger than $\frac{1}{2}$ in predicting $\hat{v}(d)$, conditioned on its guess for $\hat{\theta}(d)$ being correct, using the collapsing property we construct an adversary that contradicts the hardcore bit property~\eqref{eq:adaptive-hardcore}.
\end{proof}

\subsubsection{$Z$-measurement test}

\begin{lemma}\label{lem:zpi}
Let $D$ be an efficient device that succeeds with probability $1$ in the preimage test, and at least $1-\eps$ in the $Z$-measurement test. Then on average over $y\in \mY$,
\begin{equation}\label{eq:zpi-0}
 \sum_{d,b} \Tr \big( (M^d\Pi^b - Z^bM^d)^\dagger (M^d\Pi^b - Z^bM^d)\phi\big) \,\leq\, 2\,\eps+\negl(\lambda)\;.
\end{equation}
\end{lemma}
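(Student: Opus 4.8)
The plan is to expand the positive operator $(M^d\Pi^b - Z^bM^d)^\dagger(M^d\Pi^b-Z^bM^d)$, use the completeness relations $\sum_d M^d = \Id$ and $Z^0+Z^1=\Id$, and identify what survives with the failure probability of the $Z$-measurement test. The whole argument takes place in the injective regime ($k\in\mK_\mG$), which is where the $Z$-measurement test is active. The essential structural input is that a device passing the preimage test perfectly has, by the form~\eqref{eq:preimage-phi-g} of Lemma~\ref{lem:preimage}, a state $\phi_y$ supported on the correct preimage bit $\hat{b}=\hat{b}(y)$; concretely $\Pi^b\phi_y = \delta_{b,\hat{b}}\,\phi_y$ and $\phi_y\Pi^b = \delta_{b,\hat{b}}\,\phi_y$. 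I would record this first, noting that since the ``perfect'' device is furnished by Lemma~\ref{lem:preimage}, which only guarantees success negligibly close to $1$, this collapse (and the vanishing of the invalid-outcome weight $\Tr(\Pi^2\phi_y)$) holds only up to a negligible error; this is what produces the $\negl(\lambda)$ in~\eqref{eq:zpi-0}.

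With the collapse in hand the computation is routine. Writing $A_{d,b}=M^d\Pi^b - Z^bM^d$ and using $(M^d)^2=M^d$, $(Z^b)^2=Z^b$, one has
\[
A_{d,b}^\dagger A_{d,b} \,=\, \Pi^b M^d\Pi^b - \Pi^b M^d Z^b M^d - M^d Z^b M^d\Pi^b + M^d Z^b M^d\;.
\]
Taking the trace against $\phi_y$ and substituting $\Pi^b\phi_y = \delta_{b,\hat{b}}\phi_y$, the term $b=\hat{b}$ collapses to $\Tr(M^d\phi_y M^d) - \Tr(Z^{\hat{b}}M^d\phi_y M^d) = \Tr(Z^{1-\hat{b}}M^d\phi_y M^d)$, while for $b=1-\hat{b}$ the first three terms vanish (each has a factor $\Pi^{1-\hat{b}}$ that can be placed next to $\phi_y$ and annihilates it) and only $\Tr(Z^{1-\hat{b}}M^d\phi_y M^d)$ remains. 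Hence $\sum_b\Tr(A_{d,b}^\dagger A_{d,b}\phi_y) = 2\,\Tr(Z^{1-\hat{b}}M^d\phi_y M^d)$ for each $d$ and $y$.

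It then remains to recognize $\sum_y\sum_d \Tr(Z^{1-\hat{b}}M_y^d\phi_y M_y^d)$ as exactly the probability that the device fails the $Z$-measurement test: the device returns an equation $d$ by measuring $\{M_y^d\}$, then measures the binary observable $Z$, and fails precisely when the outcome is $1-\hat{b}$. Summing the sub-normalized states over $y$ via $\sum_y\Tr(\phi_y)=\Tr(\phi)$, this failure probability is at most $\eps$ by hypothesis, so the factor $2$ gives the bound $2\eps$, and adding back the negligible slack from the imperfect collapse yields $2\eps + \negl(\lambda)$.

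The step I expect to require the most care is the $\negl$ bookkeeping rather than the algebra: one must track that $\Pi^b\phi_y=\delta_{b,\hat{b}}\phi_y$ holds only after replacing the given device by the near-perfect device of Lemma~\ref{lem:preimage}, and verify that this approximation, propagated through the finitely many terms of the expansion and summed against the $\phi_y$, contributes only an additive negligible quantity. The operator manipulations, once the collapse is available, are entirely mechanical.
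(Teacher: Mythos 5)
There is a genuine gap, and it is exactly the point you set aside at the outset: you restrict the whole argument to the injective regime $k\in\mK_\mG$, but the lemma is needed --- and the paper proves it --- for claw-free keys $k\in\mK_\mF$. Corollary~\ref{cor:break} applies~\eqref{eq:zpi-0} to commute $Z^b$ past $\Pi^b$ on the states $M^d\Pi^b\phi\Pi^b M^d$ with $\hat{\theta}(d)=\theta$, and $\hat\theta$ is only defined when $G=0$. For $k\in\mK_\mF$ your structural input $\Pi^b\phi_y=\delta_{b,\hat b}\,\phi_y$ is simply false: by~\eqref{eq:preimage-phi} the state is the coherent superposition $\sum_b\ket{b,x_b}\ket{\phi_{y,b}}$, both diagonal blocks carry weight $\approx 1/2$, and the off-diagonal blocks $\Pi^0\phi_y\Pi^1+\Pi^1\phi_y\Pi^0$ are not small in trace norm --- they are only \emph{computationally} invisible to the relevant observables. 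So your computation, while algebraically correct where its premise holds (and your identification of $\sum_{y,d}\Tr(Z^{1-\hat b}M^d\phi_yM^d)$ with the $Z$-test failure probability is fine), establishes a weaker statement than the one the paper needs, and the error you would actually incur in the claw-free regime is not $\negl(\lambda)$ by any information-theoretic bookkeeping.

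The missing idea is the double use of computational indistinguishability. The paper first writes the success hypothesis as $\sum_{b,d}\Tr(Z^bM^d\Pi^b\phi\Pi^bM^d)\geq 1-\eps-\negl(\lambda)$ for $k\in\mK_\mG$ (where, as you note, the state is collapsed), then invokes the collapsing property / injective invariance to transfer this efficiently measurable quantity to $k\in\mK_\mF$; it then invokes the collapsing property a second time, applied to $\Pi=\Pi^0-\Pi^1$, to show that the cross terms $\sum_{b,d}\Tr(Z^bM^d(\Pi^0\phi\Pi^1+\Pi^1\phi\Pi^0)M^d)$ are negligible, which is what lets one drop one of the two $\Pi^b$'s and expand the square in~\eqref{eq:zpi-0}. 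Without these two computational steps there is no route from the $Z$-test (which is only ever checked on injective keys) to a statement about the uncollapsed claw-free state; to repair your proof you would need to import both of them rather than refine the $\negl$ bookkeeping.
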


\begin{proof}
The assumption of success $1-\eps$ in the $Z$-measurement test implies that, on average over $k\in \mK_\mG$ and $y\in \mY$, 
\begin{equation}\label{eq:zpi-1}
\sum_{b,d} \Tr\big(Z^b M^d \Pi^b \phi \Pi^b M^d\big) \,\geq\, 1-\,\eps-\negl(\lambda)\;,
\end{equation}
where we used that for $k\in \mK_\mG$ by Lemma~\ref{lem:preimage} it holds that $\phi = \sum_b\Pi^b \phi \Pi^b$. Since $\Pi$, $M$ and $Z$ can all be efficiently implemented, using the collapsing property (Lemma~\ref{lem:collapse})~\eqref{eq:zpi-1} holds on average over $k\in\mK_\mF$ as well. 

Let $\Pi = \Pi^0-\Pi^1$ act on the first qubit of $\phi$ (written as in~\eqref{eq:preimage-phi}). Again using the collapsing property, $\Pi \phi \Pi $ and $\phi$ are computationally indistinguishable, so
\begin{equation}\label{eq:zpi-2}
\sum_{b,d} \big|\Tr\big(Z^b M^d ( \phi-\Pi \phi \Pi) M^d\big)\big| \,=\, \negl(\lambda)\;.
\end{equation}
Using that $ \phi-\Pi \phi \Pi  = 2(\Pi^0 \phi \Pi^1 + \Pi^1\phi\Pi^0)$, combining~\eqref{eq:zpi-1} and~\eqref{eq:zpi-2} gives 
\[\sum_{b,d} \big(\Tr\big(Z^b M^d \Pi^b \phi M^d\big) + \Tr\big(Z^b M^d \phi \Pi^b M^d\big)\big) \,\geq\, 2\big(1- O(\eps)\big) - \negl(\lambda)\;.\]
Expanding the square in~\eqref{eq:zpi-0}, this proves the lemma. 
\end{proof}

\subsubsection{$X_\theta$-measurement test, part A}

\begin{lemma} \label{lem:break}
Let $D = (\phi,\Pi,M,Z,\{X_\theta\})$ be an efficient device. Define a sub-normalized density 
\begin{align}
\tilde{\phi}_{\reg{YBXD}} &= \sum_{y\in\mY} \proj{y}_\reg{Y}\otimes \sum_{b\in\{0,1\}} \proj{b,x_b}_\reg{BX} \otimes \Pi_y^{(b,x_b)} \,\phi_y \,\Pi_y^{(b,x_b)}\notag\\
&=  \sum_{b\in\{0,1\}} \proj{b,x_b}_\reg{BX} \otimes \tilde{\phi}^{(b)}_\reg{YD}\;.\label{eq:def-sigmay}
\end{align}
Then $\tilde{\phi}_{\reg{YBXD}}$ is the post-measurement state of the device at the end of the preimage test. 
For $v\in\{0,1\}$ and $\theta\in\{0,1,2,3\}$ let  
\begin{align}
\sigma_{\theta,v} &=\sum_{y\in\mY}\sum_{b\in\{0,1\}} \proj{b,x_b}_{\reg{BX}} \otimes  \sum_{d: \hat{\theta}(d)=\theta } \proj{d}\otimes (\Id_\reg{Y}\otimes X_\theta^v M_y^{d}) \tilde{\phi}^{(b)}_{\reg{YD}} (\Id_\reg{Y}\otimes M_y^{d}X_\theta^v)\;.\label{eq:def-rho}
\end{align}
Then for any $\theta\in\{0,1,2,3\}$, $\sigma_{\theta,0}$ and $\sigma_{\theta,1}$ are computationally indistinguishable. 
\end{lemma}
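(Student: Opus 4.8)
The plan is to argue by contradiction, turning any efficient distinguisher between $\sigma_{\theta,0}$ and $\sigma_{\theta,1}$ into an adversary that contradicts the adaptive hardcore bit property of the ENTCF family, following the same template already used for Lemma~\ref{lem:d-uniform} and Lemma~\ref{lem:zpi}. The first thing I would record is the purely structural fact that, for each fixed $y$, $b$ and $d$ with $\hat\theta(d)=\theta$, the two states differ only through the factor $X_\theta^v$: since $X_\theta$ is a binary observable, $\sigma_{\theta,1}$ is exactly $\sigma_{\theta,0}$ with the device register conjugated by $X_\theta$. Thus a distinguisher for the pair is equivalently an efficient procedure that detects the action of $X_\theta$ on the post-Fourier state $M_y^d\tilde\phi^{(b)}M_y^d$, i.e.\ that reads off information about the $X_\theta$-outcome.

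Next I would set up the reduction. The state $\tilde\phi$ already carries, in its classical registers $\reg{B},\reg{X}$, a valid preimage $(b,x_b)$ of the commitment $y$, and the equation $d$ is recorded as well. Composing the device's preparation, the preimage measurement $\{\Pi_y^{(b,x_b)}\}$, the Fourier measurement $\{M_y^d\}$ and the assumed distinguisher therefore yields a single efficient algorithm that, on a claw-free key, simultaneously (i) outputs a valid preimage $(b,x_b)$ and (ii) produces a bit whose correlation with the hardcore bit $\hat v(d)$ is non-negligible. Item (ii) is where the angle condition $\hat\theta(d)=\theta$ enters, through the same relationship between $X_\theta$ and $\ket{+_{\theta\frac{\pi}{4}+\hat v\pi}}$ that underlies the completeness calculation of Lemma~\ref{lem:qubit-completeness}. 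The existence of such an algorithm contradicts the adaptive hardcore bit~\eqref{eq:adaptive-hardcore}.

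To make this reduction legitimate I would invoke the collapsing property (Lemma~\ref{lem:collapse}): since both $\{\Pi_y^{(b,x_b)}\}$ and $\{M_y^d\}$ are efficient measurements, inserting the preimage measurement before the Fourier measurement changes all of the relevant success probabilities by at most $\negl(\lambda)$, so the adversary really does hold a preimage together with a guess for $\hat v(d)$ produced in a single run (this is precisely how the preimage measurement was inserted in the proof of Lemma~\ref{lem:d-uniform}). Combining this contradiction with the definition of computational distinguishability then forces the distinguishing advantage to be negligible, giving $\sigma_{\theta,0}\approx_c\sigma_{\theta,1}$.

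The step I expect to be the main obstacle is item (ii) of the reduction: converting a generic distinguisher of the two \emph{conjugate} states into an honest predictor of a single bit that can be handed to the hardcore-bit challenger. Detecting a conjugation by $X_\theta$ only exposes the off-diagonal coherence $P^+_\theta\,\sigma_{\theta,0}\,P^-_\theta+\mathrm{h.c.}$ between the $\pm1$ eigenspaces of $X_\theta$, so some gadget — for instance a controlled-$X_\theta$ (Hadamard-test) circuit, or a direct argument lower-bounding the bias in guessing the $X_\theta$ eigenvalue by the distinguishing advantage — is needed to extract a usable bit while keeping the whole procedure efficient and without losing more than a constant factor in the advantage.
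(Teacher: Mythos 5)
Your high-level template --- reduce a distinguisher between $\sigma_{\theta,0}$ and $\sigma_{\theta,1}$ to an adversary against the adaptive hardcore bit property~\eqref{eq:adaptive-hardcore} --- is the same as the paper's, but your structural reading of the two states is wrong, and this error propagates into what you identify as the ``main obstacle.'' In~\eqref{eq:def-rho} the symbol $X_\theta^v$ does not mean the observable raised to the power $v$; throughout Section~\ref{sec:qubit-rigidity} the superscript on a binary observable denotes the \emph{projector} onto the $(-1)^v$-eigenspace, i.e.\ $X_\theta^v=\frac{1}{2}(\Id+(-1)^vX_\theta)$ (this is forced by, e.g., $\sum_{b,v}\Tr(X_\theta^vZ^b\tilde\phi_\theta Z^b)=1$ in the proof of Lemma~\ref{lem:ac}, and by Corollary~\ref{cor:break}, which asserts that $\Tr(\sigma_{\theta,0})$ and $\Tr(\sigma_{\theta,1})$ are close --- a statement that would be trivially an equality under your reading, since conjugation by a unitary preserves trace). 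So $\sigma_{\theta,0}$ and $\sigma_{\theta,1}$ are the two sub-normalized branches of the projective measurement $\{X_\theta^0,X_\theta^1\}$ performed after the $d$-measurement, not unitary conjugates of one another. For the honest device they are even perfectly distinguishable given the trapdoor (they live on disjoint sets of $d$'s, according to $\hat v(d)$); the content of the lemma is that they are \emph{computationally} indistinguishable.

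Because of this, the coherence-detection gadget you propose (controlled-$X_\theta$/Hadamard test to expose off-diagonal blocks) is solving a problem that does not arise. The paper's reduction is direct: the adversary $\mathcal{A}$ prepares $\tilde\phi$ itself (prepare $\phi$, measure $y$, apply $\{\Pi_y^{(b,x)}\}$ and abort on an invalid preimage --- note that $\tilde\phi$ in~\eqref{eq:def-sigmay} is \emph{by definition} the post-preimage-test state, so no appeal to the collapsing property is needed here, unlike in Lemmas~\ref{lem:d-uniform} and~\ref{lem:zpi}), measures $\{M_y^d\}$ to get $d$, performs the $\{X_\theta^v\}$ measurement itself to learn $v$, runs the hypothesized distinguisher $O$ on the resulting state to get a bit $u$, and outputs $(b,x,d,\theta,u\oplus v)$; the distinguishing advantage for the pair $(\sigma_{\theta,0},\sigma_{\theta,1})$, which is only nonzero when $\theta=\hat\theta(d)$, then translates into a violation of~\eqref{eq:adaptive-hardcore}. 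Your instinct that the crux is converting distinguishing into predicting the hardcore bit $\hat v(d)$ is correct, but the mechanism is the one just described, not the extraction of a coherence term; as written, your step (ii) would not go through.
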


\begin{proof}
The proof is almost identical to the proof of~\cite[Lemma 7.1]{brakerski2018certifiable}. 
Suppose for contradiction that there exists a $\theta\in\{0,1,2,3\}$ and an efficient observable $O$ such that 
\begin{equation}\label{eq:bias-o}
\Tr(O(\sigma_{\theta,0}-\sigma_{\theta,1})) \,\geq\, \mu\;,
\end{equation}
 for some non-negligible function $\mu(\lambda)$. We derive a contradiction with the hardcore bit property~\eqref{eq:adaptive-hardcore}.

Consider the following efficient procedure $\mathcal{A}$. $\mathcal{A}$ first prepares the state $\tilde{\phi}_{\reg{YBXD}}$ in~\eqref{eq:def-sigmay}. This can be done efficiently by first preparing $\phi_{\reg{YD}}$, then measuring a $y\in \mY$, then applying the measurement $\{\Pi_y^{(b,x)}\}$ to $\phi_y$, and returning a special abort symbol if the outcome is invalid, i.e. CHK$_{\mathcal{F}}(k,b,x,y)=0$. 

$\mathcal{A}$ then applies the measurement $\{M_y^{d}\}$ to $\tilde{\phi}_{\reg{YBXD}}$, obtaining an outcome $d\in\Z_8^w$. Next, it measures using $\{X_\theta^v\}$ to obtain $v\in\{0,1\}$. At this point, the procedure has prepared either $\sigma_{\theta,0}$ or $\sigma_{\theta,1}$. Finally, the procedure measures $O$ to obtain a bit $u$, and returns $(b,x,d,\theta,u\oplus v)$. 

This defines an efficient procedure. Using~\eqref{eq:bias-o} it follows  that the procedure violates the hardcore bit property~\eqref{eq:adaptive-hardcore}. To see why, note that the guarantee~\eqref{eq:bias-o} only holds when 
 $\theta = \hat{\theta}(d)$, but this is precisely when~\eqref{eq:adaptive-hardcore} requires that there should be no distinguishing advantage. 
\end{proof}

\begin{corollary}\label{cor:break}
Let $D = (\phi,\Pi,M,Z,\{X_\theta\})$ be an efficient device that succeeds in the preimage test with probability $1$, and in the $Z$-measurement test with probability at least $1-\eps$. Then on average over $y$,
\[\sum_{\theta\in\{0,1,2,3\}}\sum_{b\in\{0,1\}} \Big| \Tr\big( X_{\theta}^0 Z^b \phi_{\theta} Z^b \big) -  \Tr\big( X_{\theta}^1 Z^b \phi_{\theta} Z^b \big) \Big| \,=\,O\big(\sqrt{\eps}\big)\;,\]
where $\phi_\theta$ is defined in Definition~\ref{def:phitheta}.
\end{corollary}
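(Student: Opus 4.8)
The plan is to reduce the quantity to the computational indistinguishability of $\sigma_{\theta,0}$ and $\sigma_{\theta,1}$ established in Lemma~\ref{lem:break}, paying an error that is controlled by the $Z$-measurement test via Lemma~\ref{lem:zpi}. First I would note that since $X_\theta = X_\theta^0 - X_\theta^1$ as an observable, each summand is $\big|\Tr(X_\theta Z^b\phi_\theta Z^b)\big|$, and that expanding $\phi_\theta = \sum_{d:\,\hat\theta(d)=\theta} M^d\phi M^d$ (Definition~\ref{def:phitheta}) gives $\Tr(X_\theta Z^b\phi_\theta Z^b) = \sum_{d:\,\hat\theta(d)=\theta}\Tr\big(X_\theta\, (Z^bM^d)\, \phi\, (M^dZ^b)\big)$. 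The natural ``target'' is the expression obtained by replacing the outer $Z^bM^d$ by $M^d\Pi^b$, namely $T_{\theta,b} := \sum_{d:\,\hat\theta(d)=\theta}\Tr(X_\theta M^d\Pi^b\phi\Pi^b M^d)$. Using $\tilde\phi^{(b)}=\Pi^b\phi\Pi^b$ and $X_\theta^0-X_\theta^1 = X_\theta$, I would recognize $T_{\theta,b}$ as $\Tr(P_b\,\sigma_{\theta,0}) - \Tr(P_b\,\sigma_{\theta,1})$, where $P_b$ is the projector that reads the value $b$ off the register $\reg{B}$ carried by $\sigma_{\theta,v}$.

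Since $P_b$ is an efficiently implementable measurement (it merely reads a classical register) and $\sigma_{\theta,0}$, $\sigma_{\theta,1}$ are computationally indistinguishable by Lemma~\ref{lem:break}, it follows that $|T_{\theta,b}| = |\Tr(P_b\sigma_{\theta,0}) - \Tr(P_b\sigma_{\theta,1})| = \negl(\lambda)$ for each of the finitely many pairs $(\theta,b)$, hence $\sum_{\theta,b}|T_{\theta,b}| = \negl(\lambda)$. It then remains to bound $\sum_{\theta,b}|\Tr(X_\theta Z^b\phi_\theta Z^b) - T_{\theta,b}|$, which is where the main work lies. Writing $Q = Z^bM^d$ and $P = M^d\Pi^b$ and introducing the semi-inner product $\langle A,B\rangle_\phi = \Tr(A^\dagger B\phi)$, each difference telescopes exactly as $\Tr(X_\theta Q\phi Q^\dagger) - \Tr(X_\theta P\phi P^\dagger) = \langle Q-P,\,X_\theta Q\rangle_\phi + \langle X_\theta P,\,Q-P\rangle_\phi$.

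I would then apply Cauchy--Schwarz to each inner product, sum over $(y,d,b)$ (the sets $\{d:\hat\theta(d)=\theta\}$ partition $\Z_8^w$ as $\theta$ ranges over $\{0,1,2,3\}$), and apply Cauchy--Schwarz a second time to the resulting sum of products. This factors the bound as $\sqrt{\sum_{y,d,b}\langle Q-P,Q-P\rangle_{\phi_y}}$ times a second factor controlled by $\sum_{y,d,b}\langle X_\theta Q,X_\theta Q\rangle_{\phi_y}$ and $\sum_{y,d,b}\langle X_\theta P,X_\theta P\rangle_{\phi_y}$; using $X_\theta^2\le\Id$, the projector identities $Q^\dagger Q=M^d Z^b M^d$, $P^\dagger P = \Pi^b M^d\Pi^b$, together with $\sum_b Z^b = \sum_b \Pi^b = \Id$ (the latter on the support of $\phi$, by preimage success) and $\sum_d M^d = \Id$, each of these telescopes to $\Tr(\phi)\le 1$, so the second factor is at most $1$. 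The first factor is precisely the left-hand side of Lemma~\ref{lem:zpi}, hence $O(\sqrt{\eps})$. I expect the hard part to be exactly this conversion of the $L^2$-type guarantee of Lemma~\ref{lem:zpi} into an $L^1$-type bound through the two applications of Cauchy--Schwarz, while correctly carrying the average over $y$ inside the square roots; the remaining manipulations are routine bookkeeping. Combining with the triangle inequality and the negligible bound on $\sum_{\theta,b}|T_{\theta,b}|$ yields $\sum_{\theta,b}|\Tr(X_\theta Z^b\phi_\theta Z^b)| \le O(\sqrt{\eps}) + \negl(\lambda) = O(\sqrt{\eps})$, as claimed.
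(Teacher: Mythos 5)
Your proposal is correct and follows essentially the same route as the paper: the paper's (much terser) proof likewise invokes Lemma~\ref{lem:break} to get the negligible bound on the $M^d\Pi^b$ version of the expression and then uses Lemma~\ref{lem:zpi} together with Cauchy--Schwarz to swap $M^d\Pi^b$ for $Z^bM^d$ at a cost of $O(\sqrt{\eps})$. Your write-up simply makes explicit the telescoping and the two applications of Cauchy--Schwarz that the paper leaves implicit.
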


\begin{proof}
Lemma~\ref{lem:break} implies that $\sigma_{\theta,0}$ and $\sigma_{\theta,1}$ must have traces that are negligibly far from each other, i.e.\ for every $\theta$ and on average over $y\in\mY$,
\[\sum_{d:\,\hat{\theta}(d)=\theta}\sum_b \Big|  \Tr\big( X_{\theta}^0 M^d \Pi^b \phi \Pi^b M^d \big) -  \Tr\big( X_{\theta}^1 M^d \Pi^b \phi \Pi^b M^d \big)\Big) \Big| \,=\,\negl(\lambda)\;.\]
Using Lemma~\ref{lem:zpi} and the Cauchy-Schwarz inequality, this expression is within $O(\sqrt{\eps})$ of
\[\sum_{d:\,\hat{\theta}(d)=\theta}\sum_b \Big|  \Tr\big( X_{\theta}^0  Z^b M^d \phi M^d Z^b \big) -  \Tr\big( X_{\theta}^1 Z^b M^d  \phi M^d Z^d\big)\Big) \Big| \,=\,\negl(\lambda)\;,\]
as desired.
\end{proof}

The following lemma shows a strong form of incompatibility between the measurements $Z$ and $X_\theta$, for any efficient device. 

\begin{lemma}\label{lem:ac}
Let $D= (\phi,\Pi,M,Z,\{X_\theta\})$ be an efficient device such that $D$ succeeds with probability $1$ in the preimage test, and with probability at least $1-\eps$ in both the $Z$-measurement test and part A. of the $X_\theta$-measurement test.
 Then there exists $\eps_\ac = O(\eps^{1/4})$ such that on average over $y\in\mY$, 
\[ \sum_{\theta\in\{0,1,2,3\}}\Tr\big(\{Z,X_{\theta}\}^2 \phi_\theta\big) \,\leq\, \eps_\ac\;.\]
\end{lemma}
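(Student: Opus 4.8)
The plan is to turn the anticommutator into a statement that ``applying $Z$ flips the $X_\theta$-outcome,'' and to extract this from part~A of the $X_\theta$-test together with the $Z$-consistency relation. First I would record what part~A buys us: success with probability $1-\eps$ gives $\sum_{\theta,v}\Tr\big(X_\theta^{1-v}\phi_{\theta,v}\big)=O(\eps)$, hence $\sum_{\theta,v}\big\|X_\theta^{1-v}\sqrt{\phi_{\theta,v}}\big\|^2=O(\eps)$, so that $X_\theta$ acts as an approximate eigenoperator on each sector, namely $X_\theta\sqrt{\phi_{\theta,v}}\approx(-1)^v\sqrt{\phi_{\theta,v}}$ up to Frobenius error $O(\sqrt\eps)$. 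Substituting this into $\{Z,X_\theta\}\sqrt{\phi_{\theta,v}}=ZX_\theta\sqrt{\phi_{\theta,v}}+X_\theta Z\sqrt{\phi_{\theta,v}}$ and using the elementary identity $\big((-1)^v\Id+X_\theta\big)=2(-1)^vX_\theta^v$, I get $\{Z,X_\theta\}\sqrt{\phi_{\theta,v}}\approx 2(-1)^vX_\theta^v Z\sqrt{\phi_{\theta,v}}$, and therefore
\[\Tr\big(\{Z,X_\theta\}^2\phi_{\theta,v}\big)=\big\|\{Z,X_\theta\}\sqrt{\phi_{\theta,v}}\big\|^2\approx 4\,\Tr\big(X_\theta^v Z\,\phi_{\theta,v}\,Z\big)\;,\]
up to an additive $O(\sqrt\eps)$. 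Summing over $\theta,v$ reduces the whole lemma to showing that $\sum_{\theta,v}\Tr\big(X_\theta^v Z\,\phi_{\theta,v}\,Z\big)$ is small: that is, once we prepare $\phi_{\theta,v}$ (where the recorded high bit is $\hat v=v$), apply the reflection $Z$, and measure $X_\theta$, the outcome should be $v\oplus 1$ rather than $v$.

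Next I would invoke the $Z$-consistency relation from Lemma~\ref{lem:zpi}, in the form $Z M^d\approx M^d\Pi$ with $\Pi=\Pi^0-\Pi^1$ the preimage-bit reflection, to rewrite the conjugation: $Z\,\phi_{\theta,v}\,Z=\sum_{d:(\hat\theta,\hat v)(d)=(\theta,v)}ZM^d\phi M^dZ\approx \sum_{d}M^d(\Pi\phi\Pi)M^d$, at the cost of $O(\sqrt\eps)$ by Cauchy--Schwarz. Writing $\tilde\phi=\Pi\phi\Pi$, this turns the target into $\sum_{\theta,v}\Tr\big(X_\theta^v\,\tilde\phi_{\theta,v}\big)$, where $\tilde\phi_{\theta,v}$ is the analogue of $\phi_{\theta,v}$ built from the branch-negated state $\tilde\phi$. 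The point of passing to $\tilde\phi$ is that $\Pi$ multiplies the $b=1$ branch of the state~\eqref{eq:state-3} by $-1=\omega^4$, which is exactly a shift of the relative Fourier phase $d\cdot\big(J(x_1)-J(x_0)\big)$ by $4\pmod 8$; this flips the physical high bit while leaving $\hat\theta$ intact, so on $\tilde\phi_{\theta,v}$ the qubit is physically in configuration $(\theta,v\oplus1)$ and $X_\theta$ ought to return $v\oplus1$, i.e.\ $\Tr\big(X_\theta^v\tilde\phi_{\theta,v}\big)$ ought to be $O(\eps)$. I would also note that the ``symmetric-in-$v$'' part of this sum is already controlled by Corollary~\ref{cor:break} (which gives $\Tr(X_\theta Z\phi_\theta Z)=O(\sqrt\eps)$ without part~A), so the genuinely new content is the antisymmetric, branch-flip contribution.

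The hard part is the last justification: part~A only tests the measurement $X_\theta$ on the states $\phi_{\theta,v'}$ directly produced by the honest equation distribution, whereas $\tilde\phi_{\theta,v}$ comes from the $\Pi$-negated state and from equations $d$ carrying label $(\theta,v)$, so a priori $X_\theta$ (which we have not yet localized to a single qubit) could behave differently on it. To close this I would transport part~A's guarantee across the branch negation using the adaptive-hardcore-bit indistinguishability of Lemma~\ref{lem:break} (the statement $\sigma_{\theta,0}\approx_c\sigma_{\theta,1}$), together with the structural property of the map $J$ in the ENTCF family that the phase shift induced by $\Pi$ is realized by an accessible relabelling of the recorded high bit; this is where I expect the main difficulty and where the computational, rather than information-theoretic, nature of the argument enters. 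Finally I would collect the error terms: part~A and the $Z$-consistency each contribute $O(\sqrt\eps)$, and converting the Frobenius/expectation estimates into the trace bound on $\Tr\big(\{Z,X_\theta\}^2\phi_\theta\big)$ costs one further square root via Cauchy--Schwarz, so that the accumulated bound is $\eps_\ac=O(\eps^{1/4})$, as claimed.
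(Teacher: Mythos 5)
There is a genuine gap, and it sits exactly where you flag ``the main difficulty.'' Your opening reduction is fine but essentially tautological: once part~A gives $X_\theta\sqrt{\phi_{\theta,v}}\approx(-1)^v\sqrt{\phi_{\theta,v}}$, the identity $\Tr\big(\{Z,X_\theta\}^2\phi_{\theta,v}\big)\approx 4\Tr\big(X_\theta^vZ\phi_{\theta,v}Z\big)$ shows that ``the anticommutator is small'' and ``conjugation by $Z$ deterministically flips the $X_\theta$ outcome'' are \emph{equivalent} statements, so no progress has been made; the entire burden is to rule out the opposite (commuting) alternative, in which $\Tr\big(X_\theta^vZ\phi_{\theta,v}Z\big)\approx\Tr(\phi_{\theta,v})$. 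Your proposed mechanism for doing this --- that $\Pi\phi\Pi$ shifts the relative phase $d\cdot(J(x_0)+J(x_1))$ by $4\bmod 8$ and hence ``physically'' flips the high bit, so that $X_\theta$ must answer $v\oplus 1$ on $\tilde\phi_{\theta,v}$ --- presupposes precisely the qubit structure that the rigidity argument is trying to establish. For a dishonest device, $M^d$ and $X_\theta$ are arbitrary efficient measurements; the labels $(\hat\theta(d),\hat v(d))$ are classical functions computed by the verifier from the trapdoor, not properties of the post-measurement state, and there is no ``physical high bit'' until after the isometry of Lemma~\ref{lem:pauli-c} has been constructed from the very anticommutation relation at issue. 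Moreover, the tool you propose for transporting part~A across the branch negation cannot do the job: Lemma~\ref{lem:break} and the adaptive hardcore bit yield only \emph{indistinguishability/unbiasedness} statements (the $X_\theta$ outcome carries no computable information about $\hat v(d)$), which are symmetric under $v\mapsto v\oplus1$ and therefore can never certify that an outcome is deterministically \emph{flipped} rather than preserved.

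The paper closes this step differently, and the difference is essential. From Corollary~\ref{cor:break} (plus Lemma~\ref{lem:d-uniform} for normalization) one gets the \emph{per-branch pinched} unbiasedness $\sum_b\Tr\big(X_\theta^vZ^b\tilde\phi_\theta Z^b\big)=\tfrac12+O(\sqrt\eps)$; combined with the near-eigenstate property from part~A, this pair of conditions feeds into the spectral argument of~\cite[Lemma 7.2]{brakerski2018certifiable} (with $M=X_\theta^v$, $\Pi=Z^0$), which shows that the operator $Z^0X_\theta^vZ^0+Z^1X_\theta^vZ^1$ has almost all of its mass, relative to $\phi_{\theta,v}$, on eigenvalues near $\tfrac12$ --- and that is what forces $\{Z,X_\theta\}^2$ to be small, with the $\eps^{1/4}$ loss. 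Note that in your sketch you discard exactly this information: you retain only the $Z$-conjugated consequence $\Tr(X_\theta Z\phi_\theta Z)=O(\sqrt\eps)$ of Corollary~\ref{cor:break}, which (as you correctly observe) is consistent with both commutation and anticommutation, whereas the pinched, per-$b$ form is strictly stronger and is the ingredient that breaks the tie. To repair the proof you should replace the ``branch-flip'' argument by this spectral step; the rest of your outline (the use of Lemma~\ref{lem:zpi} and the error accounting) is then unnecessary.
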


\begin{proof}
The assumption that $D$ succeeds with probability $1-\eps$ in part A. of the $X_\theta$-measurement test implies that on average  over $y\in\mY$ and $\theta$ distributed according to $\Tr(\phi_\theta)$, 
\begin{equation}\label{eq:ac-1}
 \sum_{v\in\{0,1\}}\,\Tr(X_\theta^v \phi_{\theta,v}) \geq 1-\eps\;.
\end{equation}
Let $\tilde{\phi}_{\theta}$ be the normalized state $\phi_{\theta}/\Tr(\phi_{\theta})$. %Then, on average over $y$ and $(u,v)=(\hat{u}(d),\hat{v}(d))$,
%\[\Tr(X_\theta^v \tilde{\phi}_{u,v}) \geq 1-\eps\;.\]
Using Lemma~\ref{lem:d-uniform} to argue that the renormalization is roughly uniform for all but a negligible fraction of all $y$, Corollary~\ref{cor:break} implies that on average over $y$, 
\[\big|\sum_b\Tr( X_\theta^0 Z^b \tilde{\phi}_{\theta} Z^b) - \sum_b\Tr( X_\theta^1 Z^b \tilde{\phi}_{\theta} Z^b)\big|\,=\, O\big(\sqrt{\eps}\big)+\negl(\lambda)\;.\]
Since $\sum_{b,v}\Tr( X_\theta^v Z^b \tilde{\phi}_{\theta} Z^b)=1$, it follows that for any $\theta\in\{0,1,2,3\}$ and $v\in\{0,1\}$,
\begin{equation}\label{eq:ac-2} 
\mu_{\theta,v} = \Big| \frac{1}{2} - \sum_b\Tr( X_\theta^v Z^b \tilde{\phi}_{\theta} Z^b) \Big| \,=\,O\big(\sqrt{\eps}\big)\;.
\end{equation}
Conditions~\eqref{eq:ac-1} and~\eqref{eq:ac-2} place us in a position to apply~\cite[Lemma 7.2]{brakerski2018certifiable}, with $\phi = \phi_{\theta,v}$ (renormalized), $M = X_\theta^v$, and $\Pi = Z^0$. Taking $\omega = \frac{1}{2}+\Omega(\eps^{1/4})$, the lemma implies that the projection $K$ on eigenspaces of the operator
\[\frac{1}{2}\big(ZX_\theta^v Z + X_\theta^v\big) \,=\, Z^0 X_\theta^v Z^0 + Z^1 X_\theta^v Z^1\]
with associated eigenvalue bounded away from $\frac{1}{2}$ by $\Omega(\eps^{1/4})$ satisfies $\Tr((\Id-K)\phi) = O(\sqrt{\eps})$. Thus for $v\in\{0,1\}$, $\Tr(([Z,X_\theta^v]-\frac{1}{2}Z)^2 \phi)=O({\eps}^{1/4})$. The lemma follows.
\end{proof}

Lemma~\ref{lem:ac} specifies that $Z$ and $X_\theta$ are close to anti-commuting on the state $\phi_\theta$. The following lemma uses the collapsing property and the hardcore bit property to argue that anti-commutation extends to any $\phi_u$, for $u\in\{0,1,2,3\}$. 

\begin{lemma}\label{lem:ac-2}
Under the same assumptions as Lemma~\ref{lem:ac}, on average over $y\in\mY$ and for all $\theta\in\{0,1,2,3\}$, 
\[ \sum_{u\in\{0,1,2,3\}}\Tr\big(\{Z,X_{\theta}\}^2 \phi_u\big) \,\leq\, \eps'_{\ac}\;,\]
for some $\eps'_{\ac}=O(\sqrt{\eps_\ac})$.
\end{lemma}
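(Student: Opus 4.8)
The plan is to exploit two facts: Lemma~\ref{lem:ac} already controls the ``diagonal'' anticommutator $\Tr(\{Z,X_\theta\}^2\phi_\theta)$, and by Lemma~\ref{lem:d-uniform} the index $u=\hat\theta(d)$ is computationally hidden from the device. The quantity to be bounded, $\sum_{u}\Tr(\{Z,X_\theta\}^2\phi_u)$ for each fixed $\theta$, is a weighted average (with weights $\Tr(\phi_u)$ summing to $\Tr(\phi)\le1$) of the conditional values $\Tr(\{Z,X_\theta\}^2\tilde\phi_u)$, where $\tilde\phi_u=\phi_u/\Tr(\phi_u)$. It therefore suffices to show that every one of these conditional values is $O(\eps_\ac)$, given that the single one Lemma~\ref{lem:ac} controls, $u=\theta$, is.

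First I would observe that $\tfrac14\{Z,X_\theta\}^2$ is the acceptance probability of an efficient binary measurement. Using $Z^2=X_\theta^2=\Id$ one has $\{Z,X_\theta\}^2=2\Id+ZX_\theta ZX_\theta+(ZX_\theta ZX_\theta)^\dagger$, so $0\preceq\tfrac14\{Z,X_\theta\}^2\preceq\Id$ and $\Tr\big(\tfrac14\{Z,X_\theta\}^2\rho\big)=\tfrac12\big(1+\mathrm{Re}\,\Tr(ZX_\theta ZX_\theta\,\rho)\big)$, which is exactly the acceptance probability of a Hadamard test applied to the unitary $ZX_\theta ZX_\theta$. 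Since the device is efficient its observables $Z,X_\theta$, hence their controlled reflections, can be implemented efficiently and without any trapdoor, so this yields an efficient, trapdoor-free test $T_\theta$ with $\Pr[T_\theta\text{ accepts }\rho]=\tfrac14\Tr(\{Z,X_\theta\}^2\rho)$.

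Next I would transfer the bound from $u=\theta$ to all $u$ by a guessing argument. By Lemma~\ref{lem:ac} and non-negativity of each summand, $\Tr(\{Z,X_\theta\}^2\phi_\theta)\le\eps_\ac$ on average over $y$, and by Lemma~\ref{lem:d-uniform} the normalizations satisfy $\Tr(\phi_{\theta,v})=\tfrac18+\negl(\lambda)$, whence $\Pr[T_\theta\text{ accepts }\tilde\phi_\theta]=O(\eps_\ac)$. Suppose some $u_0$ had $\Pr[T_\theta\text{ accepts }\tilde\phi_{u_0}]$ non-negligibly larger. Then, given $(y,d,\phi_{\hat\theta(d),\hat v(d)})$, the procedure that runs $T_\theta$ and guesses $u_0$ on acceptance and $\theta$ on rejection predicts $\hat\theta(d)$: averaging over the near-uniform distribution of $(\hat\theta(d),\hat v(d))$ (Lemma~\ref{lem:d-uniform}), a short calculation shows it beats the trivial success probability $\tfrac14$ by a non-negligible amount, contradicting Lemma~\ref{lem:d-uniform}. (The collapsing property, already used to establish Lemma~\ref{lem:d-uniform}, is what makes running the efficient $T_\theta$ on the device's post-$M$ state a legitimate efficient attack in the hardcore-bit game.) Hence $\Tr(\{Z,X_\theta\}^2\tilde\phi_u)=O(\eps_\ac)$ for every $u$, and summing with weights $\Tr(\phi_u)$ gives $\sum_u\Tr(\{Z,X_\theta\}^2\phi_u)=O(\eps_\ac)\subseteq O(\sqrt{\eps_\ac})$, which even improves the stated bound (explaining the slack in the square root).

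The step I expect to be load-bearing is realizing $\{Z,X_\theta\}^2$ as an efficient, trapdoor-free measurement, since that is exactly what lets Lemma~\ref{lem:d-uniform} be invoked. The remaining obstacle is bookkeeping: matching the $d$- and $v$-averaging implicit in $\phi_u=\phi_{u,0}+\phi_{u,1}=\sum_{d:\hat\theta(d)=u}M^d\phi M^d$ with the single-$d$ state $\phi_{\hat\theta(d),\hat v(d)}$ handed to the predictor, and checking that the near-uniformity of $(\hat\theta,\hat v)$ makes the normalization factors equal to $\tfrac18$ up to negligible error, so that the acceptance probability of $T_\theta$ on $\tilde\phi_u$ and the target $\Tr(\{Z,X_\theta\}^2\phi_u)$ line up.
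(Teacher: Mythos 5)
Your proposal is correct and follows essentially the same route as the paper: both realize $\tfrac14\{Z,X_\theta\}^2$ as the acceptance probability of an efficient (trapdoor-free) Hadamard-type test, and both rule out any $u$ with a noticeably larger value via a guessing reduction to the unpredictability of $\hat{\theta}(d)$, anchoring at the $u=\theta$ value supplied by Lemma~\ref{lem:ac}. The only cosmetic differences are that you invoke Lemma~\ref{lem:d-uniform} directly (where the paper unwinds to the collapsing property plus the adaptive hardcore bit property~\eqref{eq:adaptive-hardcore-2}) and your guesser outputs $\theta$ rather than a uniform $u$ on rejection; both choices are fine, and like the paper's argument yours in fact yields the stronger bound $O(\eps_\ac)+\negl(\lambda)$.
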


\begin{proof}
First we observe that for a (possibly unknown) $u$ the value $\Tr\big(\{Z,X_{\theta}\}^2 \phi_u\big)$ can be estimated efficiently. This is because for any $\ket{\psi}$, it is possible to implement
\begin{equation}\label{eq:ac-2-1}
 \ket{\psi} \mapsto \frac{1}{\sqrt{2}}\big( \ket{\psi}\ket{0}+\ket{\psi}\ket{1}\big)\mapsto \frac{1}{\sqrt{2}}\big( ZX_\theta \ket{\psi}\ket{0}+X_\theta Z\ket{\psi}\ket{1}\big)\;,
\end{equation}
at which point a measurment of the last qubit in the Hadamard basis returns $\ket{+}$ with probability $\frac{1}{2}\bra{\psi}\{Z,X_{\theta}\}^2 \ket{\psi}$.
By the collapsing property, $\sum_u \Tr\big(\{Z,X_{\theta}\}^2 \phi_u\big)$ is within negligible distance of $\sum_u \Tr\big(\{Z,X_{\theta}\}^2 \tilde{\phi}_u\big)$, where $\tilde{\phi}_u$ is the result of first measuring $\{\Pi^{(b,x_b)}\}$ on $\phi$ and then measuring $\{M^d\}$. 

Now suppose for contradiction that there exists an $u'$ such that $\Tr\big(\{Z,X_{\theta}\}^2 \phi_{u'}\big)$ is noticeably larger than $\Tr\big(\{Z,X_{\theta}\}^2 \phi_u\big)$, for all $u\neq u'$. As argued above, by the collapsing property the same holds with respect to the states $\tilde{\phi}_{u'}$ and $\tilde{\phi}_u$. 

Consider the following efficient procedure $\mA$. Starting from $\phi$, measure $\{\Pi^{(b,x_b)}\}$ to obtain $(b,x_b)$. Then measure $\{M^d\}$ to obtain $d$. Finally, implement the test described in~\eqref{eq:ac-2-1}. If the outcome is $\ket{+}$, return $u'$. If the outcome is $\ket{-}$, repeat a uniformly random $u\in\{0,1,2,3\}$. 

Then $\mA$ returns $(b,x_b,d,u')$ such that $u'=\hat{\theta}(d)$ with probability noticeably larger than $1/4$, violating the adaptive hardcore bit property~\eqref{eq:adaptive-hardcore-2}.
\end{proof}

\subsubsection{$X_\theta$-measurement test, part B}
\label{sec:xpartb}

\begin{lemma}\label{lem:xtheta-rigid}
Let $D= (\phi,\Pi,M,Z,\{X_\theta\})$ be an efficient device, such that $D$ succeeds with probability $1$ in the preimage test, with probability at least $1-\eps$ in both the $Z$-measurement test and part A. of the $X_\theta$-measurement test, and with probability at least $(1-\eps)\opt_{B}$ in part B. of the $X_\theta$-measurement test. 
 Then on average over $y\in\mY$, 
\[ \sum_{u\in\{0,1,2,3\}}\frac{1}{2}\sum_{\theta\in\{0,1\}}\Tr\big(\{X_\theta,X_{\theta+2}\}^2 \phi_u\big) \,\leq\, \eps''_\ac\;,\]
for some $\eps''_\ac = O((\eps'_\ac)^{1/4})$.
\end{lemma}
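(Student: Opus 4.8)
The plan is to bound the two anticommutators $\{X_0,X_2\}$ (the $\theta=0$ term) and $\{X_1,X_3\}$ (the $\theta=1$ term) separately, in each case first establishing the estimate on the ``diagonal'' states $\phi_1,\phi_3$ produced when $\hat\theta\in\{1,3\}$, and then promoting it to all $\phi_u$, $u\in\{0,1,2,3\}$, by the collapsing-plus-hardcore-bit argument already used in Lemma~\ref{lem:ac-2}. Throughout I would work in the frame provided by the relations $\{Z,X_\theta\}\approx 0$ of Lemma~\ref{lem:ac-2}, which place every $X_\theta$ ``in the $\sigma_X$--$\sigma_Y$ plane'' relative to $Z$ and thereby license the planar QRAC analysis of Lemma~\ref{lem:qrac}.

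For the $\theta=0$ term the bound is essentially immediate. The states on which the QRAC test of part B acts are precisely the post-measurement states $\phi_{1,v}$ and $\phi_{3,v}$ of Definition~\ref{def:phitheta}, under the identification of the four QRAC labels $u\in\{1,3,5,7\}$ with the pairs $(\hat\theta,\hat v)\in\{1,3\}\times\{0,1\}$ via the angle $\hat\theta\tfrac{\pi}{4}+\hat v\pi$, and the two decoding observables are $X_0,X_2$. Since the device passes part B with probability at least $(1-\eps)\opt_Q$, Lemma~\ref{lem:qrac} applied to this QRAC — using Lemma~\ref{lem:d-uniform} to argue that the renormalizations of the $\phi_{\theta,v}$ are negligibly close to uniform — yields $\tfrac12\big(\Tr(\{X_0,X_2\}^2\phi_1)+\Tr(\{X_0,X_2\}^2\phi_3)\big)=O(\eps)$.

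The $\theta=1$ term is the crux, since the protocol never pits $X_1,X_3$ against a QRAC directly. Here I would extract more from the \emph{near-optimality} of part B than the bare conclusion $\{X_0,X_2\}\approx 0$: re-running the optimality computation of \cite[Section~3.4]{ambainis2008quantum} while also tracking the \emph{encoded} states shows that a QRAC of value $(1-\eps)\opt_Q$ forces each $\phi_{\hat\theta,v}$ to be polynomially (in $\eps$) close to the optimal encoding state, which in the honest frame is $\ket{+_{\frac{\pi}{4}}}$ for $\phi_1$ and $\ket{+_{\frac{3\pi}{4}}}$ for $\phi_3$ (with $\phi_{1,1},\phi_{3,1}$ antipodal to these). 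On the other hand part A of the $X_\theta$-measurement test guarantees $\sum_v\Tr(X_1^v\phi_{1,v})\geq 1-\eps$ and $\sum_v\Tr(X_3^v\phi_{3,v})\geq 1-\eps$, i.e.\ that $X_1$ distinguishes the antipodal pair $\phi_{1,0},\phi_{1,1}$ and likewise $X_3$ for $\phi_{3,0},\phi_{3,1}$; this pins the $+1$ eigenvector of $X_1$ to $\ket{+_{\frac{\pi}{4}}}$ and of $X_3$ to $\ket{+_{\frac{3\pi}{4}}}$. Since these axes are orthogonal, $X_1$ and $X_3$ must (approximately) anticommute, giving $\tfrac12\big(\Tr(\{X_1,X_3\}^2\phi_1)+\Tr(\{X_1,X_3\}^2\phi_3)\big)$ polynomially small. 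I expect to carry out this step after first fixing $Z\mapsto\sigma_Z$ and $X_0\mapsto\sigma_X$ via Lemma~\ref{lem:pauli-c}, so that all the geometry takes place on one concrete qubit tensored with an auxiliary.

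Finally, with both anticommutators controlled on $\phi_1,\phi_3$, I would promote each estimate to all $\phi_u$ exactly as in Lemma~\ref{lem:ac-2}: the quantity $\Tr(\{X_\theta,X_{\theta+2}\}^2\phi_u)$ is efficiently estimable by the interferometric circuit~\eqref{eq:ac-2-1}, so by the collapsing property together with the adaptive hardcore-bit property no efficient procedure may predict $\hat\theta(d)$ from $\phi_u$; this forces the anticommutator value to be (nearly) independent of $u$, and since it is small on the odd states it must be small on every $\phi_u$. Summing the two $\theta$-contributions and the four $u$-contributions then yields the claimed $\eps''_\ac=O((\eps'_\ac)^{1/4})$, the quartic-root loss being inherited from the rigidity and Cauchy--Schwarz steps. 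The hard part will be the $\theta=1$ case: Lemma~\ref{lem:qrac} as stated controls only the \emph{measurements} $X_0,X_2$, so obtaining $\{X_1,X_3\}\approx 0$ genuinely requires the stronger states-rigidity of the near-optimal QRAC combined with the part-A alignment, and turning the Bloch-sphere picture into a quantitative operator statement on the full Hilbert space is the delicate point.
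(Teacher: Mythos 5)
Your plan is sound and, for the $\{X_0,X_2\}$ half, coincides with the paper's proof: after Lemma~\ref{lem:ac-2} one applies Lemma~\ref{lem:pauli-c} to $(Z,X_0,X_2)$, obtaining $X_0\simeq\sigma_X\otimes\Id$ and $X_2\simeq\sigma_X\otimes A_X+\sigma_Y\otimes A_Y$; the paper then jointly diagonalizes $A_X,A_Y$ and, in each eigenblock $\ket{v_j}$, reads off a genuine single-qubit $2\mapsto 1$ QRAC with states $\rho^{(j)}_{u,v}$ and observables $\sigma_X$ and $\bra{v_j}A_X\ket{v_j}\sigma_X+\bra{v_j}A_Y\ket{v_j}\sigma_Y$, to which Lemma~\ref{lem:qrac} applies on average over $(y,j)$. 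This block decomposition is the precise form of your ``one concrete qubit tensored with an auxiliary'', and it is also how the paper covers all $u$ at once: within a block the anticommutator of two planar single-qubit observables is a multiple of the identity, so the bound transfers from the odd $\phi_u$ to $\phi_0,\phi_2$ once the block distributions are argued (via Lemma~\ref{lem:d-uniform}) to be essentially the same for all $u$; your alternative route via the estimation circuit~\eqref{eq:ac-2-1} and the hardcore-bit property is equally valid and mirrors Lemma~\ref{lem:ac-2}. (A quantitative nit: the reduction through Lemma~\ref{lem:pauli-c} costs polynomial factors, so the diagonal bound is $O(\poly(\eps))$ rather than $O(\eps)$; you absorb this correctly in your final accounting.) Where you genuinely depart from the paper is the $\{X_1,X_3\}$ term: the paper dismisses it with ``the case of $X_1$ and $X_3$ is similar'' even though no test ever pits $X_1$ against $X_3$ as a QRAC pair, and you are right that the honest argument must go through state rigidity of the near-optimal QRAC combined with the part-A alignment of $X_1,X_3$ with the encoded states. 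The one step to be careful with is exactly the one you flag: part A constrains $X_1$ (resp.\ $X_3$) only on $\phi_{1,v}$ (resp.\ $\phi_{3,v}$) \emph{together with their particular auxiliary registers}, so ``the $+1$ eigenvector of $X_1$ is $\ket{+_{\pi/4}}$'' is not yet an operator statement. A clean way to finish is to apply Lemma~\ref{lem:pauli-c} to the triple $(Z,X_0,X_1)$, which in the same frame writes $X_1\simeq\sigma_X\otimes B_X+\sigma_Y\otimes B_Y$ with $B_X,B_Y$ commuting and $B_X^2+B_Y^2=\Id$; near-saturation of part A on $\phi_{1,v}\approx\proj{+_{\frac{\pi}{4}+v\pi}}\otimes\sigma_v$ forces $\Tr\big((B_X+B_Y)\sigma_v\big)\approx\sqrt{2}$, hence $B_X\approx B_Y\approx\frac{1}{\sqrt{2}}\Id$ on the relevant support, i.e.\ $X_1\approx\sigma_{X,\pi/4}\otimes\Id$ there (and symmetrically $X_3\approx\sigma_{X,3\pi/4}\otimes\Id$), after which the anticommutation and the extension to all $\phi_u$ follow as in your last paragraph.
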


\begin{proof}
We perform a reduction to Lemma~\ref{lem:qrac}. 
The main work we need to do is argue that the observables $X_0$ and $X_{2}$ can be represented as observables acting on the same qubit. (The case of $X_{1}$ and $X_{3}$ is similar.)

Applying Lemma~\ref{lem:ac-2} for $\theta=0$ and $\theta=2$ followed by Lemma~\ref{lem:pauli-c} we deduce that there is an isometry $V:\mH_\reg{D}\to \C^2 \otimes \mH_{{\reg{D}'}}$ and $\delta' = O(\sqrt{\eps_\ac})$ under which which $Z\simeq_{\delta'} \sigma_Z\otimes \Id$, $X_0 \simeq_{\delta'} \sigma_X \otimes \Id$, and $X_{2}\simeq_{\delta'} \sigma_X \otimes A_X + \sigma_Y \otimes A_Y$, where $A_X$, $A_Y$ are Hermitian commuting such that $A_X^2 + A_Y^2 = \Id$. 

Let $\{\ket{v_j}\}$ be a joint diagonalization basis of $A_X$ and $A_Y$. Let $\rho = \sum_{u\in\{0,1,2,3\}} V\phi_u V^\dagger$, as a density matrix on $\C^2 \otimes \mH_{{\reg{D}'}}$ ($\rho$ implicitly depends on $y$, so it is not normalized). Define a distribution $p_{j,y} = \Tr(\rho^{(j)})$, with $\rho^{(j)}$ the single-qubit density matrix  
\begin{equation}\label{eq:xtheta-1}
\rho^{(j)} = (\Id \otimes \bra{v_j}) \rho (\Id \otimes \ket{v_j})\;.
\end{equation} 
For any $j,y$ define a $2\mapsto 1$ QRAC as follows. The encoding of $u+4v \in\{1,3,5,7\}$, with $u\in\{1,3\}$ and $v\in\{0,1\}$, is the renormalized density matrix $\rho^{(j)}_{u,v}$, defined as $\rho^{(j)}$ in~\eqref{eq:xtheta-1} with $\rho_{u,v}$ instead of $\rho$. The observables are $\sigma_X$ and $\bra{v_j}A_X\ket{v_J} \sigma_X + \bra{v_j}A_Y\ket{v_j}\sigma_Y$. 

Using Lemma~\ref{lem:d-uniform} and the assumption on the device's success probability in part B. of the $X_\theta$-measurement test it follows that this QRAC, on average over $(y,j)$, satisfies the assumption of Lemma~\ref{lem:qrac}, for some $\delta = O(\sqrt{\delta'})$. The conclusion follows. 
\end{proof}

We end with the proof of the main lemma of the section, Lemma~\ref{lem:rigidity}. 

\begin{proof}[Proof of Lemma~\ref{lem:rigidity}]
Fix a strategy for the prover that is accepted with probability at least $1-\eps$ in the qubit preparation test. Then the strategy is accepted with probability at least $1-2\eps$ in the preimage test, and at least $1-4\eps$ in each of the $Z$-measurement test and the $X_\theta$-measurement test. 

%Using Lemma~\ref{lem:d-uniform}, the approximations of Lemma~\ref{lem:ac} and Lemma~\ref{lem:xtheta-rigid} hold uniformly when the anticommutators are evaluated on $\phi = \frac{1}{4}\sum_u \phi_u$. 
Applying Lemma~\ref{lem:ac-2} and Lemma~\ref{lem:xtheta-rigid} followed by Lemma~\ref{lem:pauli-c} to $Z$, $X_0$ and $X_2$  if follows that there exists an efficient isometry $V$ from $\mH_B$ to $\C^2 \otimes \mH_{B'}$ under which $Z \simeq_{\delta_1} \sigma_Z \otimes \Id$, $X_0 \simeq_{\delta_1} \sigma_{X}$ and $X_2 \simeq_{\delta_1} \sigma_Y \otimes \Id$, for some $\delta_1 = O(\sqrt{\eps''_{ac}})$. It follows from success in part (b)(ii)A. that, under the isometry, for $\theta\in\{0,2\}$ and $v\in\{0,1\}$ the state $\phi_{\theta,v}$ is within $O(\delta_1)$ of $\proj{+_{\theta\frac{\pi}{4}+v\pi}} \otimes \proj{\aux_{\theta,v}}$, for some states $\ket{\aux_{\theta,v}}$. Using success in part (b)(ii)B. (the QRAC test) and applying Lemma~\ref{lem:qrac} it similarly follows that for $\theta \in \{1,3\}$ and $v\in\{0,1\}$ the state $\phi_{\theta,v}$ is, under the same isometry, within $O(\delta_1)$ of a state of the form $\proj{+_{\theta\frac{\pi}{4}+v\pi}} \otimes \proj{\aux_{\theta,v}}$. 

Recall that by Lemma~\ref{lem:d-uniform} the states $\phi_{\theta}$ are computationally indistinguishable. According to the previous paragraph, 
\begin{align*}
 \phi_\theta &\simeq_{\delta_1} \frac{1}{2}(\proj{0}  + \proj{1} \big)\otimes \big(\proj{\aux_{\theta,0}} + \proj{\aux_{\theta,1}} \big) \\
&\qquad +  \big(e^{i\theta\frac{\pi}{4}} \ket{0}\bra{1} + e^{-i\theta\frac{\pi}{4}} \ket{1}\bra{0}\big) \otimes \big(\proj{\aux_{\theta,0}} - \proj{\aux_{\theta,1}} \big) \;.
\end{align*}
Since the operators $e^{i\theta\frac{\pi}{4}} \ket{0}\bra{1} + e^{-i\theta\frac{\pi}{4}} \ket{1}\bra{0}$ have constant trace distance for distinct values of $\theta$, it follows that $\proj{\aux_{\theta,0}} \approx_c \proj{\aux_{\theta,1}}$ for all $\theta$, and that they are computationally indistinguishable for different values of $\theta$.

This gives the second condition in the lemma. To obtain the first, recall that under $V$ it holds that $Z \simeq_{\delta_1} \sigma_Z \otimes \Id$. Using success in part (b)(i) of the protocol the state at the beginning of step (b)(i) is of the form $\sum_b \proj{b} \otimes \proj{b} \otimes \proj{\aux_b}$, where the first $b$ is held by the verifier and $\ket{\aux_b}\in\mH_{B'}$ are abitrary (not necessarily normalized). Using the collapsing property (Lemma~\ref{lem:collapse}), for $b\in\{0,1\}$, $\proj{\aux_b}$ is computationally indistinguishable from any of the $\proj{\aux_{\theta,0}}+\proj{\aux_{\theta,1}}$.
\end{proof}

\subsection{Real protocol for remote state preparation}
\label{sec:protocol}

In this section we introduce a many-round protocol that repeatedly calls the qubit preparation test. Eventually, the protocol returns either ``abort'', or an angle $\theta \in \Theta$. The protocol is described in Figure~\ref{fig:protocol}.

\begin{figure}[htbp]
\rule[1ex]{16.5cm}{0.5pt}\\
Let $\lambda$ be a security parameter, $N\geq 1$ a maximum number of rounds, $\delta$ an error tolerance parameter, and $W\in\{X,Z\}$ a basis choice.\\
At the start of the protocol, the verifier communicates $N$ to the prover. The verifier privately samples a number of rounds $R \leftarrow_U \{1,\ldots,N\}$.
\begin{enumerate}
\item For $i=1,\ldots,R$, the verifier executes the qubit preparation test, Figure~\ref{fig:qubit}. They record the outcome of the verifier in the test: either $pass$, or $fail_p$, $fail_Z$, $fail_{X}$ or $fail_Q$. 
\item The verifier sets $flag\leftarrow abort$ if any of the following conditions is satisfied:
\begin{enumerate}
\item The fraction of preimage tests that returned $flag=fail_p$ is larger than $\delta$;
\item The fraction of $Z$-measurement tests that returned $flag=fail_Z$ is larger than $\delta$;
\item The fraction of $X_\theta$-measurement tests, part A., that returned $flag=fail_X$ is larger than $\delta$;
\item The fraction of $X_\theta$-measurement tests, part B., that returned $flag=fail_Q$ is larger than $(1-\opt_Q)+\delta$.
%\item The empirical distribution on $(u,\theta,v)$ obtained from the $X_\theta$-measurement tests, part C., is within total variation distance larger than $\delta$ from the honest distribution (stated in Lemma~\ref{lem:qubit-completeness}).
\end{enumerate}
\item[] If $flag= abort$, the verifier aborts and sends the message $ERR$ to the prover. 
\item The verifier samples a key $(k,t_k)\leftarrow \Gen_\mF(1^\lambda)$ (if $W=Z$) or $(k,t_k)\leftarrow \Gen_\mG(1^\lambda)$ (if $W=X$), sends $k$ to the prover and keeps the trapdoor information $t_k$ private. 
\item The prover returns a  $y \in \mY$ to the verifier. %For $b\in\{0,1\}$ the verifier uses the trapdoor to compute $\hat{x}_b\leftarrow \Inv_\mF(t_k,b,y)$. 
%\item If $W=Z$ the verifier requests a preimage $(b,x)$. It returns the bit $b$. 
\item  The verifier requests an equation $d\in \Z_8^w$. If $W=Z$ the verifier computes $(b,x_b)\leftarrow\Inv_\mG(t_k,y)$ and returns $b$. If $W=X$ the verifier computes $(\hat{\theta},\hat{v}) = (\hat{\theta}(d),\hat{v}(d))$ and returns $\theta =  \hat{\theta}\frac{\pi}{4} + v\pi$. 
\end{enumerate}
\rule[1ex]{16.5cm}{0.5pt}
\caption{The remote state preparation protocol. See Section~\ref{sec:entcf} for notation associated with the extended NTCF family $\mathcal{F}$.}
\label{fig:protocol}
\end{figure}

To show that the protocol can be used to implement the \RSPV\ resource, which will be done in the next section, we prove the following.  

%\tnote{This is a variation on your former theorem 1.}

\begin{theorem} \label{thm:rigidity}
Suppose the remote state preparation protocol (Figure~\ref{fig:protocol}) with security parameter $\lambda > 0$, maximum number of rounds $N\in\N$ and error tolerance $\delta\in[0,1]$ is executed with an arbitrary quantum polynomial-time prover. Assume that the protocol succeeds with probability at least $\omega$, for some $\omega>0$ such that $N \geq \delta^{-3}\log(2/\delta\omega)$. Then there exists an efficient isometry $\Phi$ and a state $\ket{\aux}$ such that the joint state of the verifier's input bit $W$, his output angle $b$ ($W=Z$) or $\theta$ ($W=X$), and the prover's final state, conditioned on not aborting in step 2.,  is such that 
\begin{multline} \label{eqn:rigidity}
\rho_{\reg{S}\Theta \reg{B}} \,\simeq_\eps\,  p_Z \proj{Z}_\reg{S} \otimes \frac{1}{2}\sum_{b\in \{0,1\}} \proj{b}_{\Theta} \otimes \Phi(\ket{b} \ket{\aux} \big)_\reg{B}  \\
+ p_X \proj{X}_\reg{S} \otimes \frac{1}{8}\sum_{\theta\in \Theta} \proj{\theta}_{\Theta}\otimes \Phi( \ket{+_{\theta}} \ket{\aux})_\reg{B} \;,
\end{multline} 
where $\varepsilon = O(\delta^c) + negl(\lambda)$, for some constant $c > 0$, and $p_Z$, $p_X$ are the verifier's prior probability of choosing $W=Z$ or $W=X$ respectively. Moreover, the honest strategy introduced in the proof of Lemma~\ref{lem:qubit-completeness} succeeds with probability negligibly close to $1$ in the protocol. At the end of the protocol, the verifier returns a uniformly random $b\in\{0,1\}$ and the prover's state is $\ket{b}$ ($W=Z$) or $\theta\in\Theta$ ($W=X$) and the prover's state is $\ket{+_\theta}$. 
\end{theorem}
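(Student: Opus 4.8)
The plan is to treat completeness and soundness separately, reducing the soundness analysis of the many-round protocol to the single-round rigidity statement of Lemma~\ref{lem:rigidity} via a random-stopping-time argument. For completeness I would invoke Lemma~\ref{lem:qubit-completeness}: in each round the honest prover fails the preimage test, the $Z$-measurement test, and part A.\ of the $X_\theta$-measurement test with probability negligibly close to $0$, and fails part B.\ with probability negligibly close to $1-\opt_Q$. Since the honest prover is memoryless across rounds, a Chernoff bound shows that each of the four failure fractions concentrates around its expectation; with $N$ as large as assumed, the probability that any fraction exceeds its threshold ($\delta$ for the first three, $(1-\opt_Q)+\delta$ for the last) is negligible. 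Hence the honest prover does not abort in step 2.\ except with negligible probability, and the explicit computation in Lemma~\ref{lem:qubit-completeness} shows the final extraction round produces the claimed output.

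For soundness, the core observation is that, from the prover's viewpoint, the extraction round (steps 3--5) is indistinguishable from an ordinary test round up to the moment it returns the equation $d$: in both cases it receives a key, returns $y$, is asked for an equation, and returns $d$. Moreover, by the computational indistinguishability of keys sampled from $\Gen_\mF$ and $\Gen_\mG$, an efficient prover cannot adapt its measurement operators to the key type, so the operators $Z,\{X_\theta\}$ used in the extraction round coincide, up to computational indistinguishability, with those it would use in a test round. In particular the post-$d$ state addressed by Lemma~\ref{lem:rigidity} is the same whether the verifier subsequently tests or extracts. Thus it suffices to exhibit a single-round \emph{device} (the prover's state entering the extraction round together with its extraction-round operators) that passes the qubit preparation test with probability $1-O(\delta)$.

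This is where the random stopping time enters. Because $R$ is uniform in $\{1,\dots,N\}$, the extraction round sits at a uniformly random position, so the prover's expected per-round failure probability in the extraction round equals, up to $O(1/N)$, the average per-round failure probability over the protocol. The abort thresholds, together with the assumption that the protocol does not abort with probability at least $\omega$, force this average to be small: if for some sub-test the expected per-round failure rate exceeded its threshold by more than a fluctuation term, then by a martingale (Azuma--Hoeffding) concentration bound the empirical failure fraction would exceed the threshold and the verifier would abort, except with probability small compared to $\omega$, contradicting the lower bound on the non-abort probability. The choice $N\geq \delta^{-3}\log(2/\delta\omega)$ is precisely what makes the concentration tight to additive accuracy $O(\delta)$ with confidence that survives conditioning on the non-abort event. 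Feeding the resulting device into Lemma~\ref{lem:rigidity} then produces an efficient isometry $\Phi$ and auxiliary state $\ket{\aux}$ under which the post-$d$ state is within $\eps=O(\delta^{c})+\negl(\lambda)$ of $\sum_b\proj{b}\otimes\proj{b}\otimes\proj{\aux}$ in the injective case ($W=Z$) and of $\sum_{\theta,v}\proj{\theta}\otimes\proj{v}\otimes\proj{+_{\theta\pi/4+v\pi}}\otimes\proj{\aux}$ in the claw-free case ($W=X$). Averaging over the verifier's prior $p_Z,p_X$ and relabelling the step-5 outputs yields~\eqref{eqn:rigidity}, with the uniformity of $b$ over $\{0,1\}$ and $\theta$ over $\Theta$ coming from Lemma~\ref{lem:d-uniform}.

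I expect the random-stopping-time step to be the main obstacle. The prover is adaptive and stateful, so the per-round failure probabilities are history-dependent random variables and the non-abort event is a global condition correlated across all rounds; converting ``passes a uniformly random round with high probability'' into ``the specific extraction-round device passes with high probability,'' while simultaneously invoking key indistinguishability to align the test and extraction rounds, requires a careful concentration argument and is exactly where the exponent $c$ and the $\delta^{-3}$ dependence of $N$ are pinned down.
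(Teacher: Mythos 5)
Your proposal follows essentially the same route as the paper's proof: completeness via Lemma~\ref{lem:qubit-completeness} plus concentration, and soundness by defining per-round pass indicators, using Azuma's inequality and Markov's inequality together with the random stopping time $R$ to show that the device entering the extraction round passes a single qubit preparation test with probability $1-O(\sqrt{\delta})$, and then invoking Lemma~\ref{lem:rigidity}. Your additional remarks on why the extraction round is indistinguishable from a test round (via key indistinguishability) make explicit a step the paper leaves implicit, but the argument is the same.
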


\begin{proof}
The completeness property follows from Lemma~\ref{lem:qubit-completeness} and a standard concentration bound. 

To show soundness, fix a strategy for the prover that succeeds with probability at least $\omega$. For $i\in\{1,\ldots, N\}$ let $T_i \in \{0,1\}$ be a random variable that equals $1$ if and only if the prover does not cause the verifier to raise a $fail$ flag in the $i$-th round. By assumption on the prover's success probability it holds that $T=\frac{1}{R}\sum_{i=1}^R T_i \geq (1-\delta)\opt$ with probability at least $\omega$, where $\opt = \frac{3}{4}+\frac{1}{4}\opt_Q$. 
 
Applying Azuma's inequality, the probability that $T$ deviates from its expectation by more than $\delta$ is at most $2e^{-\delta^2 R/2}$. It follows that as long as 
\begin{equation}\label{eq:cond-r}
2e^{-\delta^2 R/2}\omega\leq \delta\;,
\end{equation}
the expectation of $T$ conditioned on success satisfies $\Es{}[T|\text{not }abort] \geq (1-\delta)\opt$.  Under the assumption on $N$, $\delta$ and $\omega$ made in the theorem, condition~\eqref{eq:cond-r} is satisfied with probability at least $1-\delta$ over the choice of $R$. Applying Markov's inequality, a randomly chosen round satisfies $\Es{}[T_i|\text{not }abort] \geq 1-O(\sqrt{\delta})$ with probability at least $1-O(\sqrt{\delta})$. Provided such a round is chosen as the $R$-th round, we can apply Lemma~\ref{lem:rigidity} to conclude. 
\end{proof}

\section{Remote state preparation: ideal functionality}
\label{sec:ideal-protocol}
In this section we show that the remote state preparation protocol introduced in Section~\ref{sec:real-protocol} constructs the ideal \RSPV\ resource described in Section~\ref{sec:intro}.
The definition of \RSPV\ (illustrated in Figure~\ref{fig:rspv}) is as follows:
\begin{definition}[Random Remote State Preparation with Verification] \label{def:rspv.ideal}
The resource receives $W\in\{X,Z\}$ from Alice's interface and the bit $c\in \{0, 1\}$ from Bob's interface. If $c=0$ and $W=Z$, Alice receives a uniformly random bit $b\in\{0,1\}$ and Bob receives the state $\ket{b}$. If $c=0$ and $W=X$ Alice receives a uniformly random value $\theta \in \Theta = \{0,\frac{\pi}{4},\ldots,\frac{7\pi}{4}\}$ and Bob receives the state $\ket{+_\theta}$. If $c=1$ both Alice and Bob receive an $ERR$ message. 
\end{definition}

Recall that our goal is to show that an implementation of the RSP protocol described in the previous section, based on a classical channel and a measurement buffer as communication resources,\footnote{The measurement buffer is used each time the measurement test, step 3(b) of the qubit preparation test in Figure~\ref{fig:qubit}, is executed. The classical channel is used for all other steps.} securely implements \RSPV. In the AC language, the implementation of RSP using the communication resources is known as the real protocol, and we denote it BRSP, for buffered remote state preparation. As an abstract functionality, we illustrate it in Figure~\ref{fig:prot.rsp}. Alice has an input $W$,\footnote{Alternatively, we say that Alice and Bob receive their inputs from (and return their outputs to) an environment.} specifying the basis for state preparation, and produces as output either a random $b$ if $W=Z$, a random $\theta$ if $W=X$, or $ERR$ if Bob behaved maliciously. Bob takes as input the bit $c$, specifying whether he should behave honestly or maliciously. Mirroring Alice, his output is either a state $\ket{b}$, a state $\ket{+_{\theta}}$, or $ERR$. The two interact via a classical channel and a measurement buffer according to the specification given in Figure~\ref{fig:prot.rsp}. 

\begin{figure}[htb]
\begin{centering}

\begin{tikzpicture}[opnode/.style={minimum width=1.05cm,minimum height=1.0cm}]
\small

\def\t{5}
\def\u{2.2}
\def\v{1.2}

\node[opnode] (a1) at (-\u,0) {};
\node[opnode] (a2) at (-\u,-\v) {};
\node[opnode] (a3) at (-\u,-2*\v) {$\vdots$};
\node[opnode] (a35) at (-\u,-2.12*\v) {};
\node[opnode] (a4) at (-\u,-3*\v) {};
\node[opnode] (a5) at (-\u,-3.5*\v) {};

\node[opnode] (aa1) at (-\u,0.25) {};
\node[opnode] (aa11) at (-\u,-0.25) {};
\node[opnode] (aa2) at (-\u,-\v+0.25) {};
\node[opnode] (aa22) at (-\u,-\v-0.25) {};
\node[opnode] (aa5) at (-\u,-3.5*\v+0.25) {};
\node[opnode] (aa55) at (-\u,-3.5*\v-0.25) {};

\node[draw,inner sep=.25cm,fit=(a1)(a5)] (a) {};
\node[yshift=-2,above right] at (a.north west) {$\pi_A$};
\node (alice1) at (-\t + 1,0) {};
\node (alice11) at (-\t,-3*\v) {};
\node (alice2) at (-\t,-3.5*\v) {};
\node (alice) at (-\t-.4,-1.75*\v) {\large \textbf{Alice}};

\node[opnode] (b1) at (\u,0) {};
\node[opnode] (b2) at (\u,-\v) {};
\node[opnode] (b3) at (\u,-2*\v) {$\vdots$};
\node[opnode] (b35) at (\u,-2.12*\v) {};
\node[opnode] (b4) at (\u,-3*\v) {};
\node[opnode] (b5) at (\u,-3.5*\v) {};

\node[opnode] (bb1) at (\u,0.25) {};
\node[opnode] (bb11) at (\u,-0.25) {};
\node[opnode] (bb2) at (\u,-\v+0.25) {};
\node[opnode] (bb22) at (\u,-\v-0.25) {};
\node[opnode] (bb5) at (\u,-3.5*\v+0.25) {};
\node[opnode] (bb55) at (\u,-3.5*\v-0.25) {};

\node[draw,inner sep=.25cm,fit=(b1)(b5)] (b) {};
\node[yshift=-2,above right] at (b.north west) {$\pi_B$};
\node (bob1) at (\t - 1,0) {};
\node (bob11) at (\t,-3*\v) {};
\node (bob2) at (\t,-3.5*\v) {};
\node (bob) at (\t+.4,-1.75*\v) {\large \textbf{Bob}};

\node[opnode] (c1) at (0,0) {$\mathcal{R}$};
\node[opnode] (c2) at (0,-\v) {$\pi_{Buff}$};
\node[opnode] (c35) at (0,-2*\v) {$\vdots$};
\node[opnode] (c4) at (0,-3*\v) {};
\node[opnode] (c5) at (0,-3.5*\v) {$\pi_{Buff}$};

\node[opnode] (cc1) at (0,0.25) {};
\node[opnode] (cc11) at (0,-0.25) {};
\node[opnode] (cc2) at (0,-\v+0.25) {};
\node[opnode] (cc22) at (0,-\v-0.25) {};
\node[opnode] (cc5) at (0,-3.5*\v+0.25) {};
\node[opnode] (cc55) at (0,-3.5*\v-0.25) {};

\node[draw,minimum width=1.6cm,inner sep=.03cm,fit=(c1)(c1)] (c) {};
\node[draw,minimum width=1.6cm,inner sep=.03cm,fit=(c2)(c2)] (c) {};
\node[draw,minimum width=1.6cm,inner sep=.03cm,fit=(c5)(c5)] (c) {};
%\node[yshift=-2,above right] at (c.north west) {$\aR$};

\draw[sArrow] (aa1) to (cc1);
\draw[sArrow] (cc11) to (aa11);
\draw[sArrow] (cc1) to (bb1);
\draw[sArrow] (bb11) to (cc11);
%\draw[sArrow] (b1) to (c1);

\draw[sArrow] (aa2) to (cc2);
\draw[sArrow] (cc22) to (aa22);
\draw[sArrow] (bb2) to (cc2);
\draw[sArrow] (cc22) to (bb22);
\draw[sArrow] (b2) to (c2);

\draw[sArrow] (aa5) to (cc5);
\draw[sArrow] (cc55) to (aa55);
\draw[sArrow] (bb5) to (cc5);
\draw[sArrow] (cc55) to (bb55);
\draw[sArrow] (b5) to (c5);

%\draw[sArrow] (c2) to (a2);
%\draw[sArrow] (c2) to (b2);
%\draw[sArrow] (a5) to (c5);
%\draw[sArrow] (b5) to (c5);

\draw[sArrow] (alice1.center) to node[auto,pos=.4] {$W$} (a1);
\draw[sArrow] (bob1.center) to node[auto,swap,pos=.4] {$c$} (b1);
\draw[sArrow] (a5) to node[auto,pos=.6] {$b/\theta/ERR$} (alice2.center);
\draw[sArrow] (b5) to node[auto,swap,pos=.6] {$\;\; \ket{b/+_{\theta}/ERR}$} (bob2.center);
\end{tikzpicture}

\end{centering}
\caption{\label{fig:prot.rsp} The remote state preparation protocol, illustrated here schematically as an AC functionality. Alice and Bob interact with the measurement buffer (which behaves as described in Figure~\ref{fig:meas.buffer}), $\pi_{Buff}$, and the classical channel, $\mathcal{R}$, for a number of rounds. At the end of the interaction, upon success Alice obtains a bit $b$ or an angle $\theta$ and Bob obtains the state $\ket{b}$ or the state $\ket{+_{\theta}}$. Otherwise, both parties obtain $ERR$. While not explicitly shown in the figure, Bob exchanges both classical and quantum messages with the buffer, whereas Alice interacts only classically with the buffer and $\mathcal{R}$.}
\end{figure}
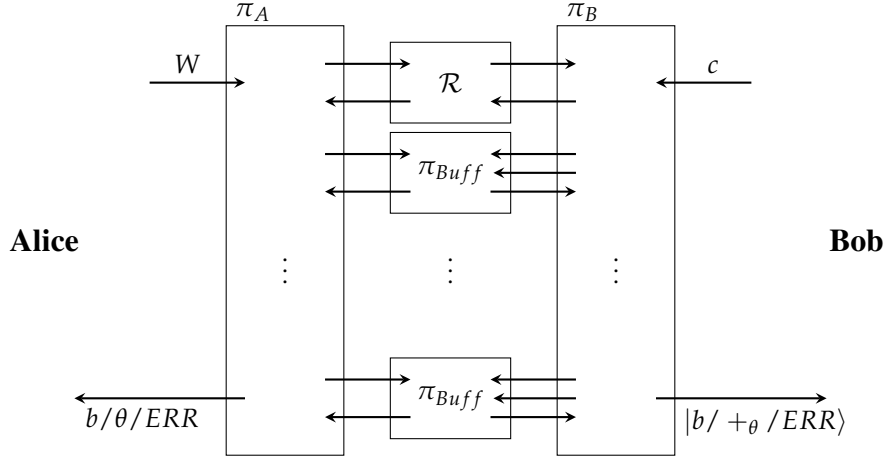

We note that in BRSP the buffer is only needed for step 3(b) of the protocol --- from Figure~\ref{fig:protocol} --- in which there is a constant number of challenges, so that it remains efficient for Bob to forward a specification of each of its measurements. The fact that we build the real protocol from a measurement buffer is necessary for the security proof to go through. Informally, and outside of the AC framework, the measurement buffer is ``without loss of generality'': in any execution of the protocol, Bob's answer to a challenge from Alice is obtained by making a measurement on a quantum state; since we assume that Bob is computationally efficient an explicit description of the measurement exists, and this is all that is needed for the stand-alone security proof.

Obtaining composable security is more subtle, and this is why the buffer is needed. 
Note that its use does \emph{not} preclude Bob from sharing a prior entangled state with the environment, nor from exchanging quantum messages with the environment in-between any two uses of the measurement buffer. In this sense the use of the buffer is comparable yet much less restrictive to the way a ``device'' is defined to obtain composable security of device-independent protocols for e.g. randomness expansion~\cite{portmann_talk}. In that context, Alice (the verifier) interacts with a device that is prepared by Bob (the eavesdropper). Bob is allowed to provide the initial state of the device, but he does not interact with it at any later stage of the protocol, and in particular is not allowed to receive the contents of the internal memory of the device at the end of the protocol. In our setting, this latter point is allowed.

%Formally, the assumption allows us to introduce an equivalent protocol to RSP that we call the \emph{buffered remote state preparation protocol} (BRSP). BRSP is the same as RSP except we assume that whenever in RSP Alice sends a message to Bob, in BRSP Alice sends her message to  a \emph{measurement buffer}, while Bob sends a specification (as a quantum circuit) of the measurement that he would have performed on each of the possible messages of Alice, as well as the state on which the measurement is to be performed (as a quantum state), to the same measurement buffer. The buffer then performs the measurement associated with Alice's challenge, forwards the outcome to Alice, and returns the post-measurement state to Bob. 
% measurements that Bob is instructed to perform in RSP are performed correctly. 
Summarizing, we view the protocol as consisting of three parties $\pi = (\pi_A, \pi_{Buff}, \pi_B)$, where $\pi_A$ denotes Alice's actions, $\pi_B$ denotes Bob's actions and $\pi_{Buff}$ the actions of  the measurement buffer. Whenever step 3 of the protocol is executed Bob sends his state and measurements to the buffer, where it is measured according to the measurement specified by Alice's challenge. The measurement result is sent to both Alice and Bob. Bob, in addition, receives Alice's challenge and the post-measurement state. We now show that BRSP constructs the ideal \RSPV\ resource from classical channels.

\begin{theorem} \label{thm:brsp.rspv}
The buffered remote state preparation protocol implements the ideal \RSPV\ functionality. In other words, let $\mathcal{R}$ denote a classical channel and $\pi = (\pi_A, \pi_{Buff}, \pi_B)$ denote the BRSP protocol:
\begin{itemize}
\item $\pi_A$ takes as input a bit $W$, specifying either the $Z$ basis or the $X$ basis and produces as output either a bit $b$, an angle $\theta$ or the $ERR$ flag. The actions that Alice performs in $\pi_A$ are exactly the same as in the RSP protocol, as described in Figure~\ref{fig:prot.rsp}. $\pi_A$ interacts with the classical channel $\mathcal{R}$.
\item $\pi_B$ takes as input a bit $c$, specifying either honest (when $c=0$) or malicious (when $c=1$) behavior and outputs either $\ket{b}$, $\ket{+_{\theta}}$ or the $ERR$ flag. It ignores the bit $c$ and always behaves honestly, i.e. perform the actions instructed by Alice in RSP. For each measurement test performed, $\pi_B$ sends the state to be measured to $\pi_{Buff}$ and expects to receive the measurement outcome and the post-measurement state.
\item $\pi_{Buff}$ receives a message from Alice. It receives from Bob a specification $\mathcal{F}$ of a measurement to perform for each of Alice's messages, as well as a state $\rho$ to be measured. It measures the state according to $\mathcal{F}(M)$. The measurement outcome is returned to both Alice and Bob. The post-measurement state is returned to Bob.
\end{itemize}
We denote by $\perp_B$ a filtered functionality for Bob that has him set $c = 0$ in \RSPV. Additionally, let $\lambda > 0$ denote the security parameter used in BRSP, and $\delta > 0$ denote the error tolerance parameter of BRSP. Then it holds that
\begin{equation} \label{eqn:brsp.correctness}
\pi_A \mathcal{R} \pi_{Buff} \pi_B \approx_{c, \varepsilon_1} \RSPV \perp_B\;,
\end{equation}
\noindent and there exists a polynomial-time quantum simulator $\sigma_B$ such that
\begin{equation} \label{eqn:brsp.security}
\pi_A \mathcal{R} \pi_{Buff} \approx_{c, \varepsilon_2} \RSPV \,\sigma_B\;,
\end{equation}
where $\varepsilon_1 = \negl(\lambda)$ and $\varepsilon_2 = O(\delta^c) + \negl(\lambda)$, for some constant $c > 0$.
\end{theorem}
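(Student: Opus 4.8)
The plan is to verify the two conditions of Definition~\ref{def:sim} in turn, with essentially all of the technical content imported from the rigidity statement of Theorem~\ref{thm:rigidity}.

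\textbf{Correctness.} For~\eqref{eqn:brsp.correctness} I would observe that running $\pi_B$ is, by construction, the same as having $\perp_B$ preset $c=0$ in \RSPV. I then invoke the completeness half of Theorem~\ref{thm:rigidity} (which rests on Lemma~\ref{lem:qubit-completeness}): the honest prover passes each of the preimage, $Z$-measurement and part~A.\ $X_\theta$-measurement tests with probability negligibly close to $1$, and part~B.\ with probability negligibly close to $\opt_Q$, so a standard concentration bound shows the verifier raises $abort$ only with negligible probability. Conditioned on not aborting, the honest execution leaves Alice with a uniform label and Bob with the matching $\ket{b}$ or $\ket{+_\theta}$, which is exactly the output of $\RSPV\,\perp_B$. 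Accounting for the negligible ENTCF preparation error then gives $\varepsilon_1=\negl(\lambda)$.

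\textbf{The simulator.} For security I would take $\sigma_B$ to be a straight-line simulator that internally runs a faithful copy of the honest verifier $\pi_A$, the classical channel $\mathcal{R}$, and the buffer $\pi_{Buff}$, presenting to the distinguisher precisely the interface it would see in the real protocol. Because every measurement test is routed through $\pi_{Buff}$, the simulator receives explicit circuit descriptions of Bob's observables $Z$ and $\{X_\theta\}$; this is exactly the role of the buffer, namely to prevent Bob from delegating these measurements to the environment and thereby keep them accessible to the simulator. The simulator carries out the verifier's accept/reject bookkeeping of step~2 of Figure~\ref{fig:protocol}: if the verifier would abort it inputs $c=1$ to \RSPV, so that both interfaces receive $ERR$ (matching the real protocol, in which the verifier also sends $ERR$ to the prover); otherwise it inputs $c=0$.

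\textbf{Reconciling the states.} On the non-abort branch \RSPV\ returns a fresh uniform label together with the ideal qubit ($\ket{b}$ or $\ket{+_\theta}$) on Bob's side. Using the recorded circuits the simulator computes the isometry $\Phi$ of Theorem~\ref{thm:rigidity}, which by Lemma~\ref{lem:rigidity} and Lemma~\ref{lem:pauli-c} is efficiently computable from the circuits; it applies $\Phi$ to peel off the single-qubit factor of the prover's workspace, substitutes the ideal qubit supplied by \RSPV, and applies $\Phi^\dagger$ before handing the result back. The whole procedure runs in quantum polynomial time, so $\sigma_B$ is efficient as required.

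\textbf{Indistinguishability and the main obstacle.} The bound~\eqref{eqn:brsp.security} then reduces directly to Theorem~\ref{thm:rigidity}, which states that, up to $\Phi$, the real joint state of $(W,s,\text{Bob})$ is within $\eps=O(\delta^c)+\negl(\lambda)$ of the ideal form in which the auxiliary register $\ket{\aux}$ is \emph{independent} of the label. This independence is the crux of the whole argument: it is what licenses the simulator to discard the protocol-correlated label carried by the prover's qubit and splice in the fresh, decoupled label produced by \RSPV\ without the distinguisher noticing. I expect the main obstacle to be making this splicing step rigorous in the AC framework, i.e.\ checking that replacing the real qubit by the fresh \RSPV\ qubit commutes, up to the stated error, with the distinguisher's remaining operations, and that any side information the distinguisher shares with the environment, carried along in the purifying register of $\ket{\aux}$, is preserved. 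Since $\Phi$ acts only on Bob's register and $\ket{\aux}$ is label-independent, this follows from the rigidity bound combined with the near-uniformity of the label established in Lemma~\ref{lem:d-uniform}, and the accumulated error is $\varepsilon_2=O(\delta^c)+\negl(\lambda)$.
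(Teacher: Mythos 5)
Your overall architecture matches the paper's proof: correctness from completeness, and a simulator that runs the protocol with Bob, exploits the buffer to obtain explicit circuits for $Z$ and $\{X_\theta\}$, computes the isometry $\Phi$ of Theorem~\ref{thm:rigidity}, and splices the fresh \RSPV\ qubit into the single-qubit factor before handing the state back. The abort handling ($c=1$ on reject) and the identification of the label-independence of $\ket{\aux}$ as the crux are also as in the paper.

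There is, however, one genuine gap: you say the simulator ``internally runs a faithful copy of the honest verifier $\pi_A$.'' It cannot. In the AC security condition $\pi_A\mathcal{R}\pi_{Buff} \approx_{c,\eps_2} \RSPV\,\sigma_B$, the simulator $\sigma_B$ is attached only to Bob's interface of the ideal resource; Alice's input $W$ is fed by the distinguisher directly into \RSPV\ and is never revealed to $\sigma_B$ (nor can it be reliably inferred from the single qubit \RSPV\ hands over, since the $\ket{b}$ and $\ket{+_\theta}$ families are not orthogonal). But the honest verifier's behavior in the final round of Figure~\ref{fig:protocol} depends on $W$: the key for the last iteration is drawn from $\Gen_\mF$ or $\Gen_\mG$ according to the basis choice. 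The paper resolves this by having the simulator pick its own basis $W_S$ \emph{uniformly at random}, run the final round with that choice, and then arguing that after the qubit swap and after tracing out the simulator's registers, Bob's marginal is (computationally, up to $O(\delta^c)$) the same regardless of whether $W_S$ agrees with Alice's $W$ --- this rests on the injective invariance of the ENTCF (so Bob cannot distinguish an $\mF$-key from a $\mG$-key) together with the fact that $\ket{\aux}$ is independent of the label. Without this step your splicing argument does not close: the distinguisher sees Alice's true $W$ and output $s$ on one side and Bob's final state on the other, and you must show these remain consistent even though the simulator may have run the ``wrong'' branch of the protocol. Adding the random choice of $W_S$ and the accompanying mismatch argument would complete the proof along the paper's lines.
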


\begin{proof}
Eq.~\eqref{eqn:brsp.correctness} follows immediately from the completeness of BRSP, that is inherited from the completeness of RSP (see Theorem~\ref{thm:rigidity}). Indeed, if Alice, Bob and the buffer follow the protocol, their results are exactly those obtained in the ideal functionality. 

For Eq.~\eqref{eqn:brsp.security}, note that we are assuming that Bob is the only malicious party, whereas Alice and the buffer still follow their honest actions in the protocol. %As mentioned, this is because we can always consider the measurements as being the ideal ones.
Let us consider a simulator $\sigma_B$ that executes the BRSP protocol with Bob. More specifically, the simulator executes the ``buffered'' analogue of the  protocol from Figure~\ref{fig:protocol} with Bob. %This means that all measurements that should be performed by the buffer in BRSP are performed by the simulator. 

If the protocol aborts before step $3$, the simulator sets $c=1$ in the ideal \RSPV\ functionality, indicating $ERR$. If the protocol does not abort, the simulator chooses uniformly at random whether to perform the $W_S=Z$ or the $W_S=X$ run of the protocol. (We denote by $W_S$ the simulator's choice of basis to avoid confusion with $W$, which denotes Alice's choice.)
In step $4$, the simulator takes Bob's state and performs the measurement that returns the string $y$. %In step $5$, Bob sends his final state to the simulator. In an honest run of the protocol, if the simulator chose $W_S=Z$ this state takes the form
%\begin{equation}
%\ket{b'} \ket{x}
%\end{equation}
%for some preimage $x$ of $y$ under the injective function, and a random $b' \in \{0, 1\}$.
%If the simulator chose $W_S=X$, the state takes the form
%\begin{equation}
%\frac{1}{\sqrt{2}} ( \ket{0}\ket{x_0} + \ket{1}\ket{x_1})
%\end{equation}
%for some preimages $x_0$, $x_1$ of $y$ under the claw-free functions.
%In both cases, the measurement that should be performed, on all apart from the first qubit, consists of applying a mod $8$ Fourier transform to the state and then measuring it in the computational basis. This will result in the first qubit being either a computational basis state, if $W_S = Z$, or a $\ket{+_{\theta'}}$ state, if $W_S = X$.
Note that at step 5, from the point of view of Bob, the situation is indistinguishable from step 3(b) in the qubit preparation test (Figure~\ref{fig:qubit}). The simulator requests measurements for Bob associated with that step, that is, a ``preimage measurement'' as well as measurements $Z$ and $X_\theta$ associated with part (b)(ii). (To ensure that Bob does not detect this modification, we may assume that in the buffered protocol execution Bob's measurements for step 3 are always requested at the end of step 2, irrespective of whether step 3 is performed or not, i.e.\ prior to round $R$ or not.)

If the simulator does not receive a state and measurements of matching dimension it sets $c=1$ and causes the ideal functionality to abort.
Assuming the simulator has not aborted after receiving Bob's final state, %it will perform the appropriate measurement. Before returning the state back to the prover, 
it sets $c=0$ in the \RSPV\  functionality, and takes the resulting state. Using the specification of Bob's measurement operators $Z$ and $X_\theta$ it computes the isometry $\Phi$ whose existence is guaranteed by Theorem~\ref{thm:rigidity},\footnote{As long as the measurements are observables of the same dimension the isometry is always well-defined, whether the assumptions of the theorem are satisfied or not.} ``undoes'' the isometry by applying its inverse, replaces the first qubit of the $\reg{B}$ register in~\eqref{eqn:rigidity} by the qubit obtained from \RSPV, and re-applies the isometry. It returns the resulting state to Bob. 
%\gnote{There's no register labelled $B$ in that equation, though; I actually wanted to mention at some point if we should label all the registers in Section~\ref{sec:real-protocol}; the downside is that it overloads the notation a lot, but on the other hand it makes it clearer who has what}\tnote{Well, there kind of is; cf $\rho_{S\Theta B}$ on the lhs. I agree it's not super-clear. It might be good to have more registers, as you do below. For now, I just added more of them in~\eqref{eqn:rigidity}; eventually, I think it's good to have them as much as reasonably possible, while not writing them when it's really too much; I don't see this as a serious inconsistency}
% and swaps it with the first qubit from Bob's state, under the appropriate isometry. It then sends the resulting state as well as the measurement outcome to Bob.
%Essentially, the simulator wants to ensure that the state received by Bob contains the state returned by the ideal functionality, which is consistent with Alice's choice of basis, $W$.
Let us now consider a distinguisher that interacts with either $\pi_A \mathcal{R} \pi_{Buff}$ or $\RSPV \sigma_B$ and which has the initial state $\psi$.
In the first case we  denote the distinguisher's final state as $\tau^{\psi}_{AB}$. Assuming that in the protocol Alice accepts with probability $1 - p^{\psi}$, for some $p^{\psi} > 0$, we have that: 
\begin{equation} \label{eqn:brsp.tau}
\tau^{\psi}_{AB} = (1 - p^{\psi}) \rho^{\psi}_{AB} + p^{\psi} \ket{ERR}\bra{ERR}_A \otimes \gamma^{\psi}_B
\end{equation}
where
\begin{multline} \label{eqn:brsp.taugood}
\rho^{\psi}_{AB} \,\simeq_{\eps_2}\,  p^{\psi}_Z \proj{Z}_A \otimes \frac{1}{2}\sum_{b\in \{0,1\}} \proj{b}_A \otimes \Phi^{\psi}(\ket{b} \ket{\aux^{\psi}} \big)_B + \\
+  p^{\psi}_X \proj{X}_A \otimes \frac{1}{8}\sum_{\theta\in \Theta} \proj{\theta}_A \otimes \Phi^{\psi}( \ket{+_{\theta}} \ket{\aux^{\psi}}))_B 
\end{multline}
and $\gamma^{\psi}_B$ is some state on Bob's side that is consistent with the protocol having aborted and the initial state of the system being $\psi$.
The expression from Eq.~\eqref{eqn:brsp.taugood} follows from the rigidity theorem (Theorem~\ref{thm:rigidity}), since, conditioned on success, the reduced state of Alice and Bob takes the form in~\eqref{eqn:brsp.taugood}. Note that the probabilities for the choices $W=Z$ and $W=X$, respectively, as well as the isometry on Bob's system and his auxiliary state, are determined by the initial state $\psi$.

Now let us consider the case when the distinguisher interacts with the ideal functionality and the simulator. We denote the final state, \emph{prior to the simulator performing the qubit swap}, as $\sigma_{ASB}^{\psi}$. Once again, assuming the probability of acceptance for BRSP (as run by the simulator this time) is $1 - p^{\psi}$, with $p^{\psi} > 0$, we have that:
\begin{equation*}
\sigma^{\psi}_{ASB} = (1 - p^{\psi}) \rho^{\psi}_{ASB} + p^{\psi} \ket{ERR}\bra{ERR}_A \otimes \ket{ERR}\bra{ERR}_S \otimes \gamma^{\psi}_B
\end{equation*}
where $\gamma^{\psi}_B$ is the same as in~\eqref{eqn:brsp.tau}. At this point in the protocol, $\rho^{\psi}_{ASB} = \rho^{\psi}_A \otimes \rho^{\psi}_{SB}$, since, conditioned on acceptance in BRSP, the simulator has not yet interacted with the ideal functionality.
Using the rigidity theorem again, we get
\begin{multline*}
\rho^{\psi}_{SB} \,\simeq_{\eps_2}\,  \frac{1}{2} \proj{Z}_S \otimes \frac{1}{2}\sum_{b\in \{0,1\}} \proj{b}_S \otimes \Phi^{\psi}(\ket{b}_S \ket{\aux^{\psi}}_{SB} \big) + \\
+  \frac{1}{2} \proj{X}_S \otimes \frac{1}{8}\sum_{\theta\in \Theta} \proj{\theta}_S \otimes \Phi^{\psi}( \ket{+_{\theta}}_S \ket{\aux^{\psi}}_{SB})) \;.
\end{multline*}
Note that the probabilities for the $Z$ and $X$ tests are equal, since we've established that the simulator chooses which to perform uniformly at random.
Also note that the $\ket{\aux^{\psi}}$ system is shared between the simulator and Bob.
The simulator now sets $c=0$ in the ideal functionality and receives the state it provides. As already described, it  then ``undoes'' the isometry on the state it has from Bob, replaces the first qubit with the one from the ideal functionality, reapplies the isometry and sends the state to Bob.

As in the interaction with the real protocol, supposing that Alice (as controlled by the distinguisher) chooses to perform the two tests with probabilities $p_Z^{\psi}$ and $p_X^{\psi}$, respectively, the output she  receives from the ideal functionality is either $b$ or $\theta$, with the associated   probabilities. The state that the simulator receives and swaps into Bob's system is of course classically correlated with this output, being either $\ket{b}$ or $\ket{+_{\theta}}$.
If we now write the state of the system upon the completion of this last step we have
\begin{equation*}
\tilde{\sigma}^{\psi}_{ASB} = (1 - p^{\psi}) \tilde{\rho}^{\psi}_{ASB} + p^{\psi} \ket{ERR}\bra{ERR}_A \otimes \ket{ERR}\bra{ERR}_S \otimes \gamma^{\psi}_B\;,
\end{equation*}
where
\begin{equation*}
\tilde{\rho}_{ASB}^{\psi} \,\simeq_{\eps_2}\, p_{Z}^{\psi} \proj{Z}_A \otimes \frac{1}{2} \sum_{b \in \{0, 1\}} \proj{b}_A \otimes \zeta^{\psi}_{SB}(b) + p_{X}^{\psi} \proj{X}_A \otimes \frac{1}{8} \sum_{\theta \in \Theta} \proj{\theta}_A \eta^{\psi}_{SB}(\theta)
\end{equation*}
and $\zeta^{\psi}_{SB}(b)$, $\eta^{\psi}_{SB}(\theta)$ are states on the joint system of the simulator and Bob, given by:
\begin{multline*}
\zeta^{\psi}_{SB}(\mathbf{b}) \,\simeq_{\eps_2}\, \frac{1}{2} \proj{Z}_S \otimes \frac{1}{2}\sum_{b' \in \{0,1\}} \proj{b'}_S \otimes \proj{b'}_S \otimes \Phi^{\psi}(\mathbf{\ket{b}} \ket{\aux^{\psi}} \big)_B + \\
\frac{1}{2} \proj{X}_S \otimes \frac{1}{8}\sum_{\theta' \in \Theta} \proj{\theta'}_S \otimes \proj{+_{\theta'}}_S \otimes \Phi^{\psi}( \mathbf{\ket{b}} \ket{\aux^{\psi}}))_B \;,
\end{multline*}
\begin{multline*}
\eta^{\psi}_{SB}(\boldsymbol{\theta}) \,\simeq_{\eps_2}\, \frac{1}{2} \proj{Z}_S \otimes \frac{1}{2}\sum_{b' \in \{0,1\}} \proj{b'}_S \otimes \proj{b'}_S \otimes \Phi^{\psi}(\ket{\pmb{+}_{\pmb{\theta}}} \ket{\aux^{\psi}} \big)_B + \\
\frac{1}{2} \proj{X}_S \otimes \frac{1}{8}\sum_{\theta' \in \Theta} \proj{\theta'}_S \otimes \proj{+_{\theta'}}_S \otimes \Phi^{\psi}( \ket{\pmb{+}_{\pmb{\theta}}} \ket{\aux^{\psi}}))_B \;.
\end{multline*}
The boldface letters highlight the state that was planted in Bob's system.
Now notice that if we trace out the simulator's system from both of these states we get:
\begin{equation*}
Tr_{S}(\zeta^{\psi}_{SB}) = \zeta^{\psi}_{B}(\mathbf{b}) \,\simeq_{\eps_2}\, \Phi^{\psi}( \mathbf{\ket{b}} \ket{\aux^{\psi}}))_B \;,
\end{equation*}
\begin{equation*}
Tr_{S}(\eta^{\psi}_{SB}) = \eta^{\psi}_{B}(\boldsymbol{\theta}) \,\simeq_{\eps_2}\, \Phi^{\psi}( \ket{\pmb{+}_{\pmb{\theta}}} \ket{\aux^{\psi}}))_B \;.
\end{equation*}
%As a point of clarification, one could ask why the simulator's system is not correlated with Bob's. The reason is that the state on which the isometry acts is a product state and the simulator simply swapped out the first qubit of that state.
Tracing out the simulator from $\tilde{\rho}_{ASB}$ and plugging in the above expressions we obtain
\begin{multline*}
\tilde{\rho}_{AB}^{\psi} \,\simeq_{\eps_2}\, p_{Z}^{\psi} \proj{Z}_A \otimes \frac{1}{2} \sum_{b \in \{0, 1\}} \proj{b}_A \otimes \Phi^{\psi}( \ket{b} \ket{\aux^{\psi}}))_B  + \\ 
+ p_{X}^{\psi} \proj{X}_A \otimes \frac{1}{8} \sum_{\theta \in \Theta} \proj{\theta}_A \Phi^{\psi}( \ket{+_{\theta}} \ket{\aux^{\psi}}))_B \;.
\end{multline*}
Finally, if we trace out the simulator from $\tilde{\sigma}_{ASB}^{\psi}$ and use the above state, we have:
\begin{equation*}
\tilde{\sigma}^{\psi}_{AB} = (1 - p^{\psi}) \tilde{\rho}^{\psi}_{AB} + p^{\psi} \ket{ERR}\bra{ERR}_A \otimes \gamma^{\psi}_B\;.
\end{equation*}
From the triangle inequality $\rho_{AB}^{\psi} \,\simeq_{2\eps_2}\, \tilde{\rho}_{AB}^{\psi}$, and therefore we have that:
\begin{equation*}
\pi_A \mathcal{R} \pi_{Buff} \approx_{c, 2\varepsilon_2} \RSPV \sigma_B\;,
\end{equation*}
concluding the proof.
\end{proof}

\section{Blind and verifiable computation from remote state preparation}
\label{sec:dqc}
In~\cite{dunjko2016blind} the authors show that a measurement-based protocol for blind delegation of quantum circuits, the Universal Blind Quantum Computing protocol of Broadbent et al.~\cite{broadbent2009universal} (BFK) can be constructed from the ideal functionality \RSPB\ and classical communication channels. Their result builds upon the work of Dunjko et al.~\cite{dunjko2014composable}, who showed composable security of the BFK protocol in the AC framework. Dunjko et al.\ also showed composable security of a blind and verifiable variant of the BFK protocol introduced by Fitzsimons and Kashefi~\cite{fitzsimons2017unconditionally} (FK). 
Both protocols are designed to delegate a computation that is expressed in the model of measurement-based quantum computing (MBQC). In this model, a quantum computation is implemented by preparing a graph state (a collection of qubits that are entangled according to the structure of a graph) and then performing adaptive measurements on the qubits in the graph state.
The main difference between the FK protocol and the BFK protocol is the use of traps to ensure verifiability in FK. Informally, trap qubits are qubits initialized in a $\ket{+_\theta}$ state, with $\theta$ chosen uniformly at random from $\Theta$, and such that all neighbors of the trap qubit in the underlying graph state are initialized in a random computational basis state; these are called dummy qubits. The role of the dummy qubits is to isolate the trap qubits from the computation, so that the prover's measurements on the trap qubits can be verified independently of the computation. This isolation happens because, in the specific implementation of MBQC used by the FK and BFK protocols, the graph state is prepared by entangling $\ket{+_\theta}$ states using the Controlled-$Z$ operation. But note that this operation does not create entanglement if either of its input qubits is a state in the computational basis.
We sketch the structure of the FK protocol in Figure~\ref{fig:fk}, at a level that is sufficient to follow the arguments in this section; we refer to the description of Protocol 7 and Protocol 8 in~\cite{fitzsimons2017unconditionally} for full details. 

\begin{figure}[htbp]
\rule[1ex]{16.5cm}{0.5pt}\\
The inputs to the protocol are an error parameter $\eps>0$, a unitary quantum circuit $C$, and a classical input $\psi_A$. 
\begin{enumerate}
\item Alice selects a set of measurement angles  $\{\phi_i\}_{i\in\{1,\ldots,N\}}$, such that each $\phi_i \in \Theta$, that implement the computation specified by $C$. (For clarity we omit the choice of graph and flow.) Alice also selects a set of dummy qubit locations $D\subseteq \{1,\ldots,N\}$ and trap locations $T\subseteq \{1,\ldots,N\}$. Alice determines an update function $C(i,\phi_i,\theta_i,r_i,s)$. 
\item Alice selects angles  $\{\theta_i\}_{i\in\{1,\ldots,N\}}$  uniformly at random from $\Theta$, $\{r_i\}_{i\in\{1,\ldots,N\}}$ uniformly at random from $\{0,1\}$ and $\{d_i\}_{i\in D}$ uniformly at random from $\{0,1\}$. She initializes values $\{s_i\}_{i\in\{1,\ldots,N\}}$ to $0$. 
\item Alice prepares qubits in the state $\ket{d_i}$ for $i\in D$, and $Z^{d'_i} \ket{+_{\theta_i}}$ for $i\notin D$, where $d'_i$ is a predetermined function of $\{d_j\}_{j\in\{1,\ldots,N\}}$, and sends the qubits one by one to Bob.
\item For $i$ from $1$ to $N$:
\begin{enumerate}
\item Alice computes an angle $\delta_i = C(i,\phi_i,\theta_i,r_i,s)$ and sends it to Bob. 
\item Bob returns $b_i\in\{0,1\}$.
\item Alice sets $s_i\leftarrow b_i+r_i$. 
\end{enumerate}
\item Alice accepts if $s_i=r_i$ for all $i\in T$. She returns the state contained in the output qubits of the computation. 
\end{enumerate}
\rule[1ex]{16.5cm}{0.5pt}
\caption{Summary of the FK protocol. For an explanation of the notation and more details, see Protocol 8 in~\cite{fitzsimons2017unconditionally}.}
\label{fig:fk}
\end{figure}

Our goal in this section is to show that by replacing the quantum communication channel used by Alice to send single qubits to Bob in the FK protocol with BRSP we obtain a protocol that implements the ideal $\mS_{verif}^{blind}$ resource for blind and verifiable delegated computation. Importantly, the resulting protocol involves only classical communication and is composable.
We proceed in a number of incremental steps.

Note first that the set of single-qubit states prepared by the verifier in the FK protocol, $\ket{+_\theta}$ for $\theta\in\Theta$ and $\ket{0},\ket{1}$, is precisely the set of states that can be generated using the BRSP resource. 
We therefore define two variants of the FK protocol which we call \RSPV-FK and RSP-FK. The former is identical to the FK protocol, except Alice  uses the ideal resource \RSPV\ in order to prepare the states she is supposed to send to Bob in FK. RSP-FK is the same, except Alice uses the BRSP protocol to perform this preparation.

From the description of the FK protocol given in Figure~\ref{fig:fk}, it is clear that the number of times Alice uses the ideal \RSPV\ functionality or the BRSP protocol respectively is equal to the number of qubits she sends to the prover. As was shown in~\cite{kashefi2017optimised}, there exist graph states such that this number is linear in the size of the quantum circuit she wishes to delegate.
Given this, we can show the following:

\begin{lemma} \label{lemma:rspfk.rspvfk}
Let $\delta_{BRSP} > 0$ be the error tolerance parameter of the BRSP protocol used by Alice in RSP-FK, $\delta_{FK} > 0$ be the error (soundness) of FK, $T > 0$ the size of the computation Alice wishes to delegate to Bob\footnote{Alternatively, we can say that $T$ is the size of Alice's input and assume that she always delegates a universal circuit.} and $\lambda > 0$ the security parameter used in BRSP. Then, RSP-FK constructs \RSPV-FK, computationally, within distance $2\delta_{FK} + O(T \eps)$, where $\eps = O(\delta_{BRSP}^{c}) + \negl(\lambda)$, for some constant $c > 0$.
\end{lemma}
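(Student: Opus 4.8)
The plan is to view RSP-FK and \RSPV-FK as the \emph{same} FK protocol (Figure~\ref{fig:fk}) differing only in the subroutine used to prepare the single-qubit resource states that Alice would otherwise send to Bob: RSP-FK invokes the real buffered protocol BRSP, whereas \RSPV-FK invokes the ideal resource \RSPV. By the discussion preceding the lemma, the number of such invocations equals the number $M$ of qubits prepared, and $M = O(T)$ for the graph states of~\cite{kashefi2017optimised}. The proof is then a hybrid argument built on Theorem~\ref{thm:brsp.rspv}. I would define resources $H_0,\ldots,H_M$, with $H_0 = $ \RSPV-FK and $H_M = $ RSP-FK, where $H_j$ runs FK with its first $j$ preparations done by BRSP and its remaining $M-j$ preparations done by the ideal \RSPV.

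To bound consecutive hybrids, note that $H_{j-1}$ and $H_j$ differ only in whether the $j$-th preparation uses BRSP or \RSPV. Given an efficient distinguisher between $H_{j-1}$ and $H_j$, I would construct an efficient distinguisher between a single use of BRSP and a single use of \RSPV\ by internally simulating the entire FK wrapper around slot $j$: the other $M-1$ preparations (each with the appropriate subroutine), the entangling and measurement phase of Figure~\ref{fig:fk}, and the trap check, while routing the $j$-th qubit through the external resource. Each BRSP run and the FK wrapper are polynomial-time, so this reduction is efficient; Theorem~\ref{thm:brsp.rspv} (its correctness clause when Bob is honest in that slot, its security clause together with the simulator $\sigma_B$ otherwise) then gives $H_{j-1} \approx_{c,\eps} H_j$ with $\eps = O(\delta_{BRSP}^c) + \negl(\lambda)$. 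Summing over the $M = O(T)$ steps by the triangle inequality for computational indistinguishability yields distance $O(T\eps)$ between RSP-FK and \RSPV-FK.

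For the two requirements of Definition~\ref{def:sim}: correctness is immediate from the completeness clause of Theorem~\ref{thm:rigidity}, since each honest BRSP run reproduces the intended state up to $\negl(\lambda)$, so honest RSP-FK and honest \RSPV-FK agree up to $\negl(\lambda)$. For security I would assemble the simulator for RSP-FK by composing the $M$ per-slot simulators $\sigma_B$ from Theorem~\ref{thm:brsp.rspv} inside the honest (efficient) FK interface that Alice executes; this composed simulator is efficient, and the hybrid bound certifies its $O(T\eps)$ error. The additive $2\delta_{FK}$ enters through the \emph{computation} phase of FK, which is identical in both protocols but must now be run on states that BRSP prepares only up to an isometry and up to error $\eps$ per slot: to bound the residual advantage a deviating Bob obtains from this slack --- in particular, the probability that Alice accepts while Bob has tampered with the computation --- one invokes the trap-based verifiability of FK on both resources, which contributes $2\delta_{FK}$.

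The main obstacle is the security direction. Theorem~\ref{thm:brsp.rspv} supplies a simulator only for a \emph{single} use of BRSP through a measurement buffer, so the delicate point is to check that these simulators compose without interference when the preparation slots are interleaved with the FK computation phase: each slot's rigidity isometry $\Phi$ and auxiliary state $\ket{\aux}$ must be handled independently, and the buffer structure that validates the per-slot simulator must be preserved under interleaving. Ensuring the reduction stays polynomial-time (it runs $O(T)$ extra BRSP instances) and that the AC composition theorem applies to this interleaved construction is where the care is needed; once established, the state-preparation error $O(T\eps)$ and the inherited verifiability error $2\delta_{FK}$ combine by the triangle inequality.
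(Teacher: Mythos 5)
Your proposal is correct and takes essentially the same route as the paper: the paper simply invokes the AC compositionality theorem to charge an additive $\eps$ per replacement of \RSPV\ by BRSP (your explicit hybrid argument is precisely the proof of that theorem in this instance), and obtains the $2\delta_{FK}$ term exactly as you do, by noting that both protocols are within $\delta_{FK}$ of the correct output conditioned on acceptance and applying the triangle inequality. The only cosmetic difference is that the paper counts $O(T\log(1/\delta_{FK}))$ uses of the resource rather than $O(T)$, and your worry about interleaving is moot since all preparations occur before the FK computation phase.
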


\begin{proof}
From Theorem~\ref{thm:brsp.rspv} we know that if $\delta_{BRSP}$ is the error of BRSP, then BRSP implements \RSPV\ to within computational distance $\eps$. We also know that \RSPV-FK involves $O(T \, \log(1/\delta_{FK}))$ uses of the \RSPV\ functionality, since to achieve error $\delta_{FK}$ the FK protocol uses $O(T \, \log(1/\delta_{FK}))$ qubits \cite{kashefi2017optimised}). In RSP-FK these are replaced with calls to BRSP. The compositionality theorem of AC (see~\cite{maurer2011abstract, dunjko2014composable}) implies that each replacement comes at an additive cost of $\eps$. In other words, up to an error $O(T \; \log(1/\delta_{FK}) \eps)$, RSP-FK behaves exactly the same as \RSPV-FK. The fact that \RSPV-FK has error $\delta_{FK}$ means that (conditioned on acceptance) it arrives at the correct result, except with error $\delta_{FK}$. The same will be true of RSP-FK, with the added error of $O(T \, \log(1/\delta_{FK}) \eps)$ stemming from the use of BRSP.
A triangle inequality leads us to conclude that RSP-FK implements \RSPV-FK within distance $2\delta_{FK} + O(T \, \log(1/\delta_{FK}) \eps)$.
\end{proof}
As a point of clarification, $\delta_{FK}$ represents the maximum deviations from the correct outcomes of the respective protocols, conditioned on Alice accepting. Also note that Alice can make the $O(T \; \log(1/\delta_{FK}) \eps)$ term be of order $\delta_{FK}$ by taking $\delta_{BRSP} = (\delta_{FK}/(T \, \log(1/\delta_{FK})))^{1/c}$.

%\tnote{Sorry, I don't like the previous sentence. The ``error'' of a protocol, as I understand it, is something like the largest deviation from ideal that Bob may have successfully implemented, conditioned on the protocol not aborting. In other words, it is a fixed parameter that depends on the way the protocol is executed by Alice --- but not on Bob's actions! It says something about Bob's potential actions, but is not related to its actual actions. So, calling them ``the probability that Bob deceives...'' is misleading. Better would be ``the maximum probability with which a cheating Bob could deceive...''? Even that I'm not sure is really correct}

\noindent We now show the following:
\begin{lemma} \label{lemma:rspvfk.fk}
Let $\delta_{FK} > 0$ be the error of FK. \RSPV-FK implements FK within distance $2\delta_{FK}$.
\end{lemma}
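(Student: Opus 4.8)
The plan is to exploit the fact that \RSPV-FK and FK differ in exactly one place: the mechanism by which Bob acquires his single-qubit resource states. In FK, Alice samples her angles and dummy values, prepares the states $\ket{+_\theta}$, $\ket{0}$, $\ket{1}$ herself and sends them over a quantum channel; in \RSPV-FK the ideal \RSPV\ resource samples a uniform label, hands the corresponding state to Bob, and returns its classical description to Alice. Everything downstream --- the measurement angles $\delta_i$, the trap checks, and the update function --- is identical once the labels are fixed. I would therefore prove the lemma by showing that the two systems have the same input-output behaviour up to an error governed solely by FK's own soundness parameter $\delta_{FK}$, obtaining the stated factor of two by routing the comparison through the ideal resource $\mS_{verif}^{blind}$.

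For correctness I would verify that an honest run of \RSPV-FK reproduces the distribution of an honest run of FK up to $\negl(\lambda)$. The only subtlety is the $Z^{d'_i}$ correction in step 3 of Figure~\ref{fig:fk}: since $Z^{d'_i}\ket{+_{\theta_i}} = \ket{+_{\theta_i + d'_i\pi}}$ and $\Theta$ is closed under the shift $\theta\mapsto\theta+\pi$, the state Bob actually receives is a uniformly random element of $\{\ket{+_\theta}\}_{\theta\in\Theta}$, which is precisely what \RSPV\ with $W=X$ delivers. Alice can thus let \RSPV\ choose a uniform angle $\alpha_i$ for Bob, draw the dummy values from \RSPV\ with $W=Z$, and \emph{define} her logical angle as $\theta_i := \alpha_i - d'_i\pi$; because $\alpha_i$ is uniform and independent of the dummies, the resulting joint distribution of (Alice's classical register, Bob's qubits) coincides exactly with that of FK. This gives $\pi_{AB}\,\RSPV \approx_{c,\negl(\lambda)} \text{FK}$.

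For security I would argue that \RSPV-FK inherits FK's composable guarantees verbatim. The key points are that \RSPV\ reveals nothing about the label to Bob (so the local-blindness map of Definition~\ref{def:local-blind} is unaffected) and that, when $c=0$, it delivers exactly the correct trap qubits together with computational-basis dummy neighbours (so the trap-based local-verifiability argument of Definition~\ref{def:local-ver} carries over unchanged). Consequently \RSPV-FK satisfies the same $\delta_c$-correctness, $\delta_b$-local-blindness and $\delta_v$-local-verifiability as FK, and Theorem~\ref{thm:comp-blind} gives that \RSPV-FK constructs $\mS_{verif}^{blind}$ within $\max(\delta_c,2\delta_b+4\sqrt{\delta_v}) = \delta_{FK}$. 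Combining this with the known result of Dunjko et al.\ that FK constructs $\mS_{verif}^{blind}$ within $\delta_{FK}$, the triangle inequality for $\approx_{c,\cdot}$ yields $\text{\RSPV-FK} \approx_{c,\delta_{FK}} \mS_{verif}^{blind} \approx_{c,\delta_{FK}} \text{FK}$, hence $\text{\RSPV-FK}\approx_{c,2\delta_{FK}}\text{FK}$; the simulator witnessing the security half is the one supplied by Theorem~\ref{thm:comp-blind}, which runs Alice's honest interaction against malicious Bob using a dummy input.

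The main obstacle I anticipate is the treatment of the abort bit $c$. \RSPV\ grants Bob a \emph{clean} abort: setting $c=1$ forces $ERR$ at both interfaces and guarantees rejection, whereas in FK a cheating Bob can only induce rejection probabilistically through the trap tests, fooling Alice with probability up to $\delta_{FK}$. The care required is to check that this mismatch is absorbed into the $\delta_{FK}$ verifiability error rather than introducing an independent loss --- that is, that forcing a guaranteed rejection is never a stronger attack than what FK's verifiability already tolerates. Passing through $\mS_{verif}^{blind}$, as above, makes this automatic, since the ideal resource's error interface treats an abort and a caught cheat identically; a direct simulator argument would instead require bounding this discrepancy by hand. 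A secondary point to make explicit is that blindness is genuinely preserved, which holds because the guarantee that the label is hidden from Bob is built into the \RSPV\ resource itself.
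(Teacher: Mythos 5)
Your correctness analysis (the observation that $Z^{d'_i}\ket{+_{\theta_i}}=\ket{+_{\theta_i+d'_i\pi}}$ and that $\Theta$ is closed under the shift by $\pi$, so Alice can relabel the uniformly random angle delivered by \RSPV) is fine and in fact more explicit than the paper, which simply declares correctness immediate. The problem is with your security argument, which routes the comparison through $\mS_{verif}^{blind}$. This has two genuine defects. First, quantitatively: you assert that Theorem~\ref{thm:comp-blind} gives that \RSPV-FK (and, citing Dunjko et al., FK) constructs $\mS_{verif}^{blind}$ within distance $\max(\delta_c,2\delta_b+4\sqrt{\delta_v})=\delta_{FK}$, but this equality is unjustified and false: by Lemma~\ref{lem:FK} the FK protocol with error $\delta_{FK}$ is only $O(\delta_{FK}^{1/4}N^2)$-blind-verifiable, so the bound coming out of Theorem~\ref{thm:comp-blind} is $O(\delta_{FK}^{1/8}N)$, not $\delta_{FK}$. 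A detour through $\mS_{verif}^{blind}$ therefore cannot produce the claimed $2\delta_{FK}$; it would at best prove a polynomially weaker statement. Second, structurally: in the AC framework (Definition~\ref{def:sim}), ``$X$ constructs $S$'' means $X\approx_{c,\eps} S\sigma_1$ for \emph{some} simulator $\sigma_1$, and ``$Y$ constructs $S$'' means $Y\approx_{c,\eps} S\sigma_2$ for a possibly different $\sigma_2$. These two facts do not compose via the triangle inequality into ``$X$ constructs $Y$'': for that you would need an efficient $\sigma$ with $S\sigma_2\sigma\approx S\sigma_1$, i.e.\ you would need to invert $\sigma_2$ behind the ideal resource, which is not available in general (construction is not symmetric --- FK constructs $\mS_{verif}^{blind}$, not conversely). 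Your chain $\text{\RSPV-FK}\approx\mS_{verif}^{blind}\approx\text{FK}$ silently identifies $\mS_{verif}^{blind}$ with $\mS_{verif}^{blind}\sigma$ for two incompatible simulators.

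The paper avoids both issues by giving a direct simulator at Bob's interface of the honest-Alice FK resource: the simulator buffers the qubits Alice sends over the quantum channel, releases each one to Bob only after reading his \RSPV\ abort bit $c$, and if Bob ever sets $c=1$ it outputs $ERR$ and forces Alice to reject by answering her measurement challenges at random (so the traps fail except with probability $e^{-O(T)}$); otherwise it just relays the classical interaction. The $2\delta_{FK}$ then comes from FK's soundness applied to each of the two systems separately --- conditioned on acceptance, each output is $\delta_{FK}$-close to the correct output --- followed by one triangle inequality between states, not between constructions. Your closing paragraph correctly identifies the abort-bit mismatch as the delicate point, but delegating it to the error interface of $\mS_{verif}^{blind}$ is exactly the move that imports the lossy $O(\delta_{FK}^{1/4})$ reduction; the ``direct simulator argument'' you set aside as requiring a bound ``by hand'' is the one that actually yields the stated constant.
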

\begin{proof}
We denote the two protocols as $\pi^{\RSPV-FK} = ( \pi_A^{\RSPV-FK}, \pi_B^{\RSPV-FK})$ and $\pi^{FK} = ( \pi_A^{FK}, \pi_B^{FK})$ respectively. Additionally, let $\mathcal{R}^{cq}$ denote a resource consisting of classical and quantum channels. We will show that
\begin{equation}
\pi_A^{\RSPV-FK} \mathcal{R}^{cq} \pi_B^{\RSPV-FK} = \pi_A^{FK} \mathcal{R}^{cq} \pi_B^{FK} \;,\quad \quad
\pi_A^{\RSPV-FK} \mathcal{R}^{cq} \approx_{2\delta} \pi_A^{FK} \mathcal{R}^{cq} \sigma_{B}\;,
\end{equation}
for some simulator $\sigma_B$.

Correctness is immediate: if Alice and Bob behave honestly in both \RSPV-FK and in FK the results are statistically indistinguishable.

For security note the following. In \RSPV-FK, for each use of \RSPV\  Alice chooses the preparation bases for the $\ket{+_{\theta}}$ states at random, and for the dummies she consistently chooses the $Z$ basis. The only difference between this and the actual FK protocol is that because she is using the \RSPV\ functionality, Bob can force Alice to abort in the preparation stage by triggering the $ERR$ flag.
To show that \RSPV-FK implements FK, we need to show that the simulator, $\sigma_B$, interacting with $\pi_A^{FK} \mathcal{R}^{cq}$, can make its interaction with Bob indistinguishable from that of $\pi_A^{\RSPV-FK} \mathcal{R}^{cq}$. First of all, to match \RSPV-FK in terms of inputs and outputs it must be that for each qubit to be prepared the simulator receives the $c$ bit from Bob indicating whether he wants to cause the current preparation to abort, as per the specification of \RSPV\ (see Figure~\ref{fig:rspv}).

Consider a simulator that works as follows. The simulator first collects all the qubits sent by Alice through $\mathcal{R}^{cq}$. Then, for each qubit that it is supposed to send to Bob, it first receives the bit $c$ corresponding to that qubit. If $c=0$, indicating to not abort, the simulator sends that qubit to Bob. Otherwise, it sends the $ERR$ flag to Bob and also causes Alice to abort\footnote{The simulator can cause Alice to abort by providing random responses to the measurement outcomes she expects from Bob. The probability that these responses will match all of Alice's expected outcomes on the trap states is exponentially small. In other words, Alice will abort with probability $1 - \exp(-O(T))$. This will mean that \RSPV-FK implements FK within distance $2\delta_{FK} + \exp(-O(T))$ but since we will always consider $\delta_{FK} = \Omega(\exp(-T))$, we omit this inverse exponential term.}. 
%\gnote{Here there's a slight subtlety which I'm not sure if it's worth mentioning. There's no deterministic way to force Alice to abort in FK, because the simulator doesn't know where the trap qubits are. The best thing the simulator can do is just provide random responses to Alice and this will cause her to abort with probability $1 - exp(-T)$. So technically the distance between the 2 protocols will also include this $exp(-T)$ factor, but I think we can simply omit it, since $\delta$ will always be larger than $exp(-T)$.}\tnote{ok, seems ok}
If the simulator sends all of the qubits to Bob (i.e. there was no abort in the preparation stage) it then acts as a classical channel between Alice and Bob, forwarding the messages Alice sends (step 4.(a)) to Bob and then forwarding his responses to Alice (step 4.(b)).

From the soundness of FK it follows that at the end of either $\pi_A^{\RSPV-FK} \mathcal{R}^{cq}$ or $\pi_A^{FK} \mathcal{R}^{cq} \sigma_{B}$ the state of the system (conditioned on acceptance) is $\delta_{FK}$-close to 
the correct output. Applying the triangle inequality, the output states in the two situations are $2\delta_{FK}$-close to each other. This concludes the proof.\qedhere
\end{proof}
Note that the result of Lemma~\ref{lemma:rspvfk.fk} holds within statistical distance. Of course, the result is also true when restricting to the computationally efficient setting, since the simulator is efficient (it simply needs to store states and forward messages received from Alice and Bob).

Finally, we use the following result from~\cite{dunjko2013composable}:
\begin{lemma}[Lemma C.1 in~\cite{dunjko2013composable}]
\label{lem:FK}
If the FK protocol is run with parameters such that it has
error $\delta_{FK}$, then it is
$4\sqrt{2}\delta_{FK}^{1/4}N^2$-blind-verifiable, where $N$ is the
dimension of the subsystem of Alice's input which is quantum.
\end{lemma}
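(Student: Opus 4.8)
The plan is to establish the two local criteria of Section~\ref{subsect:composable} --- $\delta$-local-blindness and independent $\delta$-local-verifiability --- for the FK protocol of Figure~\ref{fig:fk}, and then combine them to read off the ``blind-verifiable'' parameter. Blindness is the easy half: the states Alice sends are one-time-padded, since each non-dummy qubit is $Z^{d_i'}\ket{+_{\theta_i}}$ with $\theta_i\leftarrow_U\Theta$ and each measurement angle $\delta_i = C(i,\phi_i,\theta_i,r_i,s)$ is masked by the fresh random $\theta_i$ and the bit $r_i$. Averaging (``twirling'') over the keys $\{\theta_i,r_i,d_i\}$ decouples Bob's entire view from the circuit $C$ and the input $\psi_A$, so that the reduced state on Bob's side is exactly the state produced by feeding maximally-randomized inputs to his own deviation maps $\{\mF_i\}$. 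This exhibits the map $\mF$ required by Definition~\ref{def:local-blind} and gives local-blindness with error $0$ (up to the $\negl(\lambda)$ incurred by any approximation in state preparation).

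The substance of the lemma is the verifiability bound, and here the plan is to import the trap analysis of~\cite{fitzsimons2017unconditionally} into the trace-distance language of Definition~\ref{def:local-ver}. First I would use the same one-time-pad structure to twirl Bob's (arbitrary, coherent) attack into a convex combination of Pauli deviations acting on the graph-state qubits; this is the standard reduction that makes FK analyzable and is what connects blindness to verifiability. For each Pauli branch, either the deviation acts trivially on the logical output --- in which case the result is correct --- or it is non-trivial, in which case, because every trap qubit is isolated from the computation by its dummy neighbours, the deviation flips the deterministic trap outcome $s_i=r_i$ with constant probability at each of the $|T|\leq N$ trap locations. The protocol's error parameter $\delta_{FK}$ is by definition an upper bound on the probability that Alice accepts (all traps pass) while the logical output is wrong, so summing the detection probabilities over the traps and the graph neighbourhoods bounds the weight of the harmful-but-undetected branches by $\delta_{FK}$.

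It then remains to convert this scalar probability bound into the trace-distance statement~\eqref{eq:local-ver-0}--\eqref{eq:local-ver-1}, namely that the conditional output is close to $p^\psi\,\mU(\psi_{AR_1})+(1-p^\psi)\proj{ERR}$, and this conversion is where the exponent $\tfrac14$ and the factor $N^2$ are produced. I expect two successive square-root losses: one Fuchs--van de Graaf step to pass from the acceptance-weighted overlap bound to a fidelity bound on the state conditioned on acceptance, and a second to pass from fidelity back to trace distance, together yielding $\delta_{FK}^{1/4}$; the dimension factor $N^2$ enters when Uhlmann's theorem is applied to purifications of the $N$-dimensional quantum register and the per-trap contributions are summed over the graph. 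The alternative maps $\mF_i'$ of Definition~\ref{def:local-ver} are taken to be the honest continuation on the accepted branch together with an explicit $ERR$-preparation on the rejected branch. Combining the blindness map with this verifiability bound through the reasoning behind Theorem~\ref{thm:comp-blind} gives the claimed $4\sqrt{2}\,\delta_{FK}^{1/4}N^2$-blind-verifiable property, with the constant $4\sqrt{2}$ absorbing the bookkeeping of the two inequalities. I expect the main obstacle to be precisely this last conversion: turning a single accept-an-error probability $\delta_{FK}$ into a trace-distance guarantee on the \emph{conditional} state while controlling the dimension dependence, since naive bounds lose far more than $N^2$ and one must apply Uhlmann's theorem carefully to keep the loss polynomial in $N$.
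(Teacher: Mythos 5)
The first thing to note is that the paper does not prove this statement at all: Lemma~\ref{lem:FK} is imported verbatim as Lemma~C.1 of~\cite{dunjko2013composable}, and the authors use it as a black box (immediately specializing to classical inputs for Alice, so that the $N^2$ factor is harmless). There is therefore no in-paper proof to compare your proposal against; what you have written is a reconstruction of an external argument, and it should be judged as such.

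As a reconstruction, your skeleton is broadly faithful to how the result is actually obtained: blindness comes from the one-time-pad structure of the $\theta_i$, $r_i$, $d_i$ (giving $\delta_b$ negligible), verifiability comes from the trap analysis of~\cite{fitzsimons2017unconditionally} which bounds the accept-and-wrong probability by $\delta_{FK}$, and the final parameter has the shape $2\delta_b + 4\sqrt{\delta_v}$ produced by the generic combination step (the analogue of Theorem~\ref{thm:comp-blind} and Eq.~\eqref{eq:comp-blind-1}); one of your two square roots lives inside that generic step rather than in the FK-specific analysis, which is why the constant comes out as $4\sqrt{2}$. However, your account of the factor $N^2$ is wrong, and this is a genuine gap rather than a bookkeeping detail. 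The $N^2$ has nothing to do with Uhlmann's theorem, with summing per-trap contributions, or with the graph: the number of traps and the graph size scale with the circuit size $T$, not with $N$, and if the dimension factor came from summing over the graph the lemma would be useless. In~\cite{dunjko2013composable} the factor $N^2$ arises because the stand-alone FK guarantee is proven for each fixed input, whereas independent local-verifiability (Definition~\ref{def:local-ver}) must hold for arbitrary inputs $\psi_{AR_1}\otimes\psi_{R_2B}$ entangled with a reference system; one extends the bound by linearity over an operator basis of the $N$-dimensional quantum input register, which has $N^2$ elements, and a triangle inequality over that basis produces the $N^2$. For classical inputs this step is vacuous, which is exactly why the present paper restricts to that case. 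If you want a complete proof you would need to carry out that linearity argument explicitly, and also actually exhibit the maps $\mF'_i$ and verify condition~\eqref{eq:local-ver-1}, rather than asserting that the conversion "is where the exponent $\tfrac14$ and the factor $N^2$ are produced."
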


For convenience we restrict our attention to classical inputs for Alice. In this case it follows that FK (with classical input) with soundness parameter $\delta_{FK}$ implements the ideal $\mS_{verif}^{blind}$ resource within distance $O(\delta_{FK}^{1/4})$. With this fact, we can show the main result of this section.

\begin{theorem} \label{thm:dqc}
The RSP-FK protocol with error $\delta_{FK} > 0$ and security parameter $\lambda > 0$ implements the ideal $\mS_{verif}^{blind}$ resource within distance $O(\delta_{FK}^{1/4}) + \negl(\lambda)$.
\end{theorem}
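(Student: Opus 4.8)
The plan is to prove the theorem entirely by composition, with essentially no new technical content: I would chain the three preceding results along the sequence RSP-FK $\rightsquigarrow$ \RSPV-FK $\rightsquigarrow$ FK $\rightsquigarrow \mS_{verif}^{blind}$ and invoke the composition theorem of the AC framework together with the triangle inequality for the distinguishing advantage. The anchor is the observation recorded just before the theorem statement: for classical inputs, Lemma~\ref{lem:FK} specializes to give that the FK protocol (run over genuine quantum channels) implements $\mS_{verif}^{blind}$ within distance $O(\delta_{FK}^{1/4})$. This is the dominant, and only quartic, error term in the final bound.

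Next I would assemble the two intermediate hops. Lemma~\ref{lemma:rspvfk.fk} gives that \RSPV-FK implements FK within distance $2\delta_{FK}$, and Lemma~\ref{lemma:rspfk.rspvfk} gives that RSP-FK constructs \RSPV-FK within distance $2\delta_{FK} + O(T\log(1/\delta_{FK})\,\eps)$, where $\eps = O(\delta_{BRSP}^{c}) + \negl(\lambda)$. As noted after Lemma~\ref{lemma:rspfk.rspvfk}, choosing $\delta_{BRSP} = (\delta_{FK}/(T\log(1/\delta_{FK})))^{1/c}$ collapses the additive $O(T\log(1/\delta_{FK})\,\eps)$ contribution to $O(\delta_{FK}) + \negl(\lambda)$, so this first hop also has distance $O(\delta_{FK}) + \negl(\lambda)$. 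Composing the three steps, the simulators compose and each remains efficient (since each constituent simulator is efficient), while the distances add, yielding total distance $O(\delta_{FK}^{1/4}) + O(\delta_{FK}) + O(\delta_{FK}) + \negl(\lambda) = O(\delta_{FK}^{1/4}) + \negl(\lambda)$, using $\delta_{FK} = o(\delta_{FK}^{1/4})$. This is exactly the claimed bound.

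The main point requiring care — more a bookkeeping matter than a genuine obstacle — is ensuring the bound retains its stated form in the computational setting. Two things must be verified: first, that the AC composition theorem applies across a mix of statistically secure steps (the two FK-related hops) and a computationally secure step (the replacement of quantum channels by BRSP), so that the composed construction is computationally secure and the negligible terms accumulate additively; second, that $T = \poly(\lambda)$, so that the factor $T\log(1/\delta_{FK})$ multiplying $\negl(\lambda)$ in the first hop stays negligible. Both hold under the standing assumptions, and with these in place the triangle inequality completes the argument.
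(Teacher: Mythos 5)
Your proposal is correct and follows essentially the same route as the paper: the paper's proof simply chains Lemma~\ref{lemma:rspfk.rspvfk}, Lemma~\ref{lemma:rspvfk.fk} and Lemma~\ref{lem:FK} via composition, absorbing the $O(T\log(1/\delta_{FK})\,\eps)$ term into $O(\delta_{FK})$ by the choice of $\delta_{BRSP}$ noted after Lemma~\ref{lemma:rspfk.rspvfk}, exactly as you do. Your added bookkeeping about mixing statistical and computational steps and about $T=\poly(\lambda)$ is sound but not something the paper spells out.
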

\begin{proof}
From Lemmas~\ref{lemma:rspfk.rspvfk} and~\ref{lemma:rspvfk.fk} we see that RSP-FK with error $\delta_{FK}$ and security parameter $\lambda > 0$ implements FK with error $O(\delta_{FK}) + \negl(\lambda)$. Combining this with Lemma~\ref{lem:FK} leads us to conclude that RSP-FK implements the $\mS_{verif}^{blind}$ resource within distance $O(\delta_{FK}^{1/4}) + \negl(\lambda)$.
\end{proof}

The result of Theorem~\ref{thm:dqc} states that using BRSP together with the FK protocol yields a protocol that is computationally indistinguishable from the ideal blind-verifiability functionality. If we take the distinguishing advantage to be $\delta > 0$, what will be the total complexity (in terms of total of number of operations performed by the verifier) of RSP-FK for a computation of size $T$? From Theorem~\ref{thm:dqc} it follows that in order to implement $\mS_{verif}^{blind}$ to within distance $\delta$ we need to perform RSP-FK with soundness error $\delta^4$. Implementing the FK protocol so that it achieves soundness error $\delta^4$ requires $O(T \, \log(1/\delta^4))$ operations, where $T$ is the size of the computation. The change from $\delta$ to $\delta^4$ only increases the overhead by a constant factor, so that overall the prover requires $O(T \, \log(1/\delta))$ operations to implement FK. In our case, however, for each state sent by the verifier to the prover, the verifier executes BRSP. Thus the overhead is $O( C_{BRSP} \, T \, \log(1 / \delta) )$, where $C_{BRSP}$ is the cost of running one instance of BRSP. If we wish to achieve soundness error $\delta^4$ in RSP-FK, BRSP needs to have error at most $(\delta^4/(T \,\log(1/\delta)))^{1/c}$. The specific constant $c$ can be determined from the proof of Theorem~\ref{thm:rigidity} to be $c = 1/3$. Thus we can estimate the cost of BRSP as $C_{BRSP} = (T^{3}/\delta^{12}) \log^3(1/\delta) \poly(\lambda)$, where $\lambda$ is the security parameter. This gives a total cost of $O( (T^{4}/\delta^{12}) \, \log^4(1 / \delta) \, \poly(\lambda) )$.
Note that this is the cost of implementing the ideal blind-verifiable resource. If we merely wish to implement FK itself, the cost would be $O( (T^{4}/\delta^{3}) \, \log^4(1 / \delta) \, \poly(\lambda) )$, since we would not incur the $1/\delta \rightarrow 1/\delta^4$ increase stemming from Lemma~\ref{lem:FK}.

\appendix

\section{Claw-free functions with adaptive hardcore}
\label{sec:entcf}

Our construction relies on a variant of a cryptographic primitive called a ``noisy trapdoor claw-free family (NTCF),'' introduced in~\cite{brakerski2018cryptographic}, and its extension to an ``extended noisy trapdoor claw-free family (ENTCF),'' given in~\cite{mahadev2018classical}. We rely on definitions and notation from~\cite[Section 3]{brakerski2018certifiable} and~\cite[Section 4]{mahadev2018classical_arxiv}. 

A key property of an NTCF is the adaptive hardcore bit property, property 4. in~\cite[Definition 3.1]{brakerski2018certifiable}. We need a slightly stronger variant of the property, that works over $\Z_8$ instead of $\Z_2$. The property we need is formulated in the following definition. 

\begin{definition}\label{def:z8-hc}
Let $\lambda$ be a security parameter. Let $\sX$ and $\sY$ be finite sets.
 Let $\mathcal{K}_{\mathcal{F}}$ be a finite set of keys. A NTCF family 
$$\mathcal{F} \,=\, \big\{f_{k,b} : \sX\rightarrow \mathcal{D}_{\sY} \big\}_{k\in \mathcal{K}_{\mathcal{F}},b\in\{0,1\}}$$
is said to have \emph{adaptive $\Z_8$ hardcore} if it satisfies the following conditions, for some integer $w$ that is a polynomially bounded function of $\lambda$. 
\begin{enumerate}
\item For all $b\in \{0,1\}$ and $x\in \sX$, there exists a set $\dset_{k,b,x}\subseteq \Z_8$ such that $\Pr_{d\leftarrow_U \Z_8^w}[d\notin \dset_{k,b,x}]$ is negligible, and moreover there exists an efficient algorithm that checks for membership in $\dset_{k,b,x}$ given $k,b,x$ and the trapdoor $t_k$. 
\item There is an efficiently computable injection $\inj:\sX\to \Z_8^w$, such that $\inj$ can be inverted efficiently on its range, and such that the following holds. For any $y\in\mY$, define functions $\hat{\theta}:\Z_8^w\to\{0,1,2,3\}$ and $\hat{v}:\Z_8^w\to\{0,1\}$ as the unique values such that $d\cdot (J(x_0)+J(x_1)) \bmod 8 = \hat{\theta}(d) + 4\hat{v}(d)$, where for $b\in\{0,1\}$, $x_b=\Inv_\mF(t_k,b,y)$, if  $d\in \dset_{k,0,x_0}\cap \dset_{k,1,x_1}$,\footnote{The sets $G_{k,b,x}$ are defined in~\eqref{eq:def-d}.} and $ \hat{\theta}(d)= \hat{v}(d)=\bot$ otherwise.
Then if
\begin{eqnarray*}\label{eq:defsetsH}
H_k &=& \big\{(b,x_b,d,\theta,v)\,|\; b\in \{0,1\},\; (x_0,x_1)\in \mathcal{R}_k,\;  (\theta,v)=(\hat{\theta}(d),\hat{v}(d))\big\}\;,\text{\footnotemark}\\
\overline{H}_k &=& \{(b,x_b,d,\theta,v)\,|\; (b,x,d,\theta,v\oplus 1) \in H_k\big\}\;,
\end{eqnarray*}
\footnotetext{Note that although both $x_0$ and $x_1$ are referred to to define the set $H_k$, only one of them, $x_b$, is explicitly specified in any $4$-tuple that lies in $H_k$.}
then for any quantum polynomial-time procedure $\mathcal{A}$ there exists a negligible function $\mu(\cdot)$ such that 
\begin{equation}\label{eq:adaptive-hardcore}
\Big|\Pr_{(k,t_k)\leftarrow \textrm{GEN}_{\mathcal{F}}(1^{\lambda})}[\mathcal{A}(k) \in H_k] - \Pr_{(k,t_k)\leftarrow \textrm{GEN}_{\mathcal{F}}(1^{\lambda})}[\mathcal{A}(k) \in\overline{H}_k]\Big| \,\leq\, \mu(\lambda)\;.
\end{equation}
Similarly, if for $w\in\{0,1,2,3\}$,
\begin{eqnarray*}\label{eq:defsetsH2}
H^{(w)}_k &=& \big\{(b,x_b,d,\theta)\,|\; b\in \{0,1\},\; (x_0,x_1)\in \mathcal{R}_k,\;  \theta=\hat{\theta}(d)+w\big\}\;,
\end{eqnarray*}
then for any quantum polynomial-time procedure $\mathcal{A}'$ there exists a negligible function $\mu(\cdot)$ such that for all $w\in\{1,2,3\}$,
\begin{equation}\label{eq:adaptive-hardcore-2}
\Big|\Pr_{(k,t_k)\leftarrow \textrm{GEN}_{\mathcal{F}}(1^{\lambda})}[\mathcal{A}'(k) \in H^{(0)}_k] - \Pr_{(k,t_k)\leftarrow \textrm{GEN}_{\mathcal{F}}(1^{\lambda})}[\mathcal{A}'(k) \in {H}^{(w)}_k]\Big| \,\leq\, \mu(\lambda)\;.
\end{equation}
\end{enumerate}
\end{definition}

\subsection{The adaptive hardcore property}
\label{sec:hardcore}

It is straightforward to verify that the same construction of an NTCF introduced in~\cite{brakerski2018certifiable} has adaptive $\Z_8$ hardcore. 

\begin{lemma}\label{lem:z8-hc}
The NTCF family introduced in~\cite{brakerski2018certifiable} has adaptive $\Z_8$ hardcore, i.e.\ it satisfies item 2. in Definition~\ref{def:z8-hc}.
\end{lemma}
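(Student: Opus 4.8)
The plan is to reduce both displayed conditions of Definition~\ref{def:z8-hc} to a single uniformity statement and then inherit, nearly verbatim, the adaptive-hardcore argument of~\cite{brakerski2018certifiable} with the modulus $2$ replaced by $8$. Before turning to item~2 (the content of the lemma) I would dispose of the scaffolding in item~1: the ``good set'' $\dset_{k,b,x}\subseteq\Z_8^w$ plays exactly the role of its $\Z_2$-counterpart in~\cite{brakerski2018certifiable}, namely it collects the equations $d$ for which $d\cdot(\inj(x_0)+\inj(x_1))\bmod 8$ is unambiguously determined (the rounding and noise terms do not straddle a boundary). I would define $\dset_{k,b,x}$ by the identical condition used there, observe that a uniformly random $d\in\Z_8^w$ lands outside it only with negligible probability, and note that membership is decidable in polynomial time given $k,b,x$ and the trapdoor $t_k$, since the trapdoor recovers the relevant preimage and lets one evaluate the defining inequalities.

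The heart of the proof is the observation that both conditions follow from the single claim that, conditioned on $(k,b,x_b,d)$ with $d\in\dset_{k,0,x_0}\cap\dset_{k,1,x_1}$, the value $d\cdot\big(\inj(x_0)+\inj(x_1)\big)\bmod 8\in\Z_8$ is computationally indistinguishable from uniform. Indeed, writing this value as $\hat\theta(d)+4\hat v(d)$, the invariance under $v\mapsto v\oplus 1$ demanded by~\eqref{eq:adaptive-hardcore} is exactly near-uniformity of its top bit conditioned on $\hat\theta$, while the invariance under $\theta\mapsto\theta+w$ for $w\in\{1,2,3\}$ demanded by~\eqref{eq:adaptive-hardcore-2} is near-uniformity of its residue in $\Z_4$; a variable that is (jointly) uniform over $\Z_8\cong\Z_4\times\Z_2$ satisfies both, and conversely these two properties account for all additive shifts of $\Z_8$. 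Thus it suffices to establish the single uniformity statement.

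To prove uniformity I would take $\inj$ to be the coordinatewise embedding of the binary representation of $\sX$ into $\{0,1\}\subseteq\Z_8$, so that $\inj(x_0)+\inj(x_1)\equiv s\pmod 2$, where $s=x_0\oplus x_1$ is the fixed claw difference. This confirms that $\inj$ is an efficiently invertible injection and shows that the mod-$2$ reduction recovers precisely the original hardcore bit of~\cite{brakerski2018certifiable}, a useful consistency check. I would then follow the three-step template of that paper: (i) use the LWE-based indistinguishability of keys to pass to the regime in which $s$ retains min-entropy linear in $w$ conditioned on the adversary's entire view; (ii) invoke a $\Z_8$-analogue of the (adaptive) leftover-hash/Fourier-analytic bound to conclude that $d\cdot(\inj(x_0)+\inj(x_1))\bmod 8$ is statistically close to uniform over $\Z_8$; and (iii) transport this statistical statement back to the computational one via the key-indistinguishability from (i). Steps (i) and (iii) are identical to the $\Z_2$ proof and require no change.

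The main obstacle is step~(ii): unlike in a plain leftover hash lemma, the equation $d$ is chosen \emph{adaptively} and may be correlated with the key, so I must re-run the adaptive argument of~\cite{brakerski2018certifiable} and verify that the available min-entropy equidistributes the inner product over all of $\Z_8$ rather than merely over $\Z_2$. Since a source with linear min-entropy comfortably fills $\Z_8$ — one needs only that enough coordinates $d_i$ are units modulo $8$ and that $s$ contributes several independent bits at those positions, which holds for all but a negligible fraction of $d\in\dset_{k,b,x}$ — the larger modulus costs only constant factors in the parameters, which is exactly why the extension is ``straightforward.'' The one point demanding genuine care is that $\inj(x_0)+\inj(x_1)\bmod 8$ depends on $x_0,x_1$ beyond $s$ alone (at coordinates where $s_i=0$ the contribution is $2x_{0,i}$), so I would check that these additional, adversary-known terms enter only as additive offsets and do not erode the entropy available modulo $8$.
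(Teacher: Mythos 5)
Your overall strategy --- reduce both displayed conditions to near-uniformity of $d\cdot(J(x_0)+J(x_1))\bmod 8$ given the adversary's view, then rerun the argument of~\cite{brakerski2018certifiable} (lossy-key switch, Fourier-analytic equidistribution, switch back) with the modulus changed --- is exactly the route the paper takes; its engine is the mod-$8$ analogue of their Lemma~4.2, namely Lemma~\ref{lem:hardcore-1} and Lemma~\ref{lem:singled}.

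There is, however, one concrete gap, and it sits at precisely the point where the mod-$8$ case genuinely differs from the mod-$2$ one. You propose to define $\dset_{k,b,x}$ ``by the identical condition used there'' and then argue that the many good coordinates needed to equidistribute the inner product over $\Z_8$ are present ``for all but a negligible fraction of $d\in\dset_{k,b,x}$.'' That is not enough: the equation $d$ is output \emph{adversarially}, after the key is seen, so the adversary can deliberately land in the negligible bad fraction. Concretely, if the derived vector $\hat d$ has Hamming weight $1$ (which the adversary can arrange, since it knows $x_b$ and can therefore evaluate the map $d\mapsto I_{b,x}(d)$), then $\hat d\cdot s\bmod 8$ takes at most two values and $\hat\theta(d)$ is nowhere near uniform on $\{0,1,2,3\}$, so~\eqref{eq:adaptive-hardcore-2} fails for that adversary. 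The reason a single nonzero coordinate sufficed mod $2$ is that the corresponding Fourier coefficient $\prod_{i:\hat d_i=1}\mathbb{E}_{s_i}[(-1)^{s_i}]$ vanishes exactly; mod $8$ the per-coordinate factor is $|\cos(\pi\hat z/8)|$, which for $\hat z\neq 4$ is only a constant below $1$, so one needs $\Omega(n)$ nonzero coordinates to drive the coefficient to negligible --- this is the paper's bound~\eqref{eq:fc-1a}. The fix the paper implements is to \emph{change} the good set: $\dset_{k,b,x}$ in~\eqref{eq:def-d} requires the relevant half of $I_{b,x}(d)$ to have Hamming weight at least $n/4$ (still satisfied by a $1-\negl(\lambda)$ fraction of random $d$, so item~1 of the definition survives), and the notion of ``moderate'' matrix together with the hypothesis of the equidistribution lemma are strengthened accordingly (Definition~\ref{def:moderate}, and the requirement $|\hat d|_H\geq \frac{n}{8}$ in Lemma~\ref{lem:hardcore-1}). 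Your plan as written omits this redefinition, and without it your step~(ii) fails against an adaptive adversary. The remainder of your outline, including the caution that $J(x_0)+J(x_1)\bmod 8$ carries information beyond $x_0\oplus x_1$, is consistent with the paper's treatment.
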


The proof of Lemma~\ref{lem:z8-hc} is very similar to the adaptive hardcore bit condition shown in~\cite[Lemma 4.7]{brakerski2018certifiable}, with some modifications to obtain a hardness statement $\bmod 8$ instead of $\bmod 2$. We indicate the main changes needed, referring directly to statements from~\cite[Section 4.4]{brakerski2018certifiable}. 

The main step of the proof consists in showing the following Lemma, a direct analogue of~\cite[Lemma 4.2]{brakerski2018certifiable}. The main difference is the requirement on $\hat{d}$. For a string $x\in\{0,1\}^n$ we write $|x|_H$ for the Hamming weight of $x$.

\begin{lemma}\label{lem:hardcore-1}
Let $q$ be a prime, $\ell,n\geq 1$ integers, and $\*C\in \mZ_q^{\ell\times n}$ a uniformly random matrix. With probability at least $1-q^\ell\cdot 2^{-\frac{n}{8}}$ over the choice of $\*C$ the following holds. For a fixed $\*C$, all $\*v\in\mZ_q^\ell$ and $\hat{d}\in \{0,1\}^n$ such that $|\hat{d}|_H \geq \frac{n}{8}$, the distribution of $(\hat{d}\cdot s \bmod 8)$, where $s$ is uniform in $\{0,1\}^n$ conditioned on $\*C\*s = \*v$, is within  statistical distance $O(q^{\frac{3\ell}{2}} \cdot 2^{-\frac{n}{80}})$ of the uniform distribution over $\{0,1\}$. 
\end{lemma}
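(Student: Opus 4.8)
The plan is to prove the statement by Fourier analysis over $\Z_8$, reducing the claim to an estimate on a family of character sums indexed by the characters of $\Z_q^\ell$ that cut out the affine constraint $\*C s=\*v$. Throughout write $\omega_m=e^{2\pi i/m}$. Fix $\*C,\*v,\hat d$, let $N=|\{s\in\{0,1\}^n:\*C s=\*v\}|$, and for $k\in\{0,1,\dots,7\}$ set
\[ \hat P(k)\,=\,\frac1N\sum_{s:\,\*C s=\*v}\omega_8^{k(\hat d\cdot s)}\,. \]
Since the distance of a distribution on $\Z_8$ from uniform is controlled by its nontrivial Fourier coefficients, $\mathrm{TV}\le\tfrac12\big(\sum_{k=1}^7|\hat P(k)|^2\big)^{1/2}$, it suffices to show $|\hat P(k)|=2^{-\Omega(n)}$ for every $k\neq0$, uniformly over $\*v$ and over all $\hat d$ with $|\hat d|_H\ge n/8$, on a good event for $\*C$.

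The first step is to write out the conditioning using orthogonality of the characters of $\Z_q^\ell$, namely $\mathbb{1}[\*C s=\*v]=q^{-\ell}\sum_{\*u\in\Z_q^\ell}\omega_q^{\*u\cdot(\*C s-\*v)}$. Substituting and factorizing the resulting sum over $s\in\{0,1\}^n$ gives
\[ \sum_{s:\,\*C s=\*v}\omega_8^{k(\hat d\cdot s)}\,=\,\frac{1}{q^\ell}\sum_{\*u\in\Z_q^\ell}\omega_q^{-\*u\cdot\*v}\,F_k(\*u)\,,\qquad F_k(\*u)=\prod_{j=1}^n\big(1+\omega_8^{k\hat d_j}\omega_q^{(\*u^{\top}\*C)_j}\big)\,, \]
and likewise $N=q^{-\ell}\sum_{\*u}\omega_q^{-\*u\cdot\*v}F_0(\*u)$. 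Each factor has modulus $|1+\omega_8^{k\hat d_j}\omega_q^{c_j}|=2|\cos(\pi\alpha_j)|$ with $\alpha_j=k\hat d_j/8+c_j/q$ and $c_j=(\*u^{\top}\*C)_j$, so $|F_k(\*u)|=2^n\prod_j|\cos(\pi\alpha_j)|$. I would now isolate the term $\*u=0$, for which $c_j\equiv0$ and hence $|F_k(0)|=2^n|\cos(\pi k/8)|^{|\hat d|_H}$; this is exactly where the hypothesis $|\hat d|_H\ge n/8$ enters, giving $|F_k(0)|\le2^n(\cos\tfrac\pi8)^{n/8}$ for every $k\in\{1,\dots,7\}$ (for $k=0$ one instead has $F_0(0)=2^n$, the main term of $N$).

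It remains to control the terms $\*u\neq0$, and here the randomness of $\*C$ is used: for fixed $\*u\neq0$ the entries $c_j=(\*u^{\top}\*C)_j$ are i.i.d.\ uniform over $\Z_q$. For each $k$ call an index $j$ \emph{useful} if $\norm{\alpha_j}\ge\tfrac18$ (distance to the nearest integer), in which case $|\cos(\pi\alpha_j)|\le\cos\tfrac\pi8<1$. The key point is that $j$ can \emph{fail} to be useful only if $c_j/q$ lies in $B_k=\{t:\norm{t}<\tfrac18\}\cup\{t:\norm{k/8+t}<\tfrac18\}$, a union of arcs of total length at most $\tfrac12$; consequently, \emph{uniformly over $\hat d$}, the number of useful indices is at least $n-\#\{j:c_j/q\in B_k\}$. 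I would define the good event $E(\*C)$ to be that for every $\*u\neq0$ and every $k$ one has $\#\{j:c_j/q\in B_k\}\le(\tfrac12+\tau)n$ for a suitable small constant $\tau$; by a Chernoff bound for fixed $(\*u,k)$ and a union bound over the $q^\ell$ choices of $\*u$ (and the eight values of $k$), $E(\*C)$ holds with probability at least $1-q^\ell 2^{-n/8}$, which is the source of the $q^\ell$ factor. On $E(\*C)$ this yields $|F_k(\*u)|\le2^n(\cos\tfrac\pi8)^{(1/2-\tau)n}=2^n 2^{-\Omega(n)}$ for all $\*u\neq0$, all $k$, and all $\hat d$ simultaneously.

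Finally I would assemble the estimates. On $E(\*C)$, $N\ge q^{-\ell}\big(2^n-\sum_{\*u\neq0}|F_0(\*u)|\big)\ge 2^{n-1}q^{-\ell}$ once $q^\ell 2^{-\Omega(n)}\le\tfrac12$, and then for $k\in\{1,\dots,7\}$,
\[ |\hat P(k)|\,\le\,\frac{1}{Nq^\ell}\Big(|F_k(0)|+\sum_{\*u\neq0}|F_k(\*u)|\Big)\,\le\,2(\cos\tfrac\pi8)^{n/8}+2q^\ell 2^{-\Omega(n)}\,, \]
which is $O(q^\ell 2^{-\Omega(n)})$; plugging into the Fourier bound yields the claimed $O(q^{3\ell/2}2^{-n/80})$ after tracking constants (the exponent $1/80$ is a deliberately loose choice that the useful-fraction argument comfortably beats). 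The main obstacle is the third paragraph: obtaining a bound on $|F_k(\*u)|$ that is simultaneously exponentially small, \emph{uniform over all $\hat d$} (so that the adversarial choice of which coordinates carry the $k/8$ shift cannot concentrate the non-useful indices on arcs where $c_j/q$ happens to land), and strong enough that summing $2^{-\Omega(n)}$ over all $q^\ell$ vectors $\*u$ still leaves an exponentially small total. This is what forces the quantitative tradeoff between $\tau$, the useful fraction, and the union-bound budget, and is precisely what pins down the constants $n/8$ and $n/80$.
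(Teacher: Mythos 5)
Your proof is correct and follows essentially the same route as the paper's: the quantities $F_k(\*u)$ you bound are exactly $2^n$ times the Fourier coefficients $\hat f(\hat{\*v},\hat z)$ of the joint distribution of $(\*C\*s,\hat d\cdot s \bmod 8)$ estimated in Lemma~\ref{lem:singled}, with the $\*u=0$ term controlled by $|\hat d|_H\ge n/8$ and the $\*u\neq 0$ terms controlled by a Chernoff-plus-union-bound condition on the row span of the random matrix $\*C$ (the role played by the ``moderate'' condition of Lemma~\ref{lem:moderate}). The only differences are organizational: you work directly with the conditional distribution over $\Z_8$, which forces the explicit lower bound on $N$, where the paper bounds the joint distribution via Parseval and defers the joint-to-conditional conversion to~\cite{brakerski2018certifiable}, and your per-$(\*u,k)$ bound on the density of bad arcs replaces the paper's single $k$-independent moderateness condition.
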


The first change in the proof of~\cite[Lemma 4.2]{brakerski2018certifiable} required to obtain Lemma~\ref{lem:hardcore-1} is to the definition of a moderate matrix: 

\begin{definition}\label{def:moderate}
Let $\*b\in \mZ_q^n$. We say that $\*b$ is \textnormal{moderate} if it contains at least $\frac{n}{4}$ entries whose unique representative in $(-q/2,q/2]$ has its absolute value in the range $(\frac{q}{32},\frac{3q}{32}]$. A matrix $\*C\in \mZ_q^{\ell\times n}$ is moderate if its entire row span (except $0^n$) is moderate.
\end{definition}

\begin{lemma}\label{lem:moderate}
Let $q$ be prime and $\ell,n$ be integers. Then 
$$\Pr_{\*C\leftarrow_U \mZ_q^{\ell\times n}}\big(\text{$\*C$ is moderate}\big)\,\geq \, 1 - q^\ell \cdot 2^{-\frac{n}{32}}\;.$$ 
\end{lemma}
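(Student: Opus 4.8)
The plan is to prove the bound by a union bound over the $q^\ell-1<q^\ell$ nonzero vectors $\*a\in\mZ_q^\ell$, together with a concentration estimate for a single uniformly random vector. Every nonzero element of the row span of $\*C$ is of the form $\*a^\top\*C$ for some nonzero $\*a$, so the matrix fails to be moderate only if some $\*a^\top\*C$ with $\*a\neq 0$ fails to be moderate. If I can show that, for each \emph{fixed} nonzero $\*a$, the vector $\*a^\top\*C$ fails to be moderate with probability at most $2^{-n/32}$ over the choice of $\*C$, then summing over the at most $q^\ell$ nonzero choices of $\*a$ gives $\Pr_{\*C}(\text{$\*C$ is not moderate})\leq q^\ell\cdot 2^{-n/32}$, which is exactly the claim.

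The first key step is to reduce a fixed span vector to a single uniform vector in $\mZ_q^n$. Fix $\*a\in\mZ_q^\ell\setminus\{0\}$; the $j$-th coordinate of $\*a^\top\*C$ is $\sum_{i=1}^\ell a_i C_{ij}$. Since $q$ is prime and $\*a\neq 0$, some $a_i$ is a unit, so for each column $j$ the combination $\sum_i a_i C_{ij}$ is uniform over $\mZ_q$ as the entries $C_{ij}$ range uniformly; distinct columns involve disjoint entries of $\*C$ and are therefore independent. Hence $\*a^\top\*C$ is distributed exactly as a uniformly random $\*b\in\mZ_q^n$, and it suffices to bound the probability that such a $\*b$ is not moderate.

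The second key step is the concentration estimate. For uniform $\*b$ the coordinates are i.i.d., and each $b_j$ lands in the ``medium'' window $|b_j|\in(\tfrac{q}{32},\tfrac{3q}{32}]$ with probability $p=\tfrac18+O(\tfrac1q)$, since this window (both signs) contains $\tfrac{q}{8}+O(1)$ of the $q$ residues. The number $M(\*b)$ of medium coordinates is thus $\mathrm{Bin}(n,p)$, and the failure of moderateness is a deviation of $M(\*b)$ from its mean $pn$ across the threshold $\tfrac n4$. I would then invoke a standard Chernoff bound: the relevant tail is bounded by $(e/4)^{n/8}\leq 2^{-n/32}$, which is precisely the per-vector estimate required. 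This parallels the analogous moderate-matrix bound in~\cite{brakerski2018certifiable}, to whose construction the present one reduces.

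The main obstacle is not any individual step but making the constants cohere: I must verify that the Chernoff exponent coming from the window $(\tfrac{q}{32},\tfrac{3q}{32}]$ and the threshold $\tfrac n4$ genuinely dominates $\tfrac{\ln 2}{32}$, so that the per-vector bound $2^{-n/32}$ survives the union-bound loss of the factor $q^\ell$. I would therefore devote most of the care to the explicit Chernoff computation—tracking the $O(1/q)$ correction to $p$ and checking the exponent $(e/4)^{n/8}\leq 2^{-n/32}$—since the reduction to a uniform vector via primality of $q$ and the union bound over the $q^\ell$ span vectors are both immediate.
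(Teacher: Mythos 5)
Your overall architecture --- a union bound over the at most $q^\ell$ nonzero vectors $\*a$, combined with the observation that for $q$ prime and $\*a\neq 0$ the row-span element $\*a^{\top}\*C$ is exactly uniform on $\mZ_q^n$ with independent coordinates --- is the same as in the proof the paper defers to (\cite[Lemma 4.5]{brakerski2018certifiable}), and those two steps are correct. The gap is in the concentration step, and it is not a matter of tuning constants. With the window $(\frac{q}{32},\frac{3q}{32}]$ of Definition~\ref{def:moderate}, each coordinate of a uniform $\*b$ is ``medium'' with probability $p=\frac18+O(1/q)$, so the expectation of $M(\*b)$ is approximately $\frac n8$, which lies \emph{below} the threshold $\frac n4$. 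The failure event ``$M(\*b)<\frac n4$'' is therefore the \emph{typical} event: its probability tends to $1$, not to $0$. The quantity $(e/4)^{n/8}$ you invoke is the multiplicative Chernoff bound for the \emph{upper} tail $\Pr[M\geq 2\mu]$ with $\mu=n/8$; it bounds the probability that a fixed span vector \emph{is} moderate, which is the opposite of what your union bound needs. No sharper tail inequality can rescue the step, because the per-vector failure probability is genuinely close to $1$.

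What your computation actually exposes is an inconsistency in the paper: Definition~\ref{def:moderate} shrinks the window of \cite{brakerski2018certifiable} from measure $\approx q/2$ to measure $\approx q/8$ (as needed for the $\bmod\ 8$ Fourier estimate in Lemma~\ref{lem:singled}, where the factor $8$ maps $(\frac{q}{32},\frac{3q}{32}]$ into $(\frac14,\frac34]$), but it keeps the old threshold of $\frac n4$ entries, and with that threshold Lemma~\ref{lem:moderate} as stated is false. The natural repair is to lower the threshold to $\frac{n}{16}$, i.e.\ half the new mean, mirroring the original where $\frac n4$ is half of $\frac n2$; your union-bound argument then goes through with the \emph{lower}-tail Chernoff bound $\Pr[M\leq \mu/2]\leq e^{-\mu/8}$ (with a possible further adjustment of the exponent $\frac{n}{32}$ in the statement, and of the count ``at least $\frac n4$ such entries'' in the proof of Lemma~\ref{lem:singled}). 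So the right response here is not to force the stated constants to cohere but to report that they cannot.
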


\begin{proof}
The proof is identical to~\cite[Lemma 4.5]{brakerski2018certifiable}, except for replacing $q/8$ with $q/32$. 
\end{proof}

\begin{lemma}\label{lem:singled}
Let $\*C\in \mZ_q^{\ell\times n}$ be an arbitrary moderate matrix and let $\hat{d}\in\{0,1\}^n$ be such that $|\hat{d}|\geq\frac{n}{8}$. Let $s$ be uniform over $\{0,1\}^n$ and consider the random variables $\*v = \*C\*s \bmod q$ and $z = \hat{d}\cdot s \bmod 2$. Then $(\*v,z)$ is within total variation distance at most $2q^{\frac{\ell}{2}}\cdot 2^{-\frac{n}{80}}$ of the uniform distribution over $\mZ_q^{\ell}\times\{0,1\}$. 
\end{lemma}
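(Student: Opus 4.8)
The plan is to prove Lemma~\ref{lem:singled} by Fourier analysis over the finite abelian group $G=\mZ_q^\ell\times\mZ_2$, in direct analogy with the mod-$2$ adaptive hardcore bit argument of~\cite{brakerski2018certifiable}. Write $P$ for the distribution of $(\*v,z)=(\*C\*s\bmod q,\ \hat d\cdot\*s\bmod 2)$ induced by $\*s\leftarrow_U\{0,1\}^n$, and $U$ for the uniform distribution on $G$; its characters are indexed by $(\*w,c)\in\mZ_q^\ell\times\{0,1\}$. Cauchy--Schwarz together with Parseval's identity gives
\[
\|P-U\|_{TV}\,\le\,\tfrac12\Big(\sum_{(\*w,c)\neq(0,0)}|\widehat P(\*w,c)|^2\Big)^{1/2},\qquad \widehat P(\*w,c)=\mathbb{E}_{\*s}\big[e^{2\pi i(\*w^T\*C\*s)/q}(-1)^{c(\hat d\cdot\*s)}\big].
\]
Since the coordinates of $\*s$ are independent and uniform, the expectation factorizes over $j$, and writing $\*a=\*w^T\*C\in\mZ_q^n$ one obtains
\[
|\widehat P(\*w,c)|^2\,=\,\prod_{j=1}^n\frac{1+(-1)^{c\hat d_j}\cos(2\pi a_j/q)}{2}.
\]
It thus suffices to show that every nonzero character contributes at most $2^{-\Omega(n)}$, so that summing the $2q^\ell$ terms and taking a square root yields the stated bound $2q^{\ell/2}2^{-n/80}$.

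For the characters with $c=0$ and $\*w\neq 0$, the vector $\*a=\*w^T\*C$ is a nonzero element of the row span of $\*C$, hence moderate by hypothesis (Definition~\ref{def:moderate}): at least $n/4$ of its entries have representative of absolute value in $(q/32,3q/32]$, so that $\cos^2(\pi a_j/q)$ is bounded away from $1$ by an absolute constant on those coordinates and the product is exponentially small in $n$. This is exactly the estimate carried out (with the tighter range) in the proof of Lemma~\ref{lem:moderate}, and, after accounting for the at most $q^\ell$ choices of $\*w$, it controls the entire $c=0$ part of the sum. The remaining character $(\*w,c)=(0,1)$ vanishes identically: since $|\hat d|_H\ge n/8\ge 1$ there is a coordinate with $\hat d_j=1$, contributing a factor $\tfrac12(1-\cos 0)=0$.

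The crux, and the step I expect to be the main obstacle, is the case $c=1$, $\*w\neq 0$. Here the per-coordinate factor equals $\sin^2(\pi a_j/q)$ on the support of $\hat d$ and $\cos^2(\pi a_j/q)$ off it, so moderateness of $\*a$ alone no longer suffices: a coordinate that is moderate but lies off $\mathrm{supp}(\hat d)$ gives only a weak bound, and the moderate coordinates may a priori be misaligned with $\mathrm{supp}(\hat d)$. The resolution is to combine moderateness of $\*a=\*w^T\*C$ with the Hamming-weight lower bound $|\hat d|_H\ge n/8$, following the template of~\cite[Lemma~4.3]{brakerski2018certifiable}. Concretely, I would reexpress the $c=1$ contribution as a signed collision quantity, $\sum_{\*w}|\widehat P(\*w,1)|^2=q^\ell\,\mathbb{E}_{\*e}\big[\mathbf 1[\*C\*e\equiv 0\bmod q]\,(-1)^{\hat d\cdot\*e}\big]$, where $\*e=\*s-\*s'$ ranges over $\{-1,0,1\}^n$ with the difference distribution of two uniform strings; the same dual character-sum estimate used for $c=0$ (now reflecting moderateness of $\*C$) confines the unsigned collision mass to the trivial term $\*e=0$ up to an exponentially small correction, while the large Hamming weight of $\hat d$ forces the sign $(-1)^{\hat d\cdot\*e}$ to cancel over the remaining contributing vectors. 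Assembling the three cases and substituting into the Cauchy--Schwarz/Parseval bound of the first paragraph gives the claimed total variation distance, with the constant in the exponent fixed by the moderate range of Definition~\ref{def:moderate}.
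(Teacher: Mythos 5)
Your setup---Fourier analysis on the product group, Cauchy--Schwarz plus Parseval, the coordinate-wise factorization of $|\widehat P(\*w,c)|^2$, moderateness for the characters with $\*w\neq 0$, and the exact vanishing of the $(\*0,1)$ character---is the same skeleton as the paper's proof (the paper phrases the computation over $\mZ_q^{\ell}\times\mZ_8$ rather than $\mZ_q^\ell\times\mZ_2$, but the mechanism is identical). The gap is in the case you single out as the crux: there is in fact no obstacle there, and the obstacle you perceive leads you to substitute an argument that does not close. For $c=1$ and $\*w\neq 0$, each \emph{moderate} coordinate of $\*a=\*w^T\*C$ contributes a factor $\tfrac12\bigl(1+(-1)^{\hat d_j}\cos(2\pi a_j/q)\bigr)$ that is bounded away from $1$ by an absolute constant \emph{in both sign cases}: the representative of $a_j$ lies in $\pm(q/32,3q/32]$, hence is bounded away from $0$ (which is what would push the $\hat d_j=0$ factor $\cos^2(\pi a_j/q)$ to $1$) and also from $q/2$ (which is what would push the $\hat d_j=1$ factor $\sin^2(\pi a_j/q)$ to $1$). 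Since at least $n/4$ coordinates are moderate, the product is exponentially small no matter how $\mathrm{supp}(\hat d)$ is aligned with the moderate coordinates. This is exactly how the paper disposes of all $\*w\neq 0$ characters at once: the shift by $q\hat z\hat d_i$ is an integer multiple of $q$ and does not move $w_i/q$ modulo $1$, so the same moderateness estimate applies uniformly in the second index. The Hamming-weight hypothesis on $\hat d$ is only needed for the $\*w=\*0$ characters (and in your mod-$2$ formulation even that term vanishes identically, needing only $\hat d\neq 0$).

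The replacement you propose for the $c=1$, $\*w\neq 0$ case does not work as sketched. The identity $\sum_{\*w}|\widehat P(\*w,1)|^2=q^{\ell}\,\mathbb{E}_{\*e}\bigl[\mathbf{1}[\*C\*e\equiv 0 \bmod q]\,(-1)^{\hat d\cdot\*e}\bigr]$ is correct, but your claim that the unsigned collision mass is confined to the trivial term $\*e=0$ is false: $q^{\ell}\Pr[\*C\*e\equiv 0]=\sum_{\*w}|\widehat P(\*w,0)|^2\geq|\widehat P(\*0,0)|^2=1$, whereas the $\*e=0$ term contributes only $q^{\ell}2^{-n}$. So the unsigned quantity is of order $1$, not exponentially small, and the signed quantity can only be small if there is genuine cancellation of order $1$ among the $\*e\neq 0$ terms. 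No mechanism for that cancellation is supplied---the Hamming weight of $\hat d$ does not by itself produce it---and since the left-hand side is a nonnegative sum of squares, the natural way to bound it is term by term, which is precisely the moderateness argument you already carried out for $c=0$.
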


The proof of the lemma is similar to the proof of~\cite[Lemma 4.5]{brakerski2018certifiable}, with a small difference due to the $\bmod 8$ condition. This is where the additional requirement that $|\hat{d}|\geq\frac{n}{10}$ (as opposed to simply $\hat{d}\neq 0$ in~\cite{brakerski2018certifiable}) is used. 

\begin{proof}
Let $f$ be the probability density function of $(\*v,z)$. Interpreting $z$ as an element of $\mZ_8$, let $\hat{f}$ be the Fourier transform over $\mZ_q^{\ell}\times \mZ_8$. Let $U$ denote the density of the uniform distribution over $\mZ_q^{\ell}\times \mZ_8$. Applying the Cauchy-Schwarz inequality,
\begin{align}
\frac{1}{2}\big\| f - U \big \|_1 &\leq 2\sqrt{{q^\ell}} \big\| {f} - {U} \big\|_2 \notag \\
&= \frac{1}{2} \big\| \hat{f} - \hat{U} \big\|_2 \notag \\
&= \frac{1}{2} \Big( \sum_{(\hat{\*v},\hat{z})\in \mZ_q^\ell \times \mZ_8\backslash\{(\*0,0)\}} \big| \hat{f}(\hat{\*v},\hat{z})\big|^2\Big)^{1/2}\;,\label{eq:fc-1}
\end{align}
where the second line follows from Parseval's identity, and for the third line we used $\hat{f}(\*0,0)=\hat{U}(0,0)=1$ and $\hat{U}(\hat{\*v},\hat{z})=0$ for all $(\hat{\*v},\hat{z})\neq(0^\ell,0)$. To bound~\eqref{eq:fc-1} we estimate the Fourier coefficients of $f$. Denoting $\omega_{8q} = e^{-\frac{2\pi i}{8q}}$, for any $(\hat{\*v},\hat{z}) \in \mZ_q^\ell \times \mZ_2$ we can write 
\begin{align}
\hat{f}({\hat{\*v}},{\hat{z}}) &= \Es{\*s}\Big[\omega_{8q}^{(2\cdot {\hat{\*v}}^TC + q\cdot {\hat{z}}\hat{\*d}^T)\*s}\Big]\notag\\
&= \Es{\*s}\big[\omega_{8q}^{\*w^T\*s}\big] \notag\\
&= \prod_i\Es{s_i}\big[\omega_{8q}^{w_is_i}\big]\;,\label{eq:fc-3}
\end{align}
where we wrote $\*w^T = 8\cdot {\hat{\*v}}^T\*C + q\cdot {\hat{z}}\hat{\*d}^T\in \mZ_{8q}^n$. 

We first bound $\hat{f}(0^\ell,\hat{z})$ for $\hat{z}\in \mZ_8\backslash\{0\}$. In this case~\eqref{eq:fc-3} simplifies to 
\begin{align}
\big|\hat{f}({\hat{\*v}},{\hat{z}}) \big| &= \prod_{i:\hat{d}_i=1}\big|\Es{s_i}\big[e^{-\frac{2i\pi\hat{z}}{8}s_i }\big]\big|\notag\\
&= \prod_{i:\hat{d}_i=1}\Big|\cos\Big(\frac{\pi}{2} \frac{\hat{z}}{4}\Big)\Big|\notag\\
&\leq \prod_{i:\hat{d}_i=1} \cos\Big(\frac{\pi}{8}\Big) \,\leq\, 2^{-\frac{n}{80}}\;.\label{eq:fc-1a}
\end{align}

Next we observe that for all $i\in\{1,\ldots,n\}$ such that the representative of $({\hat{\*v}}^T\*C)_i$ in $(-q/2,q/2]$ has its absolute value in $(\frac{q}{32},\frac{3q}{32}]$ it holds that $\frac{w_i}{q}\in (\frac{1}{4},\frac{3}{4}]\bmod 1$, in which case
\begin{equation}
\big|\Es{s_i}[\omega_{8q}^{w_is_i}]\big| \,=\, \Big|\cos\Big(\frac{\pi}{2}\cdot \frac{w_i}{q}\Big)\Big| \,\leq\, \cos\Big(\frac{\pi}{8}\Big)\,\leq\, 2^{-\frac{1}{10}}\;.
\end{equation}
Since $\*C$ is moderate, there are at least $\frac{n}{4}$ such entries, so that from~\eqref{eq:fc-3} it follows that $|\hat{f}({\hat{\*v}},{\hat{z}})|\leq 2^{-\frac{n}{40}}$ for all $\hat{\*v} \neq \*0$. Recalling~\eqref{eq:fc-1} and~\eqref{eq:fc-1a}, the lemma is proved.  
\end{proof}

The proof of Lemma \ref{lem:hardcore-1} follows from Lemma \ref{lem:singled} exactly as in~\cite{brakerski2018certifiable}, and we omit the details. With Lemma~\ref{lem:hardcore-1} in hand, the proof of the adaptive $\Z_8$ condition, item 2. in Definition~\ref{def:z8-hc}, is very similar to the proof of the adaptive hardcore bit condition in~\cite{brakerski2018certifiable}. The main change needed is in the definition of the sets $G_{k,b,x}$, for $k = (\*A,\*A\*s + \*e), b\in\{0,1\}$ and $x\in\sX$, that is defined as follows:
\begin{equation}\label{eq:def-d}
\dset_{k,b,x}\,=\,\Big\{d\in \{0,1\}^w\,\Big|\; \big|I_{b,x}(d)_{\{b\,\frac{n}{2},\ldots,b\,\frac{n}{2}+\frac{n}{2}\}}\big|_H \geq \frac{n}{4} \Big\}\;,\end{equation}
where $I_{b,x}(d)$ is the vector whose each coordinate is obtained by taking the inner product mod $2$ of the corresponding block of $\lceil\log q\rceil$ coordinates of $d$ and of $\inj(x)\oplus \inj(x-(-1)^b\*1 )$, where  $\inj:\sX\to\{0,1\}^w$  is such that $\inj(x)$ returns the binary representation of $x\in\sX$ and $\*1 \in \mZ_q^n$ is the vector with all its coordinates equal to $1\in\mZ_q$.

\subsection{The collapsing property}

The following lemma shows that any ENTCF has the \emph{collapsing} property, introduced by Unruh~\cite{unruh2016computationally}.

\begin{lemma}\label{lem:collapse}
Let $(\mF,\mG)$ be an ENTCF family. Let $\phi = \sum_{y\in\mY} \proj{y} \otimes \phi_y$ be a state that can be prepared efficiently, given as input a key $k\in \mK_\mF\cup \mK_\mG$. Let $\Pi = \{\Pi^{(b,x_b)}\}$ be an efficiently implementable POVM such that $\Tr(\Pi^{(b,x_b)} \phi_y)=0$ if $f_{k,b}(x_b)\neq y$ (if $k\in\mK_\mF)$ or $g_{k,b}(x_b)\neq y$ (if $k\in\mK_\mG$). Then there is no efficient procedure such that, given $k\leftarrow \textrm{GEN}_{\mathcal{F}}(1^{\lambda})$  and $y$ distributed according to $\Tr(\phi_y)$, the procedure has a non-negligible advantage in distinguishing $\phi_y$ from $\phi'_y = \Pi^{(0,x_0)} \phi_y\Pi^{(0,x_0)}+\Pi^{(1,x_1)} \phi_y\Pi^{(1,x_1)}$, where for $b\in\{0,1\}$, $x_b$ is such that $f_{k,b}(x_b)=y$.  
\end{lemma}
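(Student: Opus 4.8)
The plan is to reduce the collapsing property to the key-indistinguishability property of an ENTCF family, namely that keys $k\leftarrow\GEN_\mF(1^\lambda)$ sampled from the claw-free family are computationally indistinguishable from keys $k\leftarrow\GEN_\mG(1^\lambda)$ sampled from the injective family. The observation that drives the argument is that in the \emph{injective} case the measured state $\phi'_y$ and the unmeasured state $\phi_y$ coincide \emph{exactly}, so any non-negligible distinguishing advantage in the claw-free case would immediately yield a distinguisher between the two key families.

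First I would reduce, by Naimark dilation (enlarging the private register), to the case where $\Pi=\{\Pi^{(b,x)}\}$ is a projective measurement; this is without loss of generality since the dilation is efficient whenever $\Pi$ is. I would then isolate the structural fact in the injective case: if $k\in\mK_\mG$ the ranges of $g_{k,0}$ and $g_{k,1}$ are disjoint, so a fixed $y$ has a single valid preimage $(\hat b,x_{\hat b})$ and every other outcome $(b,x)$ satisfies $g_{k,b}(x)\neq y$. The hypothesis $\Tr(\Pi^{(b,x)}\phi_y)=0$ for invalid $(b,x)$, together with projectivity, forces $\Pi^{(b,x)}\phi_y=0$ (since $\|\Pi^{(b,x)}\sqrt{\phi_y}\|^2=\Tr(\Pi^{(b,x)}\phi_y)=0$), whence $\phi_y=\Pi^{(\hat b,x_{\hat b})}\phi_y\Pi^{(\hat b,x_{\hat b})}=\phi'_y$. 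Thus $\phi_y=\phi'_y$ identically when $k\in\mK_\mG$.

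Next I would assume toward a contradiction that some efficient $\mathcal{A}$, given $(k,y)$ together with one of $\phi_y$ or $\phi'_y$, guesses which it received with non-negligible advantage $\mu(\lambda)$ when $k\leftarrow\GEN_\mF$. From $\mathcal{A}$ I would build an efficient key-distinguisher $\mathcal{B}$: on input a key $k$ (from either family, and \emph{without} the trapdoor), $\mathcal{B}$ prepares $\phi$ (efficient by assumption), measures the first register to obtain $y$ and residual $\phi_y$, and flips a fair coin $c$; if $c=0$ it keeps $\phi_y$, and if $c=1$ it applies the full measurement $\{\Pi^{(b,x)}\}$ coherently and discards the outcome, producing $\sum_{(b,x)}\Pi^{(b,x)}\phi_y\Pi^{(b,x)}$, which equals $\phi'_y$ because invalid outcomes carry no weight. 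Crucially, $\mathcal{B}$ never needs the preimages $x_0,x_1$ or the trapdoor, only the efficiently implementable measurement $\Pi$. It then runs $\mathcal{A}(k,y,\cdot)$ and outputs $1$ (``claw-free'') iff $\mathcal{A}$'s guess equals $c$. When $k\leftarrow\GEN_\mG$ both branches feed $\mathcal{A}$ the identical state $\phi_y=\phi'_y$, so $\mathcal{A}$'s guess is independent of $c$ and $\Pr[\mathcal{B}=1]=\tfrac12$; when $k\leftarrow\GEN_\mF$ the assumption gives $\Pr[\mathcal{B}=1]=\tfrac12+\Omega(\mu)$. Hence $\mathcal{B}$ distinguishes the two key families with non-negligible advantage, contradicting key-indistinguishability.

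The main obstacle is establishing the injective-case identity $\phi_y=\phi'_y$ cleanly: it is what collapses $\mathcal{A}$'s advantage to zero on $\mG$-keys, and it rests on combining the disjoint-range property of the injective pair with the validity hypothesis on $\Pi$ and the reduction to projective measurements. The remaining bookkeeping---checking that $\mathcal{B}$ is efficient, that it realizes $\phi'_y$ in the claw-free case purely by measure-and-discard (so that no trapdoor is required), and the advantage accounting---is routine.
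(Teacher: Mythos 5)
Your proposal is correct and takes essentially the same route as the paper: the paper's own proof is a two-sentence sketch that invokes injective invariance of the ENTCF family and the observation that $\phi'_y=\phi_y$ when $k\in\mK_\mG$, which is exactly the reduction you carry out. Your write-up simply fills in the details the paper leaves implicit (the Naimark reduction, the trace-zero-implies-operator-zero step, and the explicit coin-flip hybrid for the key-distinguisher).
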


\begin{proof}
Suppose for contradiction that there exists such a procedure. Since the procedure is efficient, using the property of injective invariance of an ENTCF (Definition 4.2 in~\cite{mahadev2018classical_arxiv}) it should produce computationally indistinguishable outcomes  given $k\leftarrow \textrm{GEN}_{\mathcal{F}}(1^{\lambda})$ or  $k\leftarrow\textrm{GEN}_{\mathcal{G}}(1^{\lambda})$. In the second case, $\phi'_y=\phi_y$, so that no such procedure exists. 
\end{proof}

\bibliography{rspc}

\notesendofpaper

\end{document}